\documentclass{article}
\usepackage{lmodern}
\usepackage{graphicx} 
\usepackage[backend=biber, style=alphabetic, maxnames=99]{biblatex}
\usepackage[margin=1in]{geometry}
\usepackage{todonotes}
\usepackage{amsmath}
\usepackage{hyperref}
\usepackage[capitalize]{cleveref} 
\usepackage{amssymb}
\usepackage{amsthm}
\usepackage{mathtools}
\usepackage{stmaryrd}
\usepackage{appendix}
\usepackage{algorithm}
\usepackage{algpseudocode}
\usepackage{empheq}
\usepackage{authblk}

\title{On Extensions of the Unanimous Vote Problem}
\author[1]{Evan J. R. Brody}
\author[1]{Haya Diwan}
\author[1]{Lisa Hellerstein}
\author[2]{Thomas Lidbetter}
\affil[1]{New York University\thanks{\texttt{\{evan.brody,hd2371,lisa.hellerstein\}@nyu.edu}}}
\affil[2]{Rutgers Business School\thanks{\texttt{tlidbetter@business.rutgers.edu}}}

\date{}

\newtheorem{theorem}{Theorem}
\newtheorem{definition}{Definition}

\newtheorem{corollary}{Corollary}
\newtheorem{proposition}{Proposition}

\newtheorem{lemma}{Lemma}
\newtheorem{fact}{Fact}
\crefname{fact}{Fact}{Facts}

\newcommand{\expect}{\operatorname{\mathbb{E}}\expectarg}
\DeclarePairedDelimiterX{\expectarg}[1]{[}{]}{%
  \ifnum\currentgrouptype=16 \else\begingroup\fi
  \activatebar#1
  \ifnum\currentgrouptype=16 \else\endgroup\fi
}

\newcommand{\innermid}{\nonscript\;\delimsize\vert\nonscript\;}
\newcommand{\activatebar}{%
  \begingroup\lccode`\~=`\|
  \lowercase{\endgroup\let~}\innermid 
  \mathcode`|=\string"8000
}

\newcommand{\highlight}[2][yellow]{\mathchoice%
  {\colorbox{#1}{$\displaystyle#2$}}%
  {\colorbox{#1}{$\textstyle#2$}}%
  {\colorbox{#1}{$\scriptstyle#2$}}%
  {\colorbox{#1}{$\scriptscriptstyle#2$}}}%

\newcommand{\nmin}{{N_{1}}}
\newcommand{\nmax}{{N_{n}}}
\newcommand{\eqdef}{\stackrel{\textnormal{def}}{=}}

\DeclareMathOperator{\cost}{cost}
\DeclareMathOperator*{\argmin}{argmin}
\DeclareMathOperator*{\argmax}{argmax}
\DeclareMathOperator{\OPT}{OPT}

\newcommand{\MulAdGapLink}[1]{
\href{https://drive.google.com/file/d/14h_SgAhZYv4fg5zoJdACpdyGdxAqyIeO/view?usp=sharing}{#1}
}
\newcommand{\AddAdGapLink}[1]{
\href{https://drive.google.com/file/d/1A_O8tUlD9Wntyw3GfF5Q9-d1AD9dEHn0/view?usp=sharing}{#1}
}
\newcommand{\DesmosLink}[1]{
\href{https://www.desmos.com/3d/rnniay2kya}{#1}
}

\begin{document}

\maketitle

\vspace{50pt}
\begin{abstract}
    The Unanimous Vote problem is to determine a fixed order in which to flip each of $n$ biased coins, where each coin can be flipped only once, such that the expected number of flips until seeing both a head and a tail (or flipping all coins) is minimized.
    \cite{dumankeles2026} gave an $\mathcal{O}(n \log n)$-time algorithm for this problem. Extensions of the Unanimous Vote problem are a rich source of stochastic optimization problems.
    We focus on three: (1) a variant in which each coin can be flipped arbitrarily many times (a solution is thus an infinite sequence of coin choices), (2) a generalization with $d$-sided dice, that can each be rolled once, where dice must be rolled until two different outcomes are observed (or all dice have been rolled), and (3) a different generalization with $d$-sided dice, where dice must be rolled until all $d$ outcomes have been observed. For (1), we show that there is an optimal sequence which follows a simple greedy rule; the same rule only gives a 1-additive approximation for the original problem~\cite{dumankeles2026}. The rule also yields a correspondence between a particular optimal sequence and a related mechanical word, which we exploit to characterize the conditions under which this optimal sequence is periodic. We establish tight multiplicative and additive adaptivity gaps for this variant. For (2), we show that two different generalizations of the greedy rule from \cite{dumankeles2026} can be combined to obtain a PTAS. For (3), we give an $\mathcal{O}(\log d)$-approximation algorithm by reducing the problem to Submodular Ranking~\cite{azar2011}; the same reduction technique can be used to yield approximation algorithms for other stochastic probing problems. Finally, we pose a number of related open questions.
\end{abstract}

\newpage

\section{Introduction}

We are given $n$ biased, independent coins with biases \(p_1\leq\ldots\leq p_n\), where $p_i$ is the probability of heads for coin $i$. Each coin can be flipped only once. We are asked to arrange the coins in some order, and to flip the coins in that order until we have observed both a head and a tail or have flipped all coins. The Unanimous Vote problem asks: what ordering minimizes the expected number of flips? The problem was introduced in \cite{gkenosis2018}, and can be equivalently viewed as asking how to quickly determine whether a set of \(n\) voters have voted unanimously, where the \(i\)\textsuperscript{th} voter votes YES with probability \(p_i\). \cite{gkenosis2018} gave a \(\varphi\)-approximation algorithm, where \(\varphi\approx1.618\) is the golden ratio; \cite{dumankeles2026} gave an exact \(\mathcal{O}(n\log n)\)-time algorithm, resolving the complexity of the problem. The Unanimous Vote problem falls under the broader category of Stochastic Boolean Function Evaluation (SBFE) and is notable as the first nontrivial SBFE problem whose \textit{non-adaptive} variant is known to be solvable in polynomial time.

This paper focuses on three extensions of the (non-adaptive) Unanimous Vote problem. In the first, the Unlimited-Flips variant, we allow for unlimited flips of each of the $n$ coins.  An optimal solution to this variant is an infinite sequence (instead of a permutation) that minimizes the expected number of flips. Optimal solutions in this variant have more elegant structural properties than in the original problem. We show that the (multiplicative) adaptivity gap for the Unlimited-Flips variant is exactly \(1.2\).  This quantity is the maximum ratio of the optimal non-adaptive cost to the optimal adaptive cost; in \cite{dumankeles2026}, the adaptivity gap for the original Unanimous Vote problem is shown to be \(1.2\pm\mathcal{O}\left(\frac{1}{2^n}\right)\), using techniques that are substantially more involved than ours. The additive adaptivity gap, the maximum difference between the optimal non-adaptive cost and the optimal adaptive cost, was implicitly narrowed to be in \(\left[\frac{1}{2}, 1\right]\) by \cite{dumankeles2026}. We prove a tight bound of \(\frac{1}{2}\) for the Unlimited-Flips variant. We also consider the periodicity of optimal sequences for the Unlimited-Flips variant, and show a connection to mechanical words. We apply known results on mechanical words to show that a particular optimal sequence is periodic if and only if \(\log_{\frac{p_n}{1-p_n}}\left(\frac{1-p_1}{p_1}\right)\) is rational.

The second variant we consider is a natural generalization of the Unanimous Vote problem that replaces each coin with a die with \(d\) sides numbered 1 through $d$, where each die can be rolled at most once. We stop rolling dice when we have seen two distinct outcomes (or all dice have been rolled). Equivalently, viewing the outcome of each die roll as a vote for one of $d$ options, we stop rolling dice when we can determine whether the $n$ votes are unanimous. An optimal solution minimizes the expected number of rolls. We show that the 1-additive approximate greedy algorithm included in \cite{dumankeles2026} can be generalized in an unexpected way to give the same approximation for this \(d\)-ary problem. We also extend the ideas from this algorithm to give a polynomial-time approximation scheme (PTAS). If \(d\) is part of the input, solving the problem exactly is \(\mathsf{NP}\)-hard (see Appendix \ref{app: d-ary np-hard}).

For the third variant, we consider the generalization in which each coin is again replaced by a biased die with \(d\) sides numbered 1 through $d$, and each die can be rolled at most once, but we now stop rolling only when all \(d\) distinct outcomes have been observed (or all dice have been rolled). We give an \(\mathcal{O}(\log d)\)-approximation algorithm for this problem, which relies on a reduction to the problem of Submodular Ranking~\cite{azar2011} through the construction of monotone submodular functions. We believe the approach of this reduction is of independent interest as it can be applied more broadly to other non-adaptive stochastic probing problems. We end the paper by posing a number of related open questions.

\subsection{Related Work}

Most of the problems considered in this paper are Stochastic Function Evaluation (SFE) problems. SBFE problems are a well-studied subclass of these problems.  For a broader view of these problems, see the recent review of Ünlüyurt~\cite{unluyurt2025}.

The Unanimous Vote problem was first introduced as a special case of \textit{Stochastic Score Classification}: an SFE problem concerned with determining to which of \(B\) ``classes'' the value of \(\sum_{i=1}^{n}X_iw_i\) belongs, where the \(X_i\) are the probed (queried) Bernoulli random variables and \(w_i \geq 0\) is the weight of \(X_i\)~\cite{gkenosis2018}. For the non-adaptive case with arbitrary probe costs, \cite{ghuge2021ssclass} gives an \(\mathcal{O}(1)\)-approximation to the optimal adaptive algorithm. For unit weights and arbitrary probe costs, \cite{plank2024} gives a \(3+2\sqrt{2}\approx 5.828\)-approximation. In the case of unit weights and unit probe costs, \cite{nielsen2025} gives a PTAS.\par

Other stochastic probing problems related to vote-counting were considered in \cite{hellerstein2024}, which gives several \(\mathcal{O}(1)\)-approximation algorithms. There, the goal was to minimize the expected cost of determining which of $d$ candidates had been elected, in an election with $n$ voters.

The reduction to Submodular Ranking in \Cref{sec: fccp} can be applied more generally, and can be seen as a non-adaptive analogue of the \textit{$Q$-value} approach of \cite{deshpande2016,bach2018}.  That approach is based on constructing a (realization-dependent) monotone submodular ``goal'' function.  We discuss this in more detail in \Cref{sec: reduction generalization}.

Ghuge et al.~\cite{ghuge2026} recently introduced the matroid Stochastic Boolean Function Certification problem, a stochastic probing problem with Boolean-valued probes on correlated distributions. Using LP-based techniques, they give an $\mathcal{O}(\log n)$-approximation algorithm for the general problem and a constant-factor approximation for the special case of uniform matroids. Our approach of reducing to Submodular Ranking yields an alternative $\mathcal{O}(\log n)$-approximation algorithm for the general problem, but it cannot be used to achieve a better approximation bound for uniform matroids.

\section{The Unlimited-Flips Unanimous Vote Problem}\label{sec: unlimited flips}

\subsection{Preliminaries}

We will use the notion of the \textit{bias} of a position from \cite{dumankeles2026}. A position \(k\) in a permutation or sequence \(\sigma\) is said to be \textit{0-biased} (resp. \textit{1-biased}) if it is strictly more probable that the flips of positions \(1,\ldots,k-1\) in \(\sigma\) are identically tails (resp. heads) than identically heads (resp. tails). A position is said to be \textit{unbiased} if these probabilities are equal. For integers \(a,b\) with \(a\leq b\), we use the notation \(\llbracket a, b\rrbracket\eqdef [a,b]\cap\mathbb{Z}\) and for positive integers \(n\) we use \([n]\eqdef\{1,\ldots,n\}\).

\subsection{Greedy is Optimal}

The algorithm of \cite{dumankeles2026} shows that an optimal permutation follows a simple greedy rule at every position except one. The greedy rule chooses a coin with minimal \(p_i\) 
if the position is 1-biased, and a coin with maximal \(p_i\) otherwise; the exceptional position is the one containing the last coin in the permutation that changes the bias. Roughly speaking, the greedy rule breaks because of properties at the end of the permutation.
This raises the question of whether the exceptional position would still exist if the flipping sequence were infinite. We address this question by considering a variant of the problem where we allow each coin to be flipped an unlimited number of times.
We call this variant the \textit{Unlimited-Flips Unanimous Vote} problem. 
We now show that in this variant it is optimal to use the greedy rule at every position; there is no exceptional position. Since each coin may be flipped an unlimited number of times, we assume strict inequality between coin parameters (\(0\leq p_1 < \ldots < p_n\leq 1\)) in what follows. We use the notation \(\bar{p}_i\eqdef 1 - p_i\).
We represent an infinite sequence of the $n$ coins as a function \(\sigma : \mathbb{Z}_+\to [n]\). Omitted proofs in this section are in Appendix~\ref{app: missing section 2}.

\begin{definition}\label{def: cost}
    For sequence \(\sigma : \mathbb{Z}_+\to [n]\), \(\cost(\sigma)\) is the random variable that counts the number of flips until seeing both a head and a tail if the \(n\) coins are flipped in the order specified by \(\sigma\).
\end{definition}

\begin{definition}
    For sequence \(\sigma : \mathbb{Z}_+\to[n]\) and \(i \leq j\), define 
    \[
        H_\sigma(i, j) \eqdef \prod_{k=i}^{j}p_{\sigma(k)}, \quad T_\sigma(i, j) \eqdef \prod_{k=i}^{j}\bar{p}_{\sigma(k)}
    \]
\end{definition}

\(H_\sigma(i,j)\) and \(T_\sigma(i,j)\) are the probabilities that the coin flips at positions \(i\) through \(j\) are all heads or all tails, respectively. Abusing notation, we use \(H_\sigma(j)\) (resp. \(T_\sigma(j)\)) to denote \(H_\sigma(1,j)\) (resp. \(T_\sigma(1,j)\)); we will drop the subscript when the sequence is clear from context. We can use this notation to write the expected cost of a sequence as follows:

\begin{fact}\label{lem: sequence cost}
    Let \(\sigma : \mathbb{Z}_+\to [n]\). Then
    \[
        \expect*{\cost(\sigma)} = 1 + \sum_{k=1}^\infty
        \left(
            H_\sigma(k) + T_\sigma(k)
        \right)
    \]
\end{fact}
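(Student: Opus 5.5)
We will use the tail-sum formula for a nonnegative integer-valued random variable: \(\expect*{\cost(\sigma)} = \sum_{m=0}^{\infty}\Pr[\cost(\sigma) > m]\). The work is to identify the event \(\{\cost(\sigma) > m\}\) for each \(m\). For \(m = 0\) it holds surely, since at least one flip is always needed; this gives the leading \(1\).

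For \(m = k \geq 1\), we claim that \(\cost(\sigma) > k\) holds exactly when the first \(k\) flips have not yet produced both a head and a tail. That is, the flips at positions \(1,\ldots,k\) are either all heads or all tails. These two events are disjoint for \(k\geq 1\), because a single flip cannot be both heads and tails. The flips at different positions are independent, even when the same coin appears at several positions, since each flip of a coin is an independent trial in the Unlimited-Flips variant. Hence
\[
\Pr[\cost(\sigma) > k] = \prod_{j=1}^{k} p_{\sigma(j)} + \prod_{j=1}^{k} \bar{p}_{\sigma(j)} = H_\sigma(k) + T_\sigma(k).
\]
Summing over \(k\geq 1\) and adding the \(m=0\) term gives the stated formula.

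The only subtle point is that \(\cost(\sigma)\) may be infinite. This happens, for instance, if every coin used by \(\sigma\) has \(p_i = 1\), or if all have \(p_i = 0\). The tail-sum identity still holds for random variables in \(\mathbb{Z}_+\cup\{\infty\}\): we take \(\cost(\sigma) = \infty\) on the event that both outcomes never appear, and then both sides may equal \(+\infty\). All terms are nonnegative, so the interchange of sum and expectation is justified by Tonelli's theorem (monotone convergence), with no convergence assumption needed. I expect this bookkeeping to be the only mild obstacle; the rest is immediate.
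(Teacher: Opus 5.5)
Your argument is correct and is exactly the intended one. The paper states this as a Fact without proof, but it rests on the tail-sum identity $\mathbb{E}[\cost(\sigma)] = \sum_{m\ge 0}\Pr[\cost(\sigma)>m]$ together with $\Pr[\cost(\sigma)>k] = H_\sigma(k)+T_\sigma(k)$ for $k\ge 1$; the paper carries out the same computation explicitly for the $d$-ary problem in its appendix, and your handling of $\cost(\sigma)=\infty$ via Tonelli/monotone convergence makes explicit what the paper leaves implicit.
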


\noindent
It is also easy to see that under minimal assumptions, an optimal sequence has finite expected cost.

\begin{fact}\label{lem: optimal is finite}
    An optimal sequence's expected cost is infinite if and only if \(n = 1\) and \(p_1\in\{0,1\}\).
\end{fact}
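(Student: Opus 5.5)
We prove the two directions separately, using the cost formula of \cref{lem: sequence cost} throughout.

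\emph{The degenerate case has infinite cost.} Suppose \(n = 1\) and \(p_1 \in \{0,1\}\). The only sequence is the constant sequence \(\sigma \equiv 1\). Every flip gives the same outcome, so one of \(H_\sigma(k)\) or \(T_\sigma(k)\) equals \(1\) for every \(k\). The series in \cref{lem: sequence cost} therefore diverges, and every sequence, in particular an optimal one, has infinite expected cost.

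\emph{Otherwise some sequence has finite cost.} It suffices to exhibit one sequence \(\sigma\) with \(\expect*{\cost(\sigma)} < \infty\), since the optimal cost is bounded by that of any sequence. There are two cases, using the strict ordering \(p_1 < \cdots < p_n\).

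If \(n = 1\) and \(0 < p_1 < 1\), take \(\sigma \equiv 1\). Then \(H_\sigma(k) + T_\sigma(k) = p_1^k + \bar{p}_1^k\), and both geometric series converge because \(p_1, \bar{p}_1 < 1\).

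If \(n \ge 2\), take \(\sigma\) alternating between coins \(1\) and \(n\). Since \(p_1 < p_n\), we have \(p_1 < 1\) and \(\bar{p}_n < 1\). Every pair of consecutive positions contains one flip of each coin. Hence
\[
H_\sigma(2m) = (p_1 p_n)^m \le p_1^m, \qquad T_\sigma(2m) = (\bar{p}_1 \bar{p}_n)^m \le \bar{p}_n^m .
\]
Both quantities are nonincreasing in \(k\), so each odd-index term \(H_\sigma(k)\) or \(T_\sigma(k)\) is at most the preceding even-index term, or at most \(1\) when \(k = 1\). The full sum is therefore at most
\[
1 + 2\sum_{m \ge 1}\left(p_1^m + \bar{p}_n^m\right) < \infty .
\]

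\emph{An optimal sequence exists.} The statement speaks of ``an optimal sequence'', which presumes the infimum of the expected cost is attained. We address this by compactness. The sequence space \([n]^{\mathbb{Z}_+}\) is compact in the product topology. The cost \(\sigma \mapsto 1 + \sum_k \left(H_\sigma(k) + T_\sigma(k)\right)\) is a sum of nonnegative terms, each depending continuously on finitely many coordinates. It is therefore lower semicontinuous, as a supremum of continuous partial sums, and so attains its infimum.

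The main subtlety lies in handling the endpoint biases \(p_1 = 0\) or \(p_n = 1\) uniformly. The alternating construction avoids casework here, because it only needs \(p_1 < 1\) and \(p_n > 0\), and both follow from \(p_1 < p_n\).
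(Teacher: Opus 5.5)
Your proof is correct. The paper does not actually prove this fact; it only calls it easy to see. Your argument is the natural one: in the degenerate case one of $H_\sigma(k),T_\sigma(k)$ is identically $1$. Otherwise the alternating sequence $1,n,1,n,\ldots$ has summable $H_\sigma(k)+T_\sigma(k)$, because $p_1<1$ and $\bar{p}_n<1$.

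You are right to rely explicitly on the paper's convention $p_1<\cdots<p_n$. Without it the statement is false: two coins with $p_1=p_2=1$ give infinite cost with $n\neq 1$.

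Your compactness step goes beyond the paper and is worth keeping. The paper never shows that the infimum of the expected cost is attained. The proofs of \Cref{lem: only use 1 and n} and \Cref{thm: greedy optimal} only show that a given sequence can be improved (or replaced at no extra cost). That yields an optimal sequence of the claimed form only if some optimal sequence exists in the first place. Lower semicontinuity of $\sigma\mapsto 1+\sum_k \left(H_\sigma(k)+T_\sigma(k)\right)$ on the compact space $[n]^{\mathbb{Z}_+}$ supplies exactly this missing existence.
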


\noindent
In what follows, we assume the condition of \Cref{lem: optimal is finite} does not hold, so that an optimal sequence has finite expected cost.

\begin{lemma}\label{lem: only use 1 and n}
    There is an optimal sequence \(\sigma^*\) such that
    \(
        \forall k\in\mathbb{Z}_+,\ \sigma^*(k) \in\{1,n\}
    \).
\end{lemma}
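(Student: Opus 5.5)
The plan is an exchange argument: starting from any sequence, we replace every intermediate coin by coin $1$ or coin $n$ without increasing the expected cost in \Cref{lem: sequence cost}. Existence of an optimum also needs a little care. Rather than assume one exists, we show that for every sequence $\sigma$ there is a sequence $\tau$ with values in $\{1,n\}$ and $\expect*{\cost(\tau)}\le\expect*{\cost(\sigma)}$. The lemma then reduces to showing that some sequence over $\{1,n\}$ is optimal among sequences over $\{1,n\}$. For that we use compactness of $\{1,n\}^{\mathbb{Z}_+}$ in the product topology. The cost is lower semicontinuous there, since it is a sum of nonnegative terms each depending on finitely many coordinates. So a minimizer exists, and by \Cref{lem: optimal is finite} its cost is finite.

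The main step is a single-position replacement. Fix $\sigma$ and a position $j$ with $\sigma(j)=i\notin\{1,n\}$. Write $c(q)$ for the cost of the sequence obtained by putting a coin of bias $q$ at position $j$. By \Cref{lem: sequence cost},
\[
c(q)=1+\sum_{k<j}\bigl(H(k)+T(k)\bigr)+q\sum_{k\ge j}H(1,j-1)H(j+1,k)+(1-q)\sum_{k\ge j}T(1,j-1)T(j+1,k),
\]
where the products over empty ranges equal $1$. This expression is affine in $q$. An affine function on $[p_1,p_n]$ is minimized at an endpoint, so either $c(p_1)\le c(p_i)$ or $c(p_n)\le c(p_i)$. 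If one of the sums is infinite, then the cost is infinite unless the coefficient multiplying it vanishes, and we handle that degenerate case separately.

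Replacing the positions one at a time only takes care of finitely many positions, and the positions may interact, so we do not iterate this naively. Instead we do the following.
\begin{enumerate}
\item Replace positions $1,2,\dots,m$ in order. Each replacement does not increase the cost, which gives sequences $\sigma_m$ agreeing with $\sigma$ beyond position $m$ and satisfying $\expect*{\cost(\sigma_m)}\le\expect*{\cost(\sigma)}$.
\item The prefixes of the $\sigma_m$ need not be consistent with one another. We pass to a subsequence that converges pointwise, which is possible by compactness, and call the limit $\tau$. It takes values in $\{1,n\}$.
\item By Fatou's lemma applied termwise, each term $H(k)+T(k)$ of $\tau$ is the limit of the corresponding terms of the $\sigma_m$. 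Hence $\expect*{\cost(\tau)}\le\liminf_m\expect*{\cost(\sigma_m)}\le\expect*{\cost(\sigma)}$.
\end{enumerate}

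We expect the limiting step to be the main obstacle, together with the degenerate cases of infinite cost and $p_1\in\{0,1\}$ or $p_n\in\{0,1\}$. In the degenerate cases the affine-function argument still works as long as the original cost is finite, because finiteness forces every relevant tail sum to be finite. If $\sigma$ has infinite cost there is nothing to prove, since the comparison is trivial. The assumption from \Cref{lem: optimal is finite} ensures that a finite-cost sequence over $\{1,n\}$ exists, for example the one alternating between $1$ and $n$.
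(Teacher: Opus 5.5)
Your proposal is correct and follows essentially the paper's route. Your affine-in-$q$ single-position replacement is the paper's observation that the cost difference equals $(\alpha-\beta)(p_{\sigma_0(k_0)}-p_{\sigma_1(k_0)})$, and your Fatou step is the paper's partial-sum argument for $\sigma_\infty(i)=\sigma_i(i)$. The subsequence extraction is unnecessary: replacing positions in order already makes the prefixes of the $\sigma_m$ consistent. Your compactness/lower-semicontinuity argument for the existence of an optimum and your handling of infinite costs are worthwhile additions, since the paper's proof tacitly takes both for granted.
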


\noindent
The main idea of the proof is that, given any sequence $\sigma$ such that $\sigma(k) \not\in \{1,n\}$ for some $k$, it is possible to iteratively construct a new sequence using only coins \(1\) and \(n\) with expected cost at most that of \(\sigma\). This is a straightforward consequence of the formula of \Cref{lem: sequence cost}.\par

Since there is an optimal sequence using only the two coins with the lowest and highest parameters, we can reduce any instance to an instance with two coins: \(p_1\) and \(p_n\). We can then show, using an exchange-argument proof similar to that of \cite{dumankeles2026}, that there is a greedy sequence that is optimal.

\begin{theorem}\label{thm: greedy optimal}
    There is an optimal sequence for the Unlimited-Flips Unanimous Vote problem that only uses coin \(1\) if \(\frac{1}{2} < p_1 < p_n\) and only uses coin \(n\) if \(p_1 < p_n < \frac{1}{2}\). If \(p_1 \leq \frac{1}{2} \leq p_n\), there is an optimal sequence that uses coin \(1\) at 1-biased positions and coin \(n\) at 0-biased and unbiased positions.
\end{theorem}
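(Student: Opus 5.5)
By \Cref{lem: only use 1 and n}, I would restrict attention to sequences using only coins \(1\) and \(n\). This turns the instance into a two-coin instance with parameters \(p_1<p_n\). Write \(a=p_1\) and \(b=p_n\). A sequence is then described by which of the two coins it uses at each position. By \Cref{lem: sequence cost}, its expected cost is
\[
1+\sum_{k\ge1}\left(H(k)+T(k)\right),
\]
where \(H(k)=a^{u_k}b^{v_k}\) and \(T(k)=\bar a^{u_k}\bar b^{v_k}\), with \(u_k\) and \(v_k\) the numbers of uses of coin \(1\) and coin \(n\) among the first \(k\) positions.

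The core tool is an adjacent-exchange argument. Suppose position \(k\) holds coin \(n\), position \(k+1\) holds coin \(1\), and we swap them. Only the term at index \(k\) changes. Let \(H=H(k-1)\) and \(T=T(k-1)\). The term changes from \(Hb+T\bar b\) to \(Ha+T\bar a\), so the difference (new minus old) is
\[
(a-b)(H-T).
\]
Since \(a<b\), putting coin \(1\) first is at least as good exactly when \(H\ge T\), i.e.\ when position \(k\) is 1-biased or unbiased. Symmetrically, putting coin \(n\) first is at least as good when \(T\ge H\). This is the local greedy criterion. In the unbiased case either choice is fine, which matches the tie-breaking toward coin \(n\) in the statement.

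For the two one-sided cases I would argue directly. If \(\tfrac12<a<b\), then \(H\) and \(T\) start equal, and every flip multiplies \(H\) by something \(>\tfrac12\) and \(T\) by something \(<\tfrac12\). Hence every position \(k\ge2\) is 1-biased, and position \(1\) is unbiased. Moreover, in each summand, replacing a use of coin \(n\) by coin \(1\) lowers \(H(k)\) by a factor \(a/b\) and raises \(T(k)\) by a factor \(\bar a/\bar b\). To show this is a net gain, I would check that the all-coin-\(1\) sequence termwise dominates, as follows. For each \(k\), I would compare \(a^{u}b^{v}+\bar a^{u}\bar b^{v}\) against \(a^{k}+\bar a^{k}\) using the exchange inequality iteratively. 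Each step replaces one coin-\(n\) at the current frontier by coin \(1\). By the computation above, this is an improvement because the prefix is 1-biased, and it keeps the prefix 1-biased. The case \(a<b<\tfrac12\) is symmetric.

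The main case is \(a\le\tfrac12\le b\). Let \(g\) be the greedy sequence. Given an optimal \(\sigma\), let \(k\) be the first position where \(\sigma\) deviates from \(g\). Up to position \(k\), the bias status is the same for both sequences. Say greedy wants coin \(1\) at \(k\) (so position \(k\) is 1-biased) but \(\sigma\) uses coin \(n\). Let \(j>k\) be the next position where \(\sigma\) uses coin \(1\). Such a \(j\) exists: otherwise the cost is at least as large as using coin \(n\) forever, and replacing the tail by the greedy tail only helps, which I would need to justify separately. I would then move that coin \(1\) forward to position \(k\) by repeated adjacent swaps. The obstacle is that the intermediate swaps at positions \(k+1,\dots,j-1\) occur after extra coin-\(n\) flips, which may flip the bias to 0-biased, so individual swaps need not be improvements. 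I would handle this with a block computation instead. Moving coin \(1\) from position \(j\) to position \(k\) changes the terms at \(k,\dots,j-1\) by
\[
\sum_{i=k}^{j-1}(a-b)\left(Hb^{i-k}-T\bar b^{\,i-k}\right).
\]
When this sum is not obviously nonpositive, I would instead choose the alternative exchange of moving a later coin \(n\) to position \(j\). I expect this combined argument to be the main obstacle.

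The cleaner route I would try first is to take \(\sigma\) optimal and agreeing with \(g\) on the longest possible prefix, and to consider the maximal run of coin \(n\) starting at \(k\) in \(\sigma\). Inside that run the bias moves monotonically toward 0-biased. If the last position of the run is still 1-biased, a single adjacent swap at its end improves the cost or keeps it equal, which contradicts maximality after induction on the run length. If not, the quantity \(H-T\) crosses zero within the run, and I would compare with the greedy choice via the sign-crossing structure. Finally, agreement on arbitrarily long prefixes, together with the uniform convergence of the tail sums (finite cost by \Cref{lem: optimal is finite}, with geometric tails since \(a,\bar b<1\) or similar), yields that \(g\) is optimal by a limiting argument.
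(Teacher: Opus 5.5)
Your one-sided cases are fine. The termwise domination holds because replacing a factor $b$ by $a$ in $\prod_i x_i+\prod_i \bar x_i$ changes it by $(a-b)(P-Q)$, and $P\ge Q$ when every other factor exceeds $\tfrac12$. This is even slightly cleaner than the paper's single-position replacement, since it does not rely on an optimum existing.

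The main case $a\le\tfrac12\le b$, however, is left unproved, because of a sign error about how the bias evolves. You claim that the coin-$n$ flips at positions $k,\dots,j-1$ ``may flip the bias to 0-biased.'' In your second route you claim that inside a run of coin $n$ ``the bias moves monotonically toward 0-biased.'' The opposite holds. Each flip of coin $n$ multiplies $H$ by $b\ge\tfrac12$ and $T$ by $\bar b\le\tfrac12$, so $H/T$ is multiplied by $b/\bar b\ge 1$. A run of coin $n$ that starts at a 1-biased position therefore stays 1-biased to its end. Hence every summand of the block sum you already wrote, $(a-b)\bigl(Hb^{i-k}-T\bar b^{\,i-k}\bigr)$, is strictly negative, because $H>T$, $b>0$ and $b^{i-k}\ge\bar b^{\,i-k}$. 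Moving the next coin $1$ forward to position $k$ is always a strict improvement; equivalently, each of your adjacent swaps is an improvement. Your ``alternative exchange'' and ``sign-crossing'' comparison target a situation that never occurs. Since neither is carried out, the proof has a hole exactly at the step you flagged as the main obstacle.

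This monotonicity is precisely the idea in the paper's proof. The paper takes the first violation, stated in mirror form as a 0-biased position holding coin $1$. It notes that within the maximal run of coin $1$ that follows, the position stays 0-biased because $p_1\le\tfrac12$. It then swaps the first position of the run with the next occurrence of coin $n$. The resulting cost difference is the mirror image of your block formula and has a definite sign term by term.

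The case you deferred, where no later coin $1$ exists, needs no tail replacement. Change only position $k$ to coin $1$. The same computation, now with the infinite tail $\sum_{m\ge0}(Hb^m-T\bar b^{\,m})$, again gives a strict decrease. With a strict improvement at every deviation from the rule at a biased position, an optimal sequence must follow the rule at all biased positions. The choice at unbiased positions is cost-neutral, as you noted. Your limiting argument is then only needed if you want to avoid presupposing that an optimal sequence exists.
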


\begin{proof}[Proof sketch:]
    Using \Cref{lem: only use 1 and n}, we can restrict to the case where there are two coins \(p_1 < p_n\). We consider two cases:
    \begin{enumerate}
        \item \(p_1 < p_n < \frac{1}{2}\)
        or \(\frac{1}{2}< p_1 < p_n\)
        \item
        \(p_1 \leq \frac{1}{2}\leq p_n\)
    \end{enumerate}
    For Case 1, take \(p_1 < p_n < \frac{1}{2}\), since \(\frac{1}{2} < p_1 < p_n\) can be handled symmetrically. It is straightforward to show (using the formula of \Cref{lem: sequence cost}) that if a sequence has coin \(1\) in some position, the expected cost can be strictly improved by replacing this choice with coin \(n\).\par
    For Case 2, we use a similar argument. If, say, coin \(1\) is used at a 0-biased position (call this position \(i\)), then if coin \(n\) occurs later in the sequence, say, next at position \(j\), we can swap positions \(i\) and \(j\) and improve the expected cost. If \(n\) never occurs again (since we are considering an infinite sequence), we can simply change the coin at the considered position to \(n\), which is similar to Case 1.
\end{proof}

\subsection{Adaptivity Gap}

\cite{dumankeles2026} was also concerned with the \textit{adaptivity gap} of the Unanimous Vote problem, which is the supremum, over all instances, of the ratio of the optimal non-adaptive cost to the optimal adaptive cost. Here, we are also concerned with the \textit{additive adaptivity gap}, which is defined similarly, except using the \textit{difference} between the optimal non-adaptive cost and the optimal adaptive cost. \cite{dumankeles2026} showed that the adaptivity gap of the Unanimous Vote problem is \(1.2 \pm \mathcal{O}\left(\frac{1}{2^n}\right)\), and their work implicitly bounds the additive adaptivity gap to be in \(\left[\frac{1}{2}, 1\right]\). For the Unlimited-Flips Unanimous Vote problem, we resolve both of these questions.

\begin{theorem}\label{thm: adgap}
    The adaptivity gap of the Unlimited-Flips Unanimous Vote problem is \(1.2\).
\end{theorem}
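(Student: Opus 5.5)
The plan is to prove matching upper and lower bounds on the ratio of the optimal non-adaptive cost $N$ to the optimal adaptive cost $A$. The two sides use different tools: explicit non-adaptive sequences for the upper bound, and \Cref{thm: greedy optimal} for the lower bound.

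\textbf{Adaptive cost.} First I would get a closed form for $A$. With unlimited flips, once a head has been seen the best continuation is to flip coin $1$ repeatedly. Its expected remaining cost is $1/\bar p_1$, since no other coin has a larger tail probability. Symmetrically, after a tail we flip coin $n$, with expected remaining cost $1/p_n$. The first flip uses some coin $i$, giving cost $1 + p_i/\bar p_1 + \bar p_i/p_n$. This is linear in $p_i$, so the minimum is attained at $i\in\{1,n\}$:
\[
A = 1+\min\left\{\frac{p_1}{\bar p_1}+\frac{\bar p_1}{p_n},\ \frac{p_n}{\bar p_1}+\frac{\bar p_n}{p_n}\right\}.
\]
By \Cref{lem: only use 1 and n}, $N$ also depends only on $p_1$ and $p_n$, so the whole problem becomes a two-parameter comparison.

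\textbf{Upper bound $N\le 1.2A$.} Here I would exploit that \emph{any} non-adaptive sequence bounds $N$ from above. This avoids computing the greedy cost, whose bias pattern is a mechanical word with no convenient closed form.
\begin{itemize}
\item If $p_1<p_n<\frac12$, the adaptive strategy pays $1/\bar p_1$ after a head. The non-adaptive strategy that repeats coin $n$ has cost $1+p_n/\bar p_n+\bar p_n/p_n$, which is the same kind of expression with coin $n$ in place of coin $1$ in the head case. Since $1/\bar p_n\le 2$ when $p_n<\frac12$, the ratio should be easy to bound by $1.2$ using $p_n<\frac12$. The case $\frac12<p_1<p_n$ is symmetric.
\item In the main case $p_1\le\frac12\le p_n$, I would use a small family of simple sequences: coin $1$ forever, coin $n$ forever, $n$ then $1$ forever, $1$ then $n$ forever, and alternating $n,1,n,1,\ldots$ (plus possibly patterns like $n,1,1,1,\dots$). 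Each has a geometric-series closed form via \Cref{lem: sequence cost}. I would then show that the minimum of these costs is at most $1.2A$ over the region $\{p_1\le \frac12\le p_n\}$. That is a two-variable inequality, which I would handle by splitting the region according to which adaptive branch attains $A$ and which test sequence is used, then clearing denominators.
\end{itemize}

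\textbf{Lower bound.} I would exhibit a family of instances where $N/A\to 1.2$. To compute $N$ exactly on that family I would use the greedy sequence from \Cref{thm: greedy optimal}, choosing parameters that make the greedy sequence simple. A natural choice is the symmetric instance $p_1=q$, $p_n=\bar q$: there the greedy sequence is $n,1,n,1,\ldots$, and with $r=q\bar q$ its cost is $1+\frac{1+2r}{1-r}$, while $A=1+\frac1{\bar q}$. If this symmetric family does not reach $1.2$ in the supremum, I would instead take $p_1\to 0$ with $p_n$ tuned so that the greedy word is periodic with a short period, for example period two. Then I would optimise over the free parameter.

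\textbf{Main obstacle.} I expect the hardest step to be the upper bound. The difficulty is choosing a finite family of test sequences that is provably within $1.2$ everywhere, including near the extremal instance where the bound must be tight. The first attempt would be to locate the extremal instance numerically, take the greedy sequence there as one of the test sequences, and verify the resulting polynomial inequalities region by region.
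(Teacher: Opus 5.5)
Your overall architecture is the paper's: a closed form for the adaptive optimum, an upper bound on the non-adaptive optimum via the cheapest of a few explicit sequences with geometric-series costs, and one extremal instance for the lower bound. The gap is that the test family you list for $p_1\le\frac12\le p_n$ fails on the half where $p_1+p_n>1$. Take $(p_1,p_n)=(0.28,0.85)$. Then $A=1+\frac{0.28}{0.72}+\frac{0.72}{0.85}\approx2.236$, so $1.2A\approx2.683$. Your five candidates cost about:
coin $1$ forever, $3.960$;
coin $n$ forever, $6.843$;
$n$ then $1$ forever, $2.716$;
$1$ then $n$ forever, $3.714$;
$n,1,n,1,\ldots$, $2.717$.
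All exceed $1.2A$. What works there is the other alternation $1,n,1,n,\ldots$, at about $2.608$. The two alternations differ by $(p_n-p_1)(p_1+p_n-1)/\bigl((1-p_1p_n)(1-\bar p_1\bar p_n)\bigr)$, so $n,1,n,1,\ldots$ is the worse one exactly when $p_1+p_n>1$.

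The missing idea is the heads/tails symmetry $p_i\mapsto1-p_i$. It exchanges coins $1$ and $n$ and lets you assume $p_1+p_n\le1$. Under that assumption $a_n$ is the adaptive optimum (the paper checks this via concavity of $a(1-a)$). Then the three sequences $n,n,n,\ldots$; $1,n,n,\ldots$; $n,1,n,1,\ldots$ give $\min\le1.2\,\mathbb{E}[\cost(a_n)]$ on all of $\{p_1\le p_n,\ p_1+p_n\le1\}$, which also covers your case $p_1<p_n<\frac12$. The paper proves this rational-function inequality in Lean. Your region-by-region verification is the same step, but it only goes through with the corrected family.

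For the lower bound, your first candidate cannot reach $1.2$. On $p_1=q$, $p_n=\bar q$ the ratio simplifies to $u(2-u)/(1-u+u^2)$ with $u=\bar q$, whose maximum is $2/\sqrt3\approx1.155$. Your fallback does work and lands on the paper's instance. With $p_1=0$ one has $A=p_n+1/p_n$, and for $p_n\ge\frac12$ one has $N=1+1/p_n$. The ratio $(1+p_n)/(1+p_n^2)$ is then maximized at $p_n=\frac12$, where $N=3$ and $A=2.5$; for $p_n<\frac12$ the ratio stays below $1.2$. Certifying $N=3$ via \Cref{thm: greedy optimal} is a fine alternative to the paper's direct argument, which notes that an optimal sequence uses coin $1$ at most once and that every such sequence costs $3$.
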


\begin{theorem}\label{thm: add adgap}
    The additive adaptivity gap of the Unlimited-Flips Unanimous Vote problem is \(1/2\).
\end{theorem}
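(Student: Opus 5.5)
We will prove the two inequalities separately: a matching lower-bound instance, and an upper bound on $\OPT_{\text{non-adaptive}} - \OPT_{\text{adaptive}}$ for every instance.

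\textbf{The adaptive optimum.} By the same reasoning as \Cref{lem: only use 1 and n}, an optimal adaptive strategy only uses coins $1$ and $n$. After the first flip, the remaining task is a single geometric waiting time:
\begin{itemize}
  \item if the first flip is a head, we need a tail, and the best choice is to repeat coin $1$, which costs $1/\bar p_1$ in expectation;
  \item if the first flip is a tail, we need a head, and the best choice is to repeat coin $n$, which costs $1/p_n$.
\end{itemize}
Hence the optimal adaptive cost is
\[
A \eqdef 1+\min_{i\in\{1,n\}}\Bigl(\frac{p_i}{\bar p_1}+\frac{\bar p_i}{p_n}\Bigr),
\]
since the bracket is linear in $p_i$, so the minimum is attained at an extreme coin.

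\textbf{Lower bound.} Take $p_1=0$ and $p_n=\tfrac12$. Then $A = 1+\tfrac12+1=\tfrac52$, attained with first coin $n$.

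For the non-adaptive side, \Cref{thm: greedy optimal} says the greedy sequence is optimal. Every position is unbiased or $0$-biased, so greedy uses coin $n$ throughout. By \Cref{lem: sequence cost} its cost is $1+\sum_k 2\cdot 2^{-k}=3$.

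The gap is therefore exactly $\tfrac12$, so the supremum is at least $\tfrac12$. If the degenerate value $p_1=0$ is disallowed, a limiting family $p_1\to0$, $p_n\to\tfrac12$ gives the same supremum.

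\textbf{Upper bound.} We must show $\OPT_{\text{non-adaptive}}\le A+\tfrac12$ for all $0\le p_1<p_n\le1$. Rather than computing the greedy cost exactly (its bias pattern is a mechanical word and awkward to sum), we will upper bound $\OPT_{\text{non-adaptive}}$ by the minimum over a few explicit sequences whose costs have closed forms via \Cref{lem: sequence cost}:
\begin{itemize}
  \item the constant sequences $n^\infty$ and $1^\infty$, with costs $1+\frac{p_n}{\bar p_n}+\frac{\bar p_n}{p_n}$ and $1+\frac{p_1}{\bar p_1}+\frac{\bar p_1}{p_1}$;
  \item the sequences ``$n$ then $1^\infty$'' and ``$1$ then $n^\infty$'', which mimic the adaptive strategy after the first flip. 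For example, ``$n$ then $1^\infty$'' costs $1+p_n+\bar p_n+\frac{p_np_1}{\bar p_1}+\frac{\bar p_n\bar p_1}{p_1}$;
  \item if needed, sequences with a short greedy prefix followed by a constant tail.
\end{itemize}

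Each candidate's cost minus $A$ is an explicit rational function of $(a,b)=(p_1,p_n)$. The plan is to split the unit triangle $a<b$ into regions according to which term achieves the adaptive minimum and whether $a,b$ lie on the same side of $\tfrac12$. By the symmetry $p\mapsto 1-p$, which reverses coin order, it suffices to treat half the cases. In each region we exhibit a candidate whose excess over $A$ is at most $\tfrac12$, reducing to a polynomial inequality in two variables on a box.

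\textbf{Expected obstacle.} The main difficulty is the mixed case $a\le\tfrac12\le b$ near the extremal point $(0,\tfrac12)$. There the simple candidates are nearly tight, and a single explicit sequence may not suffice across the whole region. If the simple two-phase sequences fail, we will instead use the exact greedy sequence and an exchange-type bound. Specifically, we would bound $\sum_k (H(k)+T(k))$ for greedy by pairing the greedy non-adaptive sequence with the adaptive tree: in the adaptive tree the branch after a head uses coin $1$ and the branch after a tail uses coin $n$, and greedy picks whichever branch currently carries the larger mass. We would then show the mass lost per step sums to at most $\tfrac12$, tight exactly at $(0,\tfrac12)$.
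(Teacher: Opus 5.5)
\textbf{The upper bound is missing.} Your lower bound is fine. It uses the paper's instance $p_1=0$, $p_n=\tfrac12$, and invoking \Cref{thm: greedy optimal} is a valid substitute for the paper's direct computation: every position is unbiased, so greedy is $n,n,n,\ldots$ with cost $3$. Your adaptive formula also matches \Cref{prop: opt adapt}. The gap is the upper bound, which is the real content of the theorem. Your proposal describes a plan for it but never carries it out, and the four sequences you actually name do not suffice. Normalize $p_1+p_n\le 1$ as the paper does, so that $A=\frac{p_n}{1-p_1}+\frac{1}{p_n}$. On the segment $p_1+p_n=1$:
\begin{itemize}
\item ``$1$ then $n^\infty$'' and ``$n$ then $1^\infty$'' both cost exactly $3$, while $A=1+\frac{1}{1-p_1}$, so their excess is $\frac{1-2p_1}{1-p_1}$.
\item $n^\infty$ and $1^\infty$ have excess $\frac{1-2p_1}{p_1}$.
\end{itemize}
All four exceed $\tfrac12$ whenever $p_1<\tfrac13$; at $(p_1,p_n)=(0.2,0.8)$ the best excess is $0.75$. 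So your ``if needed'' clause is not optional. You must name a specific additional sequence and then verify that the minimum of the candidate costs is at most $A+\tfrac12$ on the entire region. The paper does exactly this with $\sigma_4=1,n,1,n,n,n,\ldots$ (cost in \Cref{lem: seq4}), and checks the resulting two-variable rational inequality (\Cref{lem: addadgap lean}) in Lean. At $(0.2,0.8)$, $\sigma_4$ has excess about $0.39$. Note that $\sigma_4$ is not a greedy prefix when $p_1+p_n<1$: at position $3$ we have $H=p_1p_n<\bar p_1\bar p_n=T$, so greedy would flip coin $n$ there. Your ``short greedy prefix plus constant tail'' heuristic would therefore not have produced it. Whether some greedy-prefix family works instead is exactly what would need to be proved.

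\textbf{Where the difficulty actually lies.} Your diagnosis is misplaced. Near $(0,\tfrac12)$ there is no real obstruction.
\begin{itemize}
\item If $p_n<\tfrac12$, the excess of $n^\infty$ is $\frac{p_n(p_n-p_1)}{\bar p_n\bar p_1}\le\frac{p_n^2}{\bar p_n}<\tfrac12$.
\item If $p_n\ge\tfrac12$ and $p_1=0$, ``$1$ then $n^\infty$'' has excess $1-p_n$. This equals $\tfrac12$ only at the point itself, and a short expansion shows it stays strictly below $\tfrac12$ nearby.
\end{itemize}
The genuine trouble is near the anti-diagonal with small $p_1$, especially the corner $(0,1)$. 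There the excess of ``$1$ then $n^\infty$'' is approximately $p_1/\bar p_n$, which depends on the direction of approach and can be close to $1$. The other simple candidates are worse there, whereas $\sigma_4$ has excess $O(\bar p_n)$ uniformly. Finally, your fallback ``exchange-type bound'' (greedy follows the heavier branch of the adaptive tree, and the lost mass sums to at most $\tfrac12$) is only a slogan. No per-step inequality is stated, and nothing shows the lost mass is bounded by $\tfrac12$. As written, the claim $\OPT_{\text{non-adaptive}}\le A+\tfrac12$ is unproven.
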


\noindent
The lower bound in both cases comes from the instance \(p_1 = 0\), \(p_2 = \frac{1}{2}\), where the optimal adaptive cost is \(\frac{5}{2}\), while the optimal non-adaptive cost is \(3\). The main idea of the proof of the upper bound is that choosing the best out of a small number of candidate sequences always gives the desired bound against the optimal adaptive strategy. To reduce the number of strategies we consider, we first assume without loss of generality that \(p_1 \leq \bar{p}_n\). For the (multiplicative) adaptivity gap, we use the sequences
\begin{itemize}
    \item \(\sigma_1: n,n,n,n,\ldots\) (flips only coin \(n\))
    \item \(\sigma_2 : 1,n,n,n,\ldots\) (flips coin \(1\) then only \(n\))
    \item \(\sigma_3 : n,1,n,1,\ldots\) (repeatedly alternates between coin \(n\) and \(1\))
\end{itemize}
For the additive adaptivity gap, we replace \(\sigma_3\) with the strategy \(\sigma_4: 1,n,1,n,n,\ldots\) which alternates twice between coin \(1\) and \(n\), then continues with only coin \(n\). The cost of each of these strategies (as well as the optimal adaptive strategy) has a closed-form expression, and the proof reduces to proving inequalities between rational functions on a subset of \([0,1]^2\). The proofs of these inequalities we give as Lean~\cite{moura2021} files, linked in Appendix~\ref{app: adgap}. For a visualization of the relevant inequalities, see this \DesmosLink{Desmos graph}.

\subsection{Periodicity}\label{sec: periodicity}

In this section, we consider the periodicity of optimal sequences. If \(p_1 < p_n \leq 1/2\) or \(1/2 \leq p_1 < p_n\), there is an optimal sequence that only uses one coin. If \(p_1 = 0 < 1/2 < p_n\), the sequence \(1,n,n,n,\ldots\) is optimal (and symmetrically if \(p_n=1>1/2>p_1\) then \(n,1,1,1,\ldots\) is optimal). So, throughout the rest of this section we assume \(0<p_1 < 1/2 < p_n<1\). We consider specifically the optimal sequence that uses \(n\) at unbiased positions, and refer to it as ``the'' optimal sequence. Denote this sequence by \(\sigma^* : \mathbb{Z}_+\to\{1,n\}\). In what follows, we use the term ``periodic'' to refer to pure periodicity; that is, we say that a sequence \(\sigma\) is periodic iff there is a finite word \(s\) such that \(\sigma = sss\ldots\). We use the term ``aperiodic'' to mean that a sequence is not even eventually periodic.\par
We begin by showing that the optimal sequence corresponds closely to a particular \textit{mechanical word}.
\begin{definition}[Lower Mechanical Word, see 2.1.2 in \cite{lothaire2002}]\label{def: mechanical word}
    Given two real numbers \(\alpha\) and \(\rho\) with \(0 \leq \alpha \leq 1\), we define the infinite word \(
        s_{\alpha,\rho} : \mathbb{Z}_{\geq 0}\rightarrow \{0,1\}
    \) by
    \begin{align*}
        s_{\alpha,\rho}(m) &= \lfloor\alpha(m + 1) + \rho\rfloor - \lfloor\alpha m + \rho\rfloor
    \end{align*}
    Since \(0 \leq \alpha \leq 1\), \(s_{\alpha,\rho}(m)\in \{0,1\}\). \(s_{\alpha,\rho}\) is called the lower mechanical word with slope \(\alpha\) and intercept \(\rho\).
\end{definition}

\begin{definition}
    Define $h \eqdef \frac{p_n}{1-p_n}$ \text{ and } $ t \eqdef \frac{1-p_1}{p_1}$.
\end{definition}

\noindent
In what follows, suppose \(h > t\), or equivalently, $p_1 > 1-p_n$.  If \(h < t\), the proof is symmetric (redefining \(\sigma^*\) to use \(1\) at unbiased positions). If \(h = t\), then \(p_1 = \bar{p}_n\) and it is easy to check that the alternating sequence \(n,1,n,1,\ldots\) is optimal.
\begin{lemma}\label{lem: nmax}
    For any \(j\in\mathbb{Z}_{+}\), let \(\nmax(j)\) and \(\nmin(j)\) denote the number of occurrences of \(n\) and \(1\) at or prior to position \(j\) of \(\sigma^*\), respectively. The following three statements are equivalent:\\ (i) \(\sigma^*(j) = n\), (ii) \(H_{\sigma^*}(j-1)\leq T_{\sigma^*}(j-1)\), and (iii) \(\nmax(j-1)\leq \nmin(j-1)\log_ht\)
\end{lemma}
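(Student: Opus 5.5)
The plan is to prove (i)\(\Leftrightarrow\)(ii) directly from the definition of \(\sigma^*\) and the greedy rule of \Cref{thm: greedy optimal}. Then we prove (ii)\(\Leftrightarrow\)(iii) by taking logarithms in the closed form of \(H\) and \(T\).

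For (i)\(\Leftrightarrow\)(ii): by definition, position \(j\) is 1-biased iff \(H_{\sigma^*}(j-1) > T_{\sigma^*}(j-1)\). Here \(H_{\sigma^*}(j-1)\) is the probability that flips \(1,\ldots,j-1\) are all heads, and \(T_{\sigma^*}(j-1)\) is the probability they are all tails. Position \(j\) is 0-biased or unbiased iff \(H_{\sigma^*}(j-1)\leq T_{\sigma^*}(j-1)\). For \(j=1\) the empty products equal \(1\), so the position is unbiased. Since we are in the regime \(0<p_1<1/2<p_n<1\), \(\sigma^*\) is the greedy sequence of \Cref{thm: greedy optimal} that uses coin \(n\) at unbiased positions. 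Hence \(\sigma^*(j)=n\) exactly when \(j\) is 0-biased or unbiased, which is (ii). The only point needing care is that \(\sigma^*\) is well defined. Each \(\sigma^*(j)\) is determined by \(\sigma^*(1),\ldots,\sigma^*(j-1)\), so the sequence is defined recursively and the equivalence holds for every \(j\) by construction.

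For (ii)\(\Leftrightarrow\)(iii): since \(\sigma^*\) uses only coins \(1\) and \(n\),
\[
H_{\sigma^*}(j-1) = p_1^{\nmin(j-1)}\,p_n^{\nmax(j-1)}, \qquad T_{\sigma^*}(j-1) = \bar p_1^{\nmin(j-1)}\,\bar p_n^{\nmax(j-1)}.
\]
All four parameters lie strictly in \((0,1)\), so both quantities are positive. Inequality (ii) is then equivalent to
\[
\left(\frac{p_n}{\bar p_n}\right)^{\nmax(j-1)} \leq \left(\frac{\bar p_1}{p_1}\right)^{\nmin(j-1)}, \quad\text{i.e.}\quad h^{\nmax(j-1)}\leq t^{\nmin(j-1)}.
\]
Because \(p_n>1/2\) we have \(h>1\), and because \(p_1<1/2\) we have \(t>1\). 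Taking \(\log_h\), which is increasing since \(h>1\), gives \(\nmax(j-1)\leq \nmin(j-1)\log_h t\), which is (iii). Each step is reversible, so the equivalence holds in both directions.

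I expect no real obstacle. The only subtle steps are:
\begin{itemize}
\item the strictness conventions: unbiased positions must correspond to equality in (ii), and the greedy rule assigns them to coin \(n\);
\item verifying \(h,t>1\), so that the logarithm base is valid and the inequality direction is preserved.
\end{itemize}
The assumption \(h>t\) is not needed for this lemma, since it matters only for the tie-breaking convention at unbiased positions, which is already fixed in the definition of \(\sigma^*\).
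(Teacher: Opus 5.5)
Your proposal is correct and follows the paper's proof: you read (i)$\Leftrightarrow$(ii) off the greedy definition of $\sigma^*$ (coin $n$ exactly at 0-biased and unbiased positions), and you get (ii)$\Leftrightarrow$(iii) by writing $H$ and $T$ as products of powers of $p_1,p_n$ and $\bar p_1,\bar p_n$, rearranging to $h^{\nmax(j-1)}\le t^{\nmin(j-1)}$, and taking $\log_h$ using $h,t>1$. Your extra remarks (the products are positive, the $j=1$ case, and that $h>t$ is not used) are correct details the paper leaves implicit.
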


\noindent
\textit{Proof.}
    Equivalence of (i) and (ii) follows easily from the behavior of the optimal sequence. We show equivalence of (ii) and (iii). Note that \(h,t> 1\). Using \(H(j)\) (resp. \(T(j)\)) in place of \(H_{\sigma^*}(j)\) (resp. \(T_{\sigma^*}(j)\)), 
    \begin{align*}
        H(j) &\leq T(j) 
        \Leftrightarrow p_{n}^{\nmax(j)}p_{1}^{\nmin(j)} \leq (1-p_n)^{\nmax(j)}(1-p_1)^{\nmin(j)} 
        \Leftrightarrow 
        \left(\frac{p_n}{1-p_n}\right)^{\nmax(j)} \leq \left(\frac{1-p_1}{p_1}\right)^{\nmin(j)}\\
        \Leftrightarrow h^{\nmax(j)} &\leq t^{\nmin(j)} \Leftrightarrow \nmax(j) \leq  \nmin(j)\log_ht\tag*{$\square$}
    \end{align*}

\noindent
Informally, the following lemma states that \(n\) never occurs twice in a row in \(\sigma^*\).
\begin{lemma}\label{lem: no double xmax}
    For any \(k\in\mathbb{Z}_+\), we have \(
        \sigma^*(k) = n \Rightarrow \sigma^*(k + 1) = 1
    \).
\end{lemma}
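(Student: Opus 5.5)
We use the characterization of \Cref{lem: nmax}: writing \(r \eqdef \log_h t\), position \(j\) of \(\sigma^*\) holds coin \(n\) exactly when \(\nmax(j-1) \le \nmin(j-1)\, r\). Since we are in the case \(h > t > 1\), we have \(0 < r < 1\). This strict inequality \(r<1\) is the only place the assumption \(h>t\) enters, and it is the crux of the argument.

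Suppose \(\sigma^*(k) = n\). By \Cref{lem: nmax}(iii) applied at \(j=k\),
\[
\nmax(k-1) \le \nmin(k-1)\, r .
\]
Position \(k\) adds one occurrence of \(n\), so \(\nmax(k) = \nmax(k-1)+1\) and \(\nmin(k) = \nmin(k-1)\). To conclude \(\sigma^*(k+1) = 1\), it suffices by \Cref{lem: nmax} (equivalence of (i) and (iii) at \(j = k+1\)) to show
\[
\nmax(k-1) + 1 > \nmin(k-1)\, r .
\]

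We argue by contradiction. Suppose instead \(\nmax(k-1) + 1 \le \nmin(k-1)\, r\). We will show this forces an earlier violation, and we do so by establishing an invariant by induction on \(j\):
\[
\nmax(j) \ge \nmin(j)\, r - 1 \qquad \text{for all } j \ge 0 .
\]

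The base case \(j=0\) gives \(0 \ge -1\), which holds. For the inductive step there are two cases:
\begin{itemize}
\item If position \(j+1\) holds coin \(1\), then \(\nmin\) increases by one while \(\nmax\) is unchanged. By \Cref{lem: nmax}, coin \(1\) is placed only when \(\nmax(j) > \nmin(j)\, r\), so
\[
\nmax(j+1) = \nmax(j) > \nmin(j)\, r = \nmin(j+1)\, r - r > \nmin(j+1)\, r - 1,
\]
where the last step uses \(r<1\).
\item If position \(j+1\) holds coin \(n\), then \(\nmax\) increases and \(\nmin\) is unchanged, so the inequality is preserved trivially.
\end{itemize}

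The invariant by itself is too weak, since it allows equality. We therefore tighten it to the strict version
\[
\nmax(j) > \nmin(j)\, r - 1 \qquad \text{for all } j \ge 0 .
\]
The base case holds strictly (\(0 > -1\)), the coin-\(1\) step above already yields a strict inequality, and the coin-\(n\) step preserves strictness. Applying the strict invariant at \(j = k-1\) gives \(\nmax(k-1) + 1 > \nmin(k-1)\, r\), which is exactly what we needed. This contradicts the supposition, so \(\sigma^*(k+1) = 1\).
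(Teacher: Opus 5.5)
Your proof is correct, and its key computation is the same as the paper's. A choice of coin $1$ at position $j+1$ forces $\nmax(j) > \nmin(j)\log_h t$, and since $\log_h t<1$ this survives the increment of $\nmin$ in the weakened form $\nmax(j+1) > \nmin(j+1)\log_h t - 1$. The difference is organizational.

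\begin{itemize}
\item The paper takes a minimal double occurrence $k$ and treats $k=1$ separately. It then uses minimality to force $\sigma^*(k-1)=1$, and shows that the hypothesis $\nmax(k-1)+1\le\nmin(k-1)\log_h t$ would have required coin $n$ at position $k-1$.
\item You instead prove the invariant $\nmax(j) > \nmin(j)\log_h t - 1$ for every prefix by forward induction. This absorbs the case $k=1$ into the base case $j=0$, and replaces the minimality argument with the trivial coin-$n$ step. It also gives you a statement about all prefixes of $\sigma^*$, not just about a neighborhood of $k$.
\end{itemize}

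Two small presentational points. First, the contradiction wrapper and the detour through the non-strict invariant are unnecessary. The strict invariant at $j=k-1$, together with $\sigma^*(k)=n$, gives $\nmax(k) > \nmin(k)\log_h t$ directly. Second, in your coin-$1$ step the strictness already comes from $\nmax(j) > \nmin(j)\log_h t$, so $\log_h t\le 1$ would suffice; the same is true of the paper's step. So the strict inequality $\log_h t<1$ is not really the crux you describe, and the argument also covers the case $h=t$.
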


\begin{proof}
    Suppose, for the sake of contradiction, that \(\sigma^*(k) = \sigma^*(k+1) = n\). Take \(k\) to be minimal. If \(k = 1\), then we have \(\nmin(1)\log_ht = 0 < 1 = \nmax(1)\). This implies that \(\sigma^*(2) = 1\) by Lemma \ref{lem: nmax}, and we are done. So, assume \(k\neq 1\), meaning that there is at least one occurrence of \(1\) before position \(k\) (since \(k\) is minimal). Equivalently, \(\nmin(k) \geq 1\). By Lemma \ref{lem: nmax}, at indices \(k\) and \(k+1\) we have that $\nmax(k-1) \leq \nmin(k-1) \log_ht$ and $\nmax(k) \leq \nmin(k) \log_ht$.
    Also \(\nmax(k) = \nmax(k-1) + 1\) and \(\nmin(k-1) = \nmin(k)\). Thus
    \[
        \nmax(k-1) + 1 \leq \nmin(k - 1)\log_ht \Rightarrow \nmax(k-1) \leq (\nmin(k-1) - 1)\log_ht
    \]
    where the implication follows from \(0 < \log_ht < 1\). We know that \(\nmin(k-1) = \nmin(k)\geq 1\). Since \(k\) is minimal, we know that \(\sigma^*(k-1) = 1\), and therefore that \(\nmin(k-1) - 1= \nmin(k-2)\). Since \(\sigma^*(k-1) = 1\), we also have that \(\nmax(k-1) = \nmax(k-2)\). Substituting, we obtain $\nmax(k-2) \leq \nmin(k-2)\log_ht$. This contradicts the fact that coin \(1\) was chosen for position \(k - 1\).
\end{proof}

\begin{lemma}\label{lem: index count}
Define \(j_k\) to be the index of the \(k\)\textsuperscript{th} occurrence of \(n\) in \(\sigma^*\). Then
\[
    \nmin(j_k) = \min\{m\in \mathbb{Z}_{\geq 0}: m\log_ht\geq k  - 1\}
\]
\end{lemma}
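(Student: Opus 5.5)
We use Lemma~\ref{lem: nmax}: the coin at position $j$ is determined by the counts before it. Write $\lambda \eqdef \log_h t$, so $0<\lambda<1$ under the standing assumption $h>t>1$. At position $j_k$ coin $n$ is chosen, so by Lemma~\ref{lem: nmax}(iii) applied at $j=j_k$ we have $\nmax(j_k-1)\le \nmin(j_k-1)\lambda$. Since the previous occurrences of $n$ number exactly $k-1$, we have $\nmax(j_k-1)=k-1$. Also $\nmin(j_k)=\nmin(j_k-1)$, because position $j_k$ holds $n$. Setting $m^\ast\eqdef\nmin(j_k)$, this gives $m^\ast\lambda\ge k-1$, so $m^\ast$ belongs to the set in the statement and $m^\ast\ge m_0$, where $m_0$ denotes that minimum. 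The minimum exists because $\lambda>0$, so the set is nonempty.

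For the reverse inequality $m^\ast\le m_0$, I argue that $n$ is placed as soon as the count of $1$'s allows. Consider the run of $1$'s immediately preceding position $j_k$. If $m^\ast=0$, there is nothing to prove. Otherwise $m^\ast\ge1$, and I claim the position $j_k-1$ holds coin $1$. When $k=1$ this is immediate, since $\nmin(j_1-1)=m^\ast\ge1$ and all earlier positions are $1$'s. When $k\ge2$, it follows from Lemma~\ref{lem: no double xmax}, which says $j_k-1\ne j_{k-1}$, because position $j_{k-1}+1$ must be $1$.

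Since position $j_k-1$ holds $1$, Lemma~\ref{lem: nmax} at $j=j_k-1$ (the equivalence (i)$\Leftrightarrow$(iii) negated) gives $\nmax(j_k-2)>\nmin(j_k-2)\lambda$. Here $\nmax(j_k-2)=k-1$ and $\nmin(j_k-2)=m^\ast-1$, so $(m^\ast-1)\lambda<k-1$. Thus $m^\ast-1$ is not in the set, and because the set is upward closed (as $\lambda>0$), every $m\le m^\ast-1$ fails as well. Hence $m_0\ge m^\ast$, and therefore $m^\ast=m_0$.

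The main subtlety is the edge cases. The first is $k=1$, where the formula gives $m_0=0$; here I must check that $\sigma^\ast(1)=n$, which follows from Lemma~\ref{lem: nmax} since $0\le 0$. The second is $j_k-2=0$, where the counts at index $0$ are interpreted as zero, and Lemma~\ref{lem: nmax} is used with the empty prefix. Otherwise the argument is a direct two-sided use of Lemma~\ref{lem: nmax} combined with Lemma~\ref{lem: no double xmax}.
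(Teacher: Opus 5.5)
Your proof is correct and takes essentially the paper's route: Lemma~\ref{lem: nmax} at position $j_k$ gives $\nmin(j_k)\log_ht\ge k-1$, and Lemma~\ref{lem: no double xmax} forces $\sigma^*(j_k-1)=1$, so Lemma~\ref{lem: nmax} at $j_k-1$ yields $(\nmin(j_k)-1)\log_ht<k-1$. The differences are cosmetic: you argue minimality directly rather than by contradiction, you state the upward-closedness of the set explicitly where the paper uses it silently, and you split edge cases by $\nmin(j_k)=0$ and $k=1$ rather than by $j_k\in\{1,2\}$.
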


\begin{proof}
    Note that for \(j_k=1\), we have \(k = 1\), and the statement follows by direct computation. For the case \(j_k = 2\), we know \(\sigma^*(1)=n\) by definition, so \(k = 2\) which is impossible by \Cref{lem: no double xmax} (since we would have \(\sigma^*(1)=\sigma^*(2)=n\)). So, assume \(j_k > 2\).\par
    Recall that \(\nmin(j_k)\) is the number of occurrences of \(1\) prior to the \(k\)\textsuperscript{th} occurrence of \(n\). We know that \(\nmax(j_k-1) \leq \nmin(j_k-1)\log_ht\), by Lemma \ref{lem: nmax}. We also know that \(\nmax(j_k-1) = k - 1\) by the definition of \(j_k\). Substituting, we then have that \(\nmin(j_k-1)\log_ht \geq k-1\). We also know that, since \(\sigma^*(j_k) = n\), \(\nmin(j_k-1) = \nmin(j_k)\). Substituting, \(\nmin(j_k)\log_ht\geq k - 1\).\par
    Suppose for the sake of contradiction that \(\nmin(j_k)\) is not minimal while satisfying \(\nmin(j_k)\log_ht \geq k - 1\). We know that \(\sigma^*(j_k - 1) = 1\) by \Cref{lem: no double xmax}, so
    \[
        \nmin(j_k) = \nmin(j_k - 1) = \nmin(j_k-2) + 1\Leftrightarrow \nmin(j_k) - 1= \nmin(j_k-2)
    \]
    By our assumption then, \(\nmin(j_k-2)\log_ht \geq k - 1 = N_{n}(j_k-1) = N_n(j_k - 2)\). By Lemma \ref{lem: nmax}, \(\sigma^*(j_k -1) = n\), which contradicts \(\sigma^*(j_k - 1) = 1\).
\end{proof}

\begin{lemma}\label{lem: ht lemma}
    The optimal sequence is periodic if and only if the word \(s_{\log_ht,0}\) is periodic.
\end{lemma}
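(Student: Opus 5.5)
Let \(\alpha \eqdef \log_h t\), which lies in \((0,1)\) since \(1<t<h\). The plan is to show that \(\sigma^*\) is obtained from \(s_{\alpha,0}\) by a fixed letter-substitution, up to a fixed prefix. Periodicity then transfers in both directions.

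\textbf{Step 1: encode \(\sigma^*\) as blocks.} By \Cref{lem: no double xmax}, every occurrence of \(n\) is immediately followed by \(1\). Hence \(\sigma^*\) factors uniquely into blocks. Each block is either a lone \(1\), or the pair \(n1\), possibly preceded by a leading \(n\) at position \(1\). Position \(1\) is \(0\)-biased/unbiased (\(H=T=1\)), so \(\sigma^*(1)=n\). Thus \(\sigma^* = n\,u\), where \(u\) is an infinite word over the two blocks \(1\) and \(n1\)... It is cleaner to record instead, for each \(m\ge 0\), the number \(c_m\) of \(n\)'s occurring between the \(m\)th and \((m+1)\)th occurrence of \(1\). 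By \Cref{lem: no double xmax}, \(c_m\in\{0,1\}\). The word \(c=c_0c_1c_2\ldots\) determines \(\sigma^*\) via the substitution \(0\mapsto 1\), \(1\mapsto n1\), that is, \(\sigma^* = \phi(c)\). Here \(c_0=1\), accounting for the initial \(n\).

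\textbf{Step 2: identify \(c\) with the mechanical word.} By \Cref{lem: index count}, the \(k\)th \(n\) is preceded by exactly \(\lceil (k-1)/\alpha\rceil\) ones. So the number of \(n\)'s preceded by at most \(m\) ones is \(\#\{k\ge1: (k-1)\le m\alpha\} = \lfloor m\alpha\rfloor + 1\). Therefore \(c_m = \lfloor m\alpha\rfloor - \lfloor (m-1)\alpha\rfloor\) for \(m\ge1\). This means \(c_1c_2c_3\ldots = s_{\alpha,0}\), since \(s_{\alpha,0}(m-1)=\lfloor m\alpha\rfloor-\lfloor(m-1)\alpha\rfloor\). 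Consequently \(\sigma^* = n\,\phi(s_{\alpha,0})\), where \(\phi\) maps \(0\mapsto 1\) and \(1\mapsto 1n\) after re-bracketing. Equivalently, \(\sigma^* = \psi(s_{\alpha,0})\) with \(\psi(0)=1\), \(\psi(1)=1n\)... I will fix the bracketing so that each symbol of \(s_{\alpha,0}\) corresponds to one occurrence of \(1\) together with the \(n\) (if any) that follows it. That is, \(\sigma^* = n\,\psi(s_{\alpha,0})\) with \(\psi(0)=1\), \(\psi(1)=1n\).

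\textbf{Step 3: transfer periodicity.} This is the main obstacle, because of the leading \(n\) and because we need pure periodicity.

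If \(\alpha=p/q\) is rational, then \(\lfloor m\alpha\rfloor\) satisfies \(\lfloor (m+q)\alpha\rfloor=\lfloor m\alpha\rfloor+p\). Hence \(s_{\alpha,0}\) is purely periodic with period word \(w\) of length \(q\). The word \(w\) begins with \(0\) and ends with \(1\), since \(s(q-1)=p-\lfloor (q-1)p/q\rfloor=1\). Then \(\psi(w)=1\,x\,n\) for some \(x\). The word \(n\psi(w)\psi(w)\cdots\) equals \((n1x)(n1x)\cdots\) by shifting the trailing \(n\) to the front, so \(\sigma^*\) is purely periodic.

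Conversely, suppose \(\sigma^*=vvv\cdots\). Then \(\sigma^*\) is eventually periodic, and the inverse of \(\psi\) (parsing at occurrences of \(1\)) is well defined because \(\psi\) is a uniquely decodable code. Hence \(s_{\alpha,0}\) is eventually periodic. An eventually periodic \(s_{\alpha,0}\) has rational frequency of \(1\)'s, and that frequency equals \(\alpha\). So \(\alpha\) is rational, and \(s_{\alpha,0}\) is purely periodic by the previous paragraph. This is the standard fact from \cite{lothaire2002} that mechanical words are eventually periodic iff the slope is rational.
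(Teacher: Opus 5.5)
Your argument is correct. Its first half matches the paper: both derive the structure from \Cref{lem: no double xmax} and \Cref{lem: index count}. Your identity $\sigma^*=n\,\psi(s_{\alpha,0})$ with $\psi(0)=1$, $\psi(1)=1n$ is a repackaging of the paper's run-length relation $y_k=z_k+1$: the number of $1$s between the $k$th and $(k+1)$th $n$ in $\sigma^*$ is one more than the number of $0$s between the $(k-1)$th and $k$th $1$ in $s_{\alpha,0}$.

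The genuine difference is in how periodicity is transferred. The paper reads it off directly from $y_k=z_k+1$, without invoking rationality of $\alpha$. You instead route both directions through the rationality of $\alpha$. For $\alpha=p/q$ you show the period block of $s_{\alpha,0}$ ends in $1$, so the leading $n$ can be rotated into a pure period $n1x$. For the converse you use the frequency $\lfloor M\alpha\rfloor/M\to\alpha$.

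Your route buys rigor exactly where the paper's concluding ``Thus'' is terse. Pure periodicity of a binary word does not in general make its run-length sequence purely periodic; for example, $(010)^\infty$ has runs $1,2,2,\ldots$. So the direction from $s_{\alpha,0}$ to $\sigma^*$ really needs the fact that period blocks of $s_{\alpha,0}$ end in $1$, and you supply it. The cost is that the lemma is no longer purely combinatorial: you effectively reprove \Cref{lem: word equivalences} and the subsequent theorem inside it.

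When writing it up, tidy three things:
\begin{itemize}
\item State explicitly the chain ``$s_{\alpha,0}$ periodic $\Rightarrow$ $\alpha$ rational $\Rightarrow$ $\sigma^*$ periodic''; it is only implicit now.
\item Add the one-line reason that decoding preserves eventual periodicity. Every $1$ starts a codeword, so a period of $\psi(s_{\alpha,0})$ containing $r\ge 1$ ones gives $s_{\alpha,0}(m+r)=s_{\alpha,0}(m)$ for large $m$.
\item Pick one bracketing in Steps 1--2 rather than switching from $\phi$ on $c_0c_1\cdots$ to $n\,\psi$ on $c_1c_2\cdots$ mid-argument. Both are consistent, but the switch is confusing.
\end{itemize}
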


\begin{proof}
    Recall that \(h > t\). For ease of notation, write \(s\) in place of \(s_{\log_ht,0}\). Let \(i_k\) be the index of the \(k\)\textsuperscript{th} occurrence of \(1\) in \(s\). We have, for \(k > 0\)
    \begin{align*}
        i_k &= \min\{m\in \mathbb{Z}_{\geq0} : \lfloor(m+1)\log_ht\rfloor = k\}\\
        &= \min\{m\in\mathbb{Z}_{\geq0}: (m+1)\log_ht \geq k\}\hspace{10pt}\text{since \(0 < \log_ht < 1\)}
    \end{align*}
    Define \(i_0 = -1\). Define \(j_k\) to be the index of the \(k\)\textsuperscript{th} occurrence of \(n\) in \(\sigma^*\). By Lemma \ref{lem: index count}, we have $\nmin(j_k) = \min\{m\in \mathbb{Z}_{\geq 0}\mid m\log_ht\geq k  - 1\}$
    so \(\nmin(j_{k+1}) = i_k + 1\). Consider the number of 0s between the \((k - 1)\)\textsuperscript{th} and \(k\)\textsuperscript{th} 1 in \(s\), denote this value by \(z_k\). We define \(z_1\) to be the number of 0s prior to the first 1 in \(s\). We have $z_k = i_k - i_{k-1} - 1$.
    Similarly, let \(y_k\) be the number of \(1\)s in \(\sigma^*\) between the \(k\)\textsuperscript{th} and \((k+1)\)\textsuperscript{th} occurrence of \(n\). We have
    \[
        y_k = \nmin(j_{k+1}) - \nmin(j_{k}) = (i_k + 1) - (i_{k-1} + 1) = i_k - i_{k-1} = z_k + 1
    \]
    Equivalently, the number of \(1\)s between the \(k\)\textsuperscript{th} and \((k + 1)\)\textsuperscript{th} occurrence of \(n\) in \(\sigma^*\) is one more than the number of \(0\)s between the \((k-1)\)\textsuperscript{th} and \(k\)\textsuperscript{th} 1 in \(s\). Recall that by \Cref{lem: no double xmax}, \(n\) never occurs twice in a row in \(\sigma^*\), so that its periodicity is entirely determined by the length of contiguous choices of coin \(1\). Thus \(\sigma^*\) is periodic if and only if \(s\) is periodic.
\end{proof}

\begin{lemma}[weak version of Lemma 2.1.14 in \cite{lothaire2002}]\label{lem: word equivalences}
    Let \(s\) be a mechanical word with slope \(\alpha\). If \(\alpha\) is rational then \(s\) is purely periodic. If \(\alpha\) is
    irrational, then \(s\) is aperiodic.
\end{lemma}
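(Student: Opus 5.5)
We prove the two directions separately, working directly from Definition~\ref{def: mechanical word}. Write \(s = s_{\alpha,\rho}\), so that \(s(m) = \lfloor\alpha(m+1)+\rho\rfloor - \lfloor\alpha m+\rho\rfloor\). Telescoping gives, for any \(m\) and \(L\),
\[
    \sum_{i=m}^{m+L-1} s(i) = \lfloor\alpha(m+L)+\rho\rfloor - \lfloor\alpha m+\rho\rfloor .
\]
Both directions rest on this identity.

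\textbf{Rational slope.} Suppose \(\alpha = p/q\) with \(q\geq 1\). Then for every \(m\),
\[
    \alpha(m+q)+\rho = \alpha m + \rho + p .
\]
Since \(p\) is an integer, adding it commutes with the floor. Hence both floors in the definition of \(s(m+q)\) are those of \(s(m)\) shifted by \(p\), and the shifts cancel in the difference. So \(s(m+q)=s(m)\) for all \(m\geq 0\), which means \(s = www\ldots\) with \(w = s(0)\cdots s(q-1)\). This is pure periodicity, and this direction is immediate.

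\textbf{Irrational slope.} Suppose \(\alpha\) is irrational and, for contradiction, that \(s\) is eventually periodic: there are \(m_0\) and \(L\geq 1\) with \(s(m+L)=s(m)\) for all \(m\geq m_0\). Let \(c = \sum_{i=m_0}^{m_0+L-1} s(i)\), an integer. By periodicity, every block of \(kL\) consecutive letters starting at \(m_0\) has sum \(kc\). Applying the telescoping identity with length \(kL\) gives
\[
    \lfloor \alpha(m_0+kL)+\rho\rfloor - \lfloor\alpha m_0+\rho\rfloor = kc .
\]
The left side equals \(\alpha kL + O(1)\), with the \(O(1)\) bounded by \(1\) in absolute value. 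Dividing by \(k\) and letting \(k\to\infty\) yields \(\alpha L = c\), so \(\alpha = c/L\) is rational, a contradiction. Thus \(s\) is not even eventually periodic, i.e.\ aperiodic.

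\textbf{Expected obstacle.} The only delicate point is the irrational direction, which must rule out eventual periodicity and not just pure periodicity. The averaging argument above (the letter frequency of \(s\) is exactly \(\alpha\)) handles this directly, since the period is only required to hold from \(m_0\) onward. The boundary cases \(\alpha\in\{0,1\}\) are rational and covered by the first part, giving constant words. Since the paper only needs the slope \(\log_h t\in(0,1)\) and intercept \(0\), the general intercept \(\rho\) adds no difficulty.
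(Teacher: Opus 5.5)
Your proof is correct. The shift $\alpha(m+q)+\rho=\alpha m+\rho+p$ gives $s(m+q)=s(m)$ for all $m\ge 0$, which is pure periodicity. For the other direction, telescoping turns eventual periodicity into $kc=\alpha kL+\epsilon_k$ with $|\epsilon_k|<1$, which forces $\alpha=c/L$.

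The paper gives no proof of this lemma; it imports it as a weak form of Lemma 2.1.14 of \cite{lothaire2002}. So the comparison is between a citation and your self-contained argument. Your argument is essentially the standard one: the slope of a mechanical word equals the asymptotic frequency of the letter $1$, and an eventually periodic word has rational letter frequencies.

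What your version adds is that it is elementary and explicitly excludes \emph{eventual} periodicity. That is exactly the notion of ``aperiodic'' the paper fixes in Section~\ref{sec: periodicity}, so it covers everything the periodicity theorem needs for $s_{\log_h t,0}$. The citation buys only brevity.
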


\begin{theorem}
    The optimal sequence is periodic if and only if \(\log_ht\) is rational.
\end{theorem}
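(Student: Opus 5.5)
The theorem follows by chaining \Cref{lem: ht lemma} with \Cref{lem: word equivalences}, with some care about the standing assumptions and the symmetric cases. I will work under the assumptions of this section: \(0<p_1<1/2<p_n<1\), so that \(h,t>1\), and first \(h>t\), which gives \(0<\log_ht<1\). Then \(s_{\log_ht,0}\) is a lower mechanical word with slope \(\alpha=\log_ht\in(0,1)\), so \Cref{def: mechanical word} applies.

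The case \(h>t\) splits into two directions. If \(\log_ht\) is rational, \Cref{lem: word equivalences} says \(s_{\log_ht,0}\) is purely periodic, and \Cref{lem: ht lemma} transfers this to \(\sigma^*\). If \(\log_ht\) is irrational, \Cref{lem: word equivalences} says \(s_{\log_ht,0}\) is aperiodic, in particular not purely periodic, so by \Cref{lem: ht lemma} \(\sigma^*\) is not periodic. Since \(\sigma^*\) is then not even eventually periodic, the stronger reading of the statement also holds.

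For the remaining cases I argue as follows.
\begin{itemize}
\item If \(h=t\), then \(\log_ht=1\) is rational, and the alternating sequence \(n,1,n,1,\ldots\) is optimal and periodic. I should check that this is indeed the sequence \(\sigma^*\) that uses \(n\) at unbiased positions. Position \(1\) is unbiased, so it gets \(n\). After \(n\) the next position is 1-biased, so it gets \(1\). Since \(p_1=\bar p_n\), we then have \(H=T\) again, so the pattern repeats.
\item If \(h<t\), the text says the argument is symmetric after exchanging the roles of the coins. The relevant slope becomes \(\log_th=1/\log_ht\), and its rationality is equivalent to that of \(\log_ht\). So the criterion is unchanged.
\end{itemize}

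There is also the question of the quantity in the introduction, \(\log_{p_n/(1-p_n)}\bigl(\frac{1-p_1}{p_1}\bigr)\). This is exactly \(\log_ht\), so no further translation is needed.

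The main obstacle is the mismatch between the notions of periodicity. \Cref{lem: ht lemma} is stated for pure periodicity, while the aperiodic direction should rule out eventual periodicity as well. To handle this I would reuse the run-length correspondence from the proof of \Cref{lem: ht lemma}: \(y_k=z_k+1\). This shows that eventual periodicity of one sequence of run lengths is equivalent to eventual periodicity of the other. Hence an aperiodic \(s\) yields an aperiodic \(\sigma^*\).
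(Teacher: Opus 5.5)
Your proposal is correct and matches the paper's proof. The paper proves the theorem by the same two-step chain: \Cref{lem: word equivalences} gives that $\log_h t$ is rational if and only if $s_{\log_h t,0}$ is purely periodic, and \Cref{lem: ht lemma} then transfers periodicity between $s_{\log_h t,0}$ and $\sigma^*$. Your extra checks are sound and only spell out details the paper leaves implicit: the cases $h=t$ and $h<t$, and the upgrade to ruling out eventual periodicity via the run-length identity $y_k=z_k+1$.
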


\begin{proof}
    \(\log_ht\) is rational if and only if the word \(s_{\log_ht,0}\) is periodic, by \Cref{lem: word equivalences}. The word \(s_{\log_ht,0}\) is periodic if and only if the optimal sequence is periodic, by \Cref{lem: ht lemma}.
\end{proof}

\section{\textit{d}-ary Unanimous Vote}\label{sec: duv}

We now consider a generalization that replaces the Unanimous Vote problem's coins with \(d\)-sided dice, which we call the \textit{\(d\)-ary Unanimous Vote} problem. Specifically, given \(n\) independent, biased, \(d\)-sided dice numbered 1 through $n$, 
where the sides of each die are colored using \(d\) colors, the \(d\)-ary Unanimous Vote problem asks us to fix a permutation \(\pi:[n]\rightarrow[n]\) by which to roll the dice, stopping when two distinct colors are rolled, or all dice have been rolled. The goal is to minimize the expected number of rolls.

The rest of this section uses the following notation.
Let $X_i$ be the random variable whose value is the outcome of rolling die $i$.
Define \(\cost(\pi)\) as a discrete random variable that is equal to the number of rolls that are  performed if they are done in the order specified by \(\pi\) (roll die \({\pi(1)}\) first, then die \({\pi(2)}\), ...). Let the set of colors be \(\Omega\) (\(|\Omega|=d\)). For \(c\in\Omega\), write \(c_i\eqdef\Pr[X_i=c]\), \(\bar{c}_i\eqdef 1- c_i\).  For fixed permutation \(\pi\), define a corresponding function \(P_\pi(k;c) \eqdef \prod_{i=1}^kc_{\pi(i)}\), which is the probability that the first \(k\) dice of \(\pi\) show color \(c\). If \(\pi\) is clear from context, we may drop the subscript. Omitted proofs from this section are in Appendix~\ref{app: missing duv section}.

\subsection{1-Additive Approximation Algorithm}

We first state our main result:

\begin{theorem}\label{thm: greedy less than opt plus one}
    Let \(\OPT\) be the expected cost of an optimal adaptive strategy for the \(d\)-ary Unanimous Vote problem. There is a polynomial-time algorithm that produces a permutation \(\pi\) such that
    \[
        \expect*{\cost(\pi)} \leq \text{\textnormal{OPT}} + 1
    \]
\end{theorem}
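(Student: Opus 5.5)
Our plan is to analyze the natural generalization of the greedy rule, which we call the \emph{weighted greedy rule}. Build $\pi$ position by position. At position $k$, given the already chosen prefix $\pi(1),\ldots,\pi(k-1)$, choose an unused die $i$ minimizing
\[
\sum_{c\in\Omega} P_\pi(k-1;c)\, c_i .
\]
Here $P_\pi(0;c)=1$. This is exactly the die that minimizes the probability that the first $k$ rolls are unanimous. For $d=2$, the weights $P_\pi(k-1;c)$ compare the all-heads and all-tails probabilities of the prefix, so the rule reduces to the bias rule of \cite{dumankeles2026}. Computing $\pi$ this way clearly takes polynomial time.

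By the argument of \Cref{lem: sequence cost}, restricted to permutations,
\[
\expect*{\cost(\pi)} = 1+\sum_{k=1}^{n-1} G_k, \qquad G_k \eqdef \sum_{c} P_\pi(k;c).
\]

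\textbf{Lower bound on $\OPT$.} After the first roll shows color $c$, an adaptive strategy only needs to find a die showing a color other than $c$. The optimal way to do this is to roll the remaining dice in increasing order of $c_j$; a standard exchange argument shows this. Hence, if the optimal adaptive strategy rolls die $i^*$ first,
\[
\OPT = 1+\sum_{m=1}^{n-1} O_m, \qquad O_m \eqdef \sum_c c_{i^*}\,A_c(m-1),
\]
where $A_c(r)$ is the product of the $r$ smallest values $c_j$ over $j\neq i^*$.

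\textbf{Goal: a shifted comparison.} We want to show $G_1\le 1$, which is trivial, and
\[
G_k \le O_{k-1} \quad\text{for } 2\le k\le n-1 .
\]
Summing these gives $\expect*{\cost(\pi)}\le 1+1+\sum_{m=1}^{n-2}O_m\le \OPT+1$. The shift by one position is the slack that pays for the single extra flip.

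\textbf{Key inequality (main obstacle).} A single permutation must make $\prod_{i\le k}c_{\pi(i)}$ small for all colors at once, while the adaptive strategy sorts separately for each color. My first attempt will use two facts.
\begin{itemize}
\item[(a)] For each $c$, $P_\pi(k;c)\le P_\pi(k-1;c)\,c_{\pi(k)}$, and unanimity probabilities only decrease as dice are added. So I may drop one die from the prefix, namely the one that best matches $i^*$, paying exactly the one-position shift.
\item[(b)] At each step, the greedy choice beats any fixed unused die in the $P_\pi(k-1;\cdot)$-weighted average.
\end{itemize}
I would then try an inductive potential argument. Maintain the invariant that $P_\pi(k;c)$ is bounded by a convex combination, over colors, of quantities $c_{i^*}A_c(k-1)$. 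Extend the invariant by comparing the greedy die against the die that the color-$c$ sorted order would use next, and use (b) to aggregate over $c$.

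If this pointwise-in-$k$ invariant fails, the fallback is to compare whole sums. I would show $\sum_{k\ge2}G_k\le\sum_{m\ge1}O_m$ via a charging scheme, charging the greedy term at position $k$ to the adaptive terms in the style of the 1-additive argument of \cite{dumankeles2026}, now applied colorwise with the weights $P_\pi(k-1;c)$. I expect establishing this key inequality to be the hardest part of the proof.
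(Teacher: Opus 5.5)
\textbf{Genuine gap: the key inequality is unproven, and its pointwise form is false.} The entire difficulty sits in your ``key inequality.'' The pointwise version you aim for, $G_k \le O_{k-1}$, fails for your rule. Take colors $R,B,G$ and five dice with $(R,B,G)$-distributions
\[
a=(\tfrac12,\tfrac12,0),\quad u=(0,\tfrac12,\tfrac12),\quad v=(\tfrac12,0,\tfrac12),\quad g=g'=(\epsilon,\epsilon,1-2\epsilon),\qquad 0<\epsilon<\tfrac14 .
\]
Rolling $a$, then $u$ or $v$ according to its outcome, shows $\OPT=2$. Hence $O_m=0$ for all $m\ge 2$, whichever optimal first die $i^*$ you use. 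Your rule ties at position $1$; suppose it starts with $g$.
\begin{itemize}
\item At position $2$ it strictly prefers $a$: score $\epsilon$, versus more than $\tfrac14$ for every other die. This leaves weights $(\tfrac\epsilon2,\tfrac\epsilon2,0)$.
\item At position $3$ it strictly prefers $g'$: score $\epsilon^2$, versus $\tfrac\epsilon4$ for $u$ and $v$.
\end{itemize}
So $G_3=\epsilon^2>0=O_2$. Starting with $a$, the adaptive optimum's own first die, reaches the same weights and the same conclusion. The algorithm's total cost here is only $2+\epsilon+\tfrac54\epsilon^2$, so what breaks is your accounting, not the algorithm. A single permutation cannot track, term by term, the adaptive strategy's color-specific continuations, and a one-position shift is not enough slack. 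Your fallback (``compare whole sums via a charging scheme'') restates the goal rather than giving an argument.

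\textbf{The missing idea: the crossover point.} The paper never charges the early positions to $\OPT$ at all.
\begin{itemize}
\item \emph{Before the crossover.} Greedy rule A is used only while some die has conditional continuation probability at most $\tfrac12$. Those terms sum to at most $\tfrac12+\tfrac14+\cdots+\tfrac{1}{2^{t-2}}$.
\item \emph{At the crossover.} At the first position $t$ where every die continues with probability greater than $\tfrac12$, \Cref{lem: properties after t} gives a color $r$. This color carries at least half of the surviving mass, and every remaining die has $r_i\ge\tfrac12$.
\item \emph{After the crossover.} The non-$r$ mass at least halves at each later step, completing the geometric series to total at most $1$. The $r$-mass is compared term by term with $\OPT_R=2+\sum_{i\ge2}\prod_{k\le i}r^{(k)}$. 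This is the cost of an optimal adaptive strategy that must roll at least two dice and stops at a non-$r$ outcome, and it lower-bounds $\OPT$.
\end{itemize}
That last comparison needs two things. First, the suffix must be sorted by $r_i$; the algorithm switches to greedy rule B at $t$. Second, the minimum-$r$ die must lie before $t$, which step 1 of the algorithm ensures.

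\textbf{Your algorithm differs as well.} Pure greedy rule A can, after the crossover, place a die with larger $r_i$ first because it kills other colors. So even the paper's suffix argument does not apply to your algorithm as stated. You would need either to modify the algorithm in this way or to find a genuinely new bound for greedy A's suffix.
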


Our algorithm for the \(d\)-ary Unanimous Vote problem, in the case \(d = 2\), is almost exactly the greedy Algorithm 1 of \cite{dumankeles2026}; the only difference is step \(1\) of our algorithm, which does not have an analogue in theirs. The greedy rule of \cite{dumankeles2026} can be viewed in two ways, which are equivalent for \(d = 2\):
\begin{itemize}
    \item Greedy rule A: choose a die that has maximal probability of having a different result from the previously chosen dice, conditioned on all previously chosen dice being equal.
    \item Greedy rule B: choose a die that has minimal probability of a color that has maximal bias at the considered position.
\end{itemize}
To clarify greedy rule B, we can generalize the notion of bias at position \(k\) to mean the color that has a maximum likelihood of the first \(k-1\) rolls identically showing this color (biases may be tied among a subset of the colors). We use \textit{both} of these greedy rules to construct the prefix and suffix of the permutation, respectively, separated by a position called the \textit{crossover point}.

\begin{definition}[\cite{dumankeles2026}]
    Let \(\pi\) be a permutation. Define \(t\geq 2\), the \textnormal{crossover point} of \(\pi\), to be minimal such that, for any \(\pi'\) identical to \(\pi\) in positions \(1,\ldots,t-1\), we have
    \[
        \Pr[\cost(\pi') > t \mid \cost(\pi') > t - 1] > \frac{1}{2}
    \]
\end{definition}

\noindent
Note first that such a \(t\) may not always exist, depending on \(\pi\) and the distribution of the dice. We will sometimes refer to the crossover point \(t\) of \(\pi\) when only a prefix of \(\pi\) has been defined; in that case, note that if there is a position satisfying the inequality of the definition, \(t\) is independent of the suffix of \(\pi\). The idea of the crossover point was originally used to show the 1.5-approximation of Claim 6.3 in \cite{dumankeles2026}, which they use to upper bound the multiplicative adaptivity gap. The analysis in \cite{dumankeles2026} can also be used to prove a 1-additive approximation bound, with respect to the optimal adaptive strategy, for the \(d = 2\) case. Although our algorithm, in the case \(d=2\), is almost exactly Algorithm 1 in \cite{dumankeles2026}, our analysis bears more similarity to the proof of the 1.5-approximation of Claim 6.3.\par
We will need the following lemma before stating the algorithm. This lemma will also be key to the analysis.
\begin{lemma}\label{lem: properties after t}
    Let \(\pi\) be a permutation with a position \(z\geq 2\) such that, for any \(\pi'\) identical to \(\pi\) in positions \(1,\ldots,z-1\), \(\Pr[\cost(\pi')>z\mid\cost(\pi')>z-1] > 1/2\). There exists \(r\in\Omega\) such that
    \begin{enumerate}
        \item \(\frac{P_\pi(z-1;r)}{\sum_{c\in\Omega}P_{\pi}(z-1; c)}\geq \frac{1}{2}\)
        \item For any remaining die \(i\in[n]\backslash\pi([z-1])\), \(r_i\geq\frac{1}{2}\)
    \end{enumerate}
\end{lemma}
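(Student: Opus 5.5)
Throughout, write \(P(c)\eqdef P_\pi(z-1;c)\) and \(S\eqdef\sum_{c\in\Omega}P(c)\); the event \(\cost(\pi')>z-1\) is exactly the event that the first \(z-1\) rolls all show one color, so it has probability \(S\). We may assume \(S>0\), since otherwise the conditional probability in the hypothesis is undefined. Let the next die be \(i=\pi'(z)\), which can be any remaining die. Then
\[
\Pr[\cost(\pi')>z\mid\cost(\pi')>z-1]=\frac{\sum_{c\in\Omega}P(c)\,c_i}{S},
\]
because the process survives position \(z\) exactly when die \(i\) repeats the common color. Setting \(w_c\eqdef P(c)/S\) gives a probability vector on \(\Omega\). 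The hypothesis then says that \(\sum_c w_c c_i>1/2\) for every remaining die \(i\in[n]\setminus\pi([z-1])\). Note that at least one remaining die must exist, since otherwise no \(\pi'\) would have a position \(z\), so the hypothesis is nonvacuous. Let \(r\) be a color maximizing \(w_c\).

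For property 1, I fix any remaining die \(i\) and bound the weighted sum from above. I will use \(\sum_c w_c c_i\le \max_c w_c\sum_c c_i=w_r\). The middle equality uses that \(\sum_c c_i=1\), since die \(i\) has a distribution over \(\Omega\). Combining this with the hypothesis gives \(w_r>1/2\), which is (strictly stronger than) property 1. Since the weights sum to 1, \(r\) is therefore the unique maximizer.

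For property 2, I fix a remaining die \(i\) and split the sum into the \(r\) term and the rest. Using \(c_i\le 1-r_i\) for \(c\ne r\), I get
\[
\sum_c w_c c_i\le w_r r_i+(1-w_r)(1-r_i).
\]
This is a convex combination of \(r_i\) and \(1-r_i\). If \(r_i<1/2\), then \(r_i<1-r_i\), and \(w_r>1/2\) puts the larger weight on the smaller value. The combination is then at most \(\tfrac12 r_i+\tfrac12(1-r_i)=1/2\). This contradicts the hypothesis, so \(r_i\ge1/2\).

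The step needing the most care is property 2. The bound \(c_i\le 1-r_i\) is loose when the mass \(1-r_i\) is spread across several colors, but it is still an upper bound, so the contradiction goes through. It is also essential that a single color \(r\) works for every remaining die simultaneously. This holds because \(r\) is determined by the prefix alone, as the argmax of the \(w_c\), independently of which die \(i\) is considered.
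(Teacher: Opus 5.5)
Your proof is correct and matches the paper's argument: take $r$ to be a maximizer of $P_\pi(z-1;c)$, bound the conditional survival probability $\sum_c w_c c_i$ by $w_r$ to get property 1, and for property 2 use $c_i \le 1-r_i$ for $c\neq r$ to reduce to the convex combination $w_r r_i + (1-w_r)(1-r_i)$, which is at most $1/2$ when $r_i<1/2$. The only cosmetic difference is that you prove property 1 directly, obtaining the slightly stronger $w_r>1/2$, whereas the paper argues by contradiction.
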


\noindent
Note that taking \(z\) to be minimal gives the properties for the crossover point \(t\) in particular. Property (1) can be derived by algebraic manipulation from the inequality \(\Pr[\cost(\pi)>z\mid\cost(\pi)>z-1] > 1/2\), and property (2) can be proven using property (1) along with the inequality \(\Pr[\cost(\pi)>z\mid\cost(\pi)>z-1] > 1/2\). We can now state our algorithm; see \Cref{alg: greedy}.

\begin{algorithm}
\caption{1-Additive Approximation Algorithm for \(d\)-ary Unanimous Vote}\label{alg: greedy}
\vspace{2pt}
\begin{enumerate}
    \item If there is \(b\in\Omega\) such that \(b_i \geq \frac{1}{2}\) for all \(i\in[n]\), sort the dice in non-decreasing order of \(b_i\) and return.
    \item Construct a permutation \(\pi\), starting from position 1, by repeatedly choosing the next die using greedy rule A, until either all dice have been chosen or placing the chosen die in the next position would cause it to become the crossover point \(t\).
    \item If \(t\) was reached, let \(r\) be the color guaranteed by \Cref{lem: properties after t}; from position \(t\) forward, add all dice not already chosen (for positions \(1,\ldots,t-1\)) in non-decreasing order of \(r_i\).
    \item Return \(\pi\).
\end{enumerate}
\vspace{-8pt}
\end{algorithm}

\noindent
Note that because of the properties in \Cref{lem: properties after t}, step 3 of \Cref{alg: greedy} is precisely greedy rule B. Note also that in step 2, if placing the die chosen by greedy rule A into the next position would cause it to become the crossover point \(t\), then by the definition of greedy rule A, placing any other die in that position would also cause it to become the crossover point. Thus \(t\) is, in fact, the crossover point of the final constructed permutation \(\pi\). Finally, note that on step 2 the algorithm can use any first die; this is intentional.\par
The main idea behind the proof of the 1-additive bound is that before the crossover point, the expected cost corresponds to a partial geometric series of powers of \(\frac{1}{2}\). After the crossover point, we may incur a large amount of cost in expectation, but much of this cost is also incurred by an optimal adaptive strategy for a relaxed problem: finding some \(X_i\neq r\) while rolling at least two dice (where \(r\) is the color guaranteed by \Cref{lem: properties after t}). This strategy has expected cost \(\OPT_R\), a lower bound on \(\OPT\). The cost not charged to \(\OPT_R\) forms the tail of the geometric series, which altogether sums to at most \(1\). \Cref{thm: greedy less than opt plus one} directly implies the following corollary.
\begin{corollary}\label{thm: adgap upper bound}
    The adaptivity gap of the \(d\)-ary Unanimous Vote problem is at most \(\frac{3}{2}\), and its additive adaptivity gap is at most \(1\).
\end{corollary}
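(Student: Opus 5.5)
The additive bound is immediate from \Cref{thm: greedy less than opt plus one}. The permutation $\pi$ produced by \Cref{alg: greedy} is a non-adaptive strategy, so the optimal non-adaptive cost is at most $\expect*{\cost(\pi)} \leq \OPT + 1$. Subtracting $\OPT$ gives a difference of at most $1$ for every instance, and taking the supremum bounds the additive adaptivity gap by $1$.

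For the multiplicative bound we need a lower bound on $\OPT$. The key observation is that $\OPT \geq 2$ whenever $n \geq 2$. Any strategy, adaptive or not, must roll at least two dice before it can observe two distinct colors. The only way to stop earlier is to have rolled all dice, which requires $n = 1$.

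With this, for $n \geq 2$ the ratio satisfies
\[
\frac{\expect*{\cost(\pi)}}{\OPT} \leq \frac{\OPT + 1}{\OPT} = 1 + \frac{1}{\OPT} \leq 1 + \frac{1}{2} = \frac{3}{2}.
\]
Since the optimal non-adaptive cost is at most $\expect*{\cost(\pi)}$, the same bound holds for its ratio to $\OPT$.

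The case $n = 1$ is trivial. Every strategy rolls the single die once, so both costs equal $1$ and the ratio is $1$.

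The only step needing care is the claim $\OPT \geq 2$. It holds because the stopping condition of the $d$-ary Unanimous Vote problem requires either two distinct observed colors or all $n$ dice rolled. Both events need at least two rolls once $n \geq 2$, so every realization incurs cost at least $2$.
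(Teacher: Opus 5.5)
Your proposal is correct and follows the paper's own argument: the additive bound comes straight from \Cref{thm: greedy less than opt plus one}, and the multiplicative bound uses $\OPT \geq 2$ to get $\OPT + 1 = \OPT\left(1 + \frac{1}{\OPT}\right) \leq \frac{3}{2}\OPT$. Your separate treatment of the trivial case $n = 1$ is a small detail the paper leaves implicit; there $\OPT = 1$, and the theorem's bound alone would only give ratio $2$.
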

\noindent
The multiplicative result follows from \(\OPT\geq 2\), so \(\OPT + 1 = \OPT\left(1 + \frac{1}{\OPT}\right)\leq \frac{3}{2}\OPT\).

\subsection{Polynomial-Time Approximation Scheme and Hardness}

We can extend the ideas of \Cref{thm: greedy less than opt plus one} to design a PTAS. The main idea is to brute force the first \(\lceil1/\varepsilon\rceil\) positions of the permutation, then run \Cref{alg: greedy} on the remaining portion.

\begin{theorem}\label{thm: d-ary ptas}
    There is a polynomial-time approximation scheme for the \(d\)-ary Unanimous Vote problem.
\end{theorem}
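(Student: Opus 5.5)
We propose the following algorithm for a fixed \(\varepsilon>0\). Let \(m\eqdef\lceil 1/\varepsilon\rceil\). Enumerate every ordered sequence of \(\min(m,n)\) distinct dice; there are at most \(n^{m}\) of them, which is polynomial for fixed \(\varepsilon\). For each prefix \(\tau\), complete it to a permutation \(\pi_\tau\) by running \Cref{alg: greedy} on the remaining dice, where the conditioning begins from the state reached after \(\tau\). Concretely, in greedy rule A, in the definition of the crossover point, and in the choice of \(r\) from \Cref{lem: properties after t}, we replace the unconditioned quantities \(P(k;c)\) by the products that include the prefix \(\tau\). We then return the cheapest \(\pi_\tau\), evaluating each expected cost exactly using the \(d\)-ary analogue of \Cref{lem: sequence cost}: \(\expect*{\cost(\pi)} = 1+\sum_{k=1}^{n-1}\sum_{c\in\Omega}P_\pi(k;c)\).

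For the analysis we compare against an optimal \emph{non-adaptive} permutation \(\pi^*\), since a PTAS concerns the non-adaptive optimum. Take \(\tau\) to be the first \(m\) dice of \(\pi^*\). Write \(q\eqdef\Pr[\cost(\pi^*)>m]=\sum_c P_{\pi^*}(m;c)\). The cost splits as \(\expect*{\cost(\pi^*)} = A + q\cdot C^*\), where \(A\) is the expected cost contributed by the first \(m\) positions and \(C^*\) is the expected number of additional rolls conditioned on having gone past position \(m\) without a disagreement. Conditioned on that event, the remaining problem is a \(d\)-ary Unanimous Vote instance in which only the colors that are still possible matter. It is effectively an instance where we roll until we see a die disagreeing with a ``hidden'' color \(c\), drawn with prior probabilities proportional to \(P_{\pi^*}(m;c)\).

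The main step is a conditional version of \Cref{thm: greedy less than opt plus one}. We want to show that the greedy completion of \(\tau\) has conditional remaining cost at most \(C_{\mathrm{OPT}}+1\), where \(C_{\mathrm{OPT}}\le C^*\) is the optimal adaptive remaining cost. We would redo the argument of \Cref{thm: greedy less than opt plus one} with prior weights \(P_\tau(m;c)\) in place of the trivial start. Before the new crossover point, the conditional survival probabilities are at most \(1/2\) per step, giving a geometric series. After it, \Cref{lem: properties after t} applied to \(\pi\) with \(z\ge m+1\) supplies the color \(r\), and we charge the cost to the relaxed problem of finding a die with \(X_i\neq r\). One subtlety is that the relaxed problem required ``rolling at least two dice''; here the prefix already has length \(m\ge 2\), so that requirement holds automatically. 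With this step, \(\expect*{\cost(\pi_\tau)}\le A+q(C^*+1)=\expect*{\cost(\pi^*)}+q\).

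To finish, we need \(q\le\varepsilon\,\expect*{\cost(\pi^*)}\). Since every permutation rolls at least \(\min(m,n)\) dice whenever \(\cost>m-1\), we have \(\expect*{\cost(\pi^*)}\ge\sum_{k\le m}\Pr[\cost>k-1]\ge m q\). Hence \(q\le \expect*{\cost(\pi^*)}/m\le\varepsilon\,\expect*{\cost(\pi^*)}\), which gives the \((1+\varepsilon)\) bound. If \(n\le m\), we simply enumerate all permutations.

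We expect the main obstacle to be the conditional generalization of \Cref{thm: greedy less than opt plus one}. Its proof is stated for a fresh start, and we must check that the crossover argument and the \(\OPT_R\) lower bound remain valid with nonuniform priors over colors induced by the prefix. If this causes trouble, we would first try treating the prefix as a single merged ``die'' whose color distribution is proportional to \(P_\tau(m;\cdot)\), and then apply the original theorem to this modified instance directly.
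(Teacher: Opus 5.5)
Your proposal is correct and matches the paper's argument. The paper's \Cref{alg: d-ary ptas} is exactly your scheme of enumerating all prefixes of length $\lceil 1/\varepsilon\rceil$ and completing each with \Cref{alg: greedy} conditioned on the prefix, and its analysis is the conditional rerun of the proof of \Cref{thm: greedy less than opt plus one} you describe (step-1 case, no-crossover case, crossover case), giving $\mathbb{E}[\cost(\pi')]\le\mathbb{E}[\cost(\pi^*)]+\rho^*_K$ together with $\rho^*_K\le\mathbb{E}[\cost(\pi^*)]/K$ (\Cref{lem: opt over k}); the only cosmetic difference is that the paper charges the post-crossover $r$-terms directly to $\pi^*$'s survival probabilities via \Cref{lem: r at most rho} rather than to an adaptive optimum of the post-prefix subproblem, which is the same sorted-by-$r_i$ bound carried one step further.
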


\noindent
See Appendix \ref{app: d-ary ptas} for the proof of \Cref{thm: d-ary ptas}. The most natural next question is whether the problem can be solved exactly in polynomial time; in Appendix \ref{app: d-ary np-hard} we show that if \(d\) is part of the input, solving the problem is \(\mathsf{NP}\)-hard. The main idea is to correspond \(\Pr[X_i=c]=0\) with die \(i\) ``covering'' color \(c\), and reduce from Exact Cover by 3-Sets.

\subsection{The Unlimited-Rolls Variant}

It is natural to consider the analogue of the Unlimited-Flips extension in the \(d\)-ary case: given a collection of \(n\) dice, each of which can be rolled \textit{arbitrarily many} times, what sequence of dice minimizes the expected number of rolls until seeing two distinct colors? It is relatively easy to extend the idea of \Cref{thm: greedy less than opt plus one} to show that a greedy sequence (using one of greedy rules A and B) achieves a 1-additive approximation to the adaptive optimum of the Unlimited-Rolls variant.\par
That is, either there is some color \(r\) (as in \Cref{lem: properties after t}) such that all dice have \(r_i\geq \frac{1}{2}\), or there is not. If there is, the sequence \(\sigma_B\) using only a die \(i'\) with minimal \(r_i\) achieves a 1-additive approximation, since we can upper bound its cost as follows:
\begin{align*}
    \expect{\cost(\sigma_B)} = 2 + \sum_{k=2}^\infty r_{i'}^k + \sum_{c\in\Omega\backslash\{r\}}\sum_{k=2}^\infty c_{i'}^k
    \leq \underbrace{2 + \sum_{k=2}^\infty r_{i'}^k}_{\leq\OPT} + \underbrace{\sum_{k=2}^\infty \left(1-r_{i'}\right)^k}_{\leq\frac{1}{2}}\leq \OPT + \frac{1}{2}
\end{align*}
If there is no such \(r\in\Omega\), we have at least a \(\frac{1}{2}\) probability of stopping on each roll, conditioned on that roll occurring (by an almost identical version of \Cref{lem: properties after t}). The expected cost in this case is at most \(2 + \frac{1}{2} + \frac{1}{4} + \ldots = 3\), and noting that \(\OPT\geq 2\) gives the result. This leads to the following corollary:
\begin{corollary}\label{cor: unlimited-roll adaptivity gap upper bound}
    The adaptivity gap of the Unlimited-Rolls \(d\)-ary Unanimous Vote problem is at most \(\frac{3}{2}\), and its additive adaptivity gap is at most \(1\).
\end{corollary}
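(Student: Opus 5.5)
The plan is to reduce both claims to the single inequality \(\expect{\cost(\sigma)} \leq \OPT + 1\) for a suitable non-adaptive sequence \(\sigma\). Here \(\OPT\) is the optimal adaptive cost of the Unlimited-Rolls variant. Every strategy rolls at least twice before it can see two distinct colors, so \(\OPT \geq 2\). The multiplicative bound then follows exactly as in \Cref{thm: adgap upper bound}: \(\OPT + 1 = \OPT(1 + 1/\OPT) \leq \frac{3}{2}\OPT\). The additive bound is the inequality itself. I would split into two cases according to whether there is a color \(r\in\Omega\) with \(r_i \geq \frac{1}{2}\) for every die \(i\).

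In the first case, take \(\sigma_B\) to repeatedly roll a single die \(i'\) minimizing \(r_i\), and use the displayed bound \(\expect{\cost(\sigma_B)} \leq 2 + \sum_{k\geq 2} r_{i'}^k + \sum_{k\geq 2}(1-r_{i'})^k\). The last sum is at most \(\frac{1}{2}\) because \(1 - r_{i'} \leq \frac{1}{2}\). The step needing care is the lower bound \(\OPT \geq 2 + \sum_{k\geq 2} r_{i'}^k\). I would prove it by relaxation. Any adaptive strategy must roll at least twice, and it must keep rolling as long as all outcomes so far equal \(r\). Along that all-\(r\) path the adaptive strategy is effectively a fixed sequence, and each roll on it shows \(r\) with probability at least \(r_{i'}\). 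Hence the probability of surviving \(k\) rolls is at least \(r_{i'}^k\). Summing tail probabilities gives the claim. Together these give \(\expect{\cost(\sigma_B)} \leq \OPT + \frac{1}{2}\).

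In the second case, use greedy rule A at every position: always roll a die maximizing the conditional probability of differing from the common color seen so far. I would redo the argument of \Cref{lem: properties after t} for an infinite sequence. If at some position \(z \geq 2\) the conditional continuation probability exceeded \(\frac{1}{2}\) for every choice of next die, there would be a color \(r\) with every die having \(r_i \geq \frac{1}{2}\). Since dice are reusable, all dice remain available, so this would contradict the case assumption. Therefore from roll 2 onward each roll, given that it occurs, stops with probability at least \(\frac{1}{2}\). Summing tail probabilities, the expected cost is at most \(2 + \frac{1}{2} + \frac{1}{4} + \cdots = 3 \leq \OPT + 1\).

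The main obstacle is this last step: checking that the proof of property (2) in \Cref{lem: properties after t} carries over verbatim. That proof relies on property (1) (a dominant color \(r\) with at least half the conditional mass) together with the continuation inequality; I need both to yield \(r_i \geq \frac{1}{2}\) for every die rather than only the unused ones. Since the paper calls the unlimited version "almost identical," I expect this to go through directly.
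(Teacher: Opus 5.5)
Your proposal is correct and follows the paper's argument essentially exactly: the same case split on whether some color $r$ has $r_i \geq \frac{1}{2}$ for all dice, the single-die sequence $\sigma_B$ giving $\OPT + \frac{1}{2}$ in the first case, and the $2 + \frac{1}{2} + \frac{1}{4} + \cdots = 3$ bound via the unlimited analogue of \Cref{lem: properties after t} in the second. You also supply two steps the paper leaves implicit, and both are correct: the relaxation argument for $\OPT \geq 2 + \sum_{k\geq 2} r_{i'}^k$, and the observation that every die remains available, so property (2) applies to all dice.
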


\noindent
We conjecture that the exact bounds for the adaptivity gap and additive adaptivity gap for arbitrary \(d\) match the \(d = 2\) case: \(1.2\) and \(\frac{1}{2}\), respectively.

It is also natural to wonder whether either of greedy rules A or B is optimal for the Unlimited-Rolls variant, since they are both optimal (and equivalent) for the Unlimited-Flips variant. It is easy to construct counterexamples to both rules; we do so in Appendix~\ref{app: counterexamples}.

\section{Finite Coupon Collection Problem}\label{sec: fccp}

The Unanimous Vote problem can also be generalized in the following way: instead of 
rolling dice in a fixed order until we either get two distinct colors or have rolled all dice, we roll the dice in a fixed order until we have either gotten (collected) {\em all} $d$ distinct colors, or have rolled all dice. We call this the
\textit{Finite Coupon Collection} problem. We give an \(\mathcal{O}(\log d)\)-approximation algorithm for this problem.

The idea behind the algorithm is to reduce to the Submodular Ranking problem of Azar and Gamzu~\cite{azar2011}. Submodular Ranking takes as input a ground set \([n]\) and a collection of $m$ monotone, submodular set functions \(f^i:2^{[n]}\to [0,1]\) for \(i=1,\ldots,m\). Each function must satisfy \(f^i(\varnothing) = 0\) and \(f^i([n]) = 1\). 
Associated with each function \(f^i\) is a weight \(w_i\in \mathbb{R}_+\). 
The goal is to output a permutation of the elements \(\pi : [n]\to[n]\) that minimizes the weighted sum of the ``cover times'' of the functions. That is, $\pi$ must minimize the following sum: 
\begin{equation}
\label{eq:submodrankcost}
    \sum_{i=1}^mw_i\min\{1\leq j\leq n : f^i(\{\pi(1),\ldots,\pi(j)\}) = 1\}
\end{equation}
Azar and Gamzu presented an \(\mathcal{O}(\log1/\varepsilon)\)-approximation algorithm for Submodular Ranking, where
\[
    \varepsilon\eqdef \min_{
        S\subseteq[n],\ 1\leq j\leq n,\ 1\leq i\leq m
    }\{f^i(S\cup\{j\}) - f^i(S) : f^i(S\cup\{j\}) > f^i(S) \}
\]
In words, \(\varepsilon\) is the smallest positive marginal value a function $f^i$ can gain from any single element. Their algorithm is greedy: 
for the $k$\textsuperscript{th} item in the permutation, where $S$ is the set of $k-1$ items already chosen, they choose the item $j$ maximizing the following quantity:
\begin{align}
\label{agscore}
    \sum_{i:f^i(S)<1}w_i~\frac{f^i(S\cup\{j\})-f^i(S)}{1-f^i(S)}
\end{align}

\begin{theorem}\label{thm: fccp approximation}
    There is a polynomial-time \(\mathcal{O}(\log d)\)-approximation algorithm for the Finite Coupon Collection problem.
\end{theorem}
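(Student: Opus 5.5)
The plan is to reduce to Submodular Ranking using one function per realization, in the spirit of the $Q$-value approach. Write the expected cost of a permutation $\pi$ as
\[
\expect*{\cost(\pi)} = \sum_{j=0}^{n-1}\Pr\bigl[\text{dice } \pi(1),\ldots,\pi(j) \text{ do not show all colors that will appear among all } n \text{ dice, or not all } d\bigr].
\]
Since we stop either when all $d$ colors are seen or when all dice are rolled, the cost on a realization $x=(x_1,\ldots,x_n)$ is the first $j$ at which the set of colors seen, $A_x(\{\pi(1),\ldots,\pi(j)\})$, equals $B_x\eqdef A_x([n])$. Indeed, if $B_x=\Omega$ this is exactly the collection time. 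If $B_x\neq\Omega$, then $A_x$ first reaches $B_x$ no later than step $n$, but we still roll all $n$ dice. So I would define
\[
f^x(S)\eqdef \frac{|A_x(S)|}{|B_x|}
\]
when $B_x=\Omega$. To make the cover time exactly $\min(n,\text{collection time})$ in the other case, I would instead set $f^x(S)\eqdef \tfrac{|A_x(S)|}{|B_x|+1}+\tfrac{1}{|B_x|+1}\mathbf{1}[S=[n]]$. This is a coverage function plus a term that is monotone and submodular: it is $0$ except on $[n]$, so its marginals are nonincreasing. Each realization gets weight $w_x=\Pr[x]$.

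With these choices, each $f^x$ is monotone submodular with $f^x(\varnothing)=0$ and $f^x([n])=1$. Its cover time under $\pi$ equals $\cost(\pi)$ on realization $x$, so objective \eqref{eq:submodrankcost} equals $\expect*{\cost(\pi)}$ exactly. Every positive marginal is at least $1/(d+1)$, so $\varepsilon\ge 1/(d+1)$. The Azar--Gamzu guarantee then gives an $\mathcal{O}(\log d)$-approximation to the best permutation, which is the optimum of the Finite Coupon Collection problem.

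The remaining issue is running the greedy rule, because there are exponentially many functions. For a realization that is not yet covered, the normalized marginal of die $j$ is $\mathbf{1}[x_j\notin A_x(S)]/(|B_x|-|A_x(S)|)$, with the obvious modification for the $\mathbf{1}[S=[n]]$ term. So score \eqref{agscore} equals
\[
\sum_{c\in\Omega} c_j\,\expect*{\frac{\mathbf{1}[c\notin A_x(S)]}{1+|B'_x|}},
\]
where $B'_x$ is the set of colors other than $c$ that appear among dice outside $S\cup\{j\}$ but not in $S$. This expectation involves the distribution of the number of distinct colors among independent, non-identical dice, which I do not expect to have a simple closed form. This is the step I expect to be the main obstacle.

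My first attempt would be to show the Azar--Gamzu analysis tolerates an approximate greedy step: choosing an element whose score is within a constant factor of the maximum should still give $\mathcal{O}(\log 1/\varepsilon)$. I would then estimate each score by sampling realizations, or by conditioning on $S$ and sampling only the dice outside $S$. Each summand lies in $[0,1]$, and scores of the relevant elements are not too small, so polynomially many samples should suffice with high probability. If that lower bound is problematic, a fallback is an exact dynamic program for the distribution of $|B'_x|$. Such a program is polynomial when $d$ is $\mathcal{O}(\log n)$. For larger $d$, I would try replacing $1/(1+|B'_x|)$ by a surrogate that is within a constant factor of it, is still a valid potential, and has marginals computable in closed form. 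One such candidate is $1/(1+\expect{|B'_x|})$.
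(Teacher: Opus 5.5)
\textbf{Gap 1: the functions for realizations that miss a color are not submodular.} The term $\mathbf{1}[S=[n]]$ is supermodular: its marginal is $0$ except when the last die is added, where it jumps to $1$. The sum with the coverage part is not submodular either. Take $n=3$, $d=2$, and all three dice showing the same color. Then $f^x(\{1\})=f^x(\{1,2\})=f^x(\{1,3\})=\tfrac12$ and $f^x([3])=1$. So die $3$ has marginal $0$ at $\{1\}$ but $\tfrac12$ at $\{1,2\}$.

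No cleverer choice can fix this. Suppose a monotone submodular $f$ has $f(S)=1$ only for $S=[n]$. Then every $j$ has positive marginal at $[n]\setminus\{j\}$. By submodularity, $j$ then has marginal at least $\varepsilon$ at every set avoiding $j$, so $1=f([n])\ge n\varepsilon$. Any faithful encoding of these realizations therefore forces $\log(1/\varepsilon)\ge\log n$. That is exactly the $\mathcal{O}(\log n)$ bound the paper first gets with $f^\alpha(S)=|S|/n$.

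The missing idea is that $\cost(\pi,x)=n$ for every $\pi$ when $x\notin\mathcal{G}$. These realizations add the same constant $n\Pr[x\notin\mathcal{G}]$ to every permutation's cost, so you simply omit them. Run Submodular Ranking only on the coverage functions $|A_x(S)|/d$ for $x\in\mathcal{G}$, with weights $\Pr[x\mid x\in\mathcal{G}]$. Adding a common nonnegative constant to the algorithm's cost and to $\OPT$ cannot increase their ratio, and now $\varepsilon=1/d$.

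\textbf{Gap 2: the sampling step is not justified.} Your idea is the right one: sample to estimate scores, and use that the Azar--Gamzu analysis tolerates constant-factor approximate greedy choices. But your assertion that ``scores of the relevant elements are not too small'' is exactly what needs proof, and you leave it open. The paper supplies two arguments.
\begin{enumerate}
\item It uses rejection sampling to draw from $\mathcal{G}$. If $\Pr[\mathcal{G}]<1/n$, every permutation has expected cost between $n-1$ and $n$, so any output is fine. Otherwise, $n^2K$ draws yield $K$ accepted samples with high probability.
\item It analyzes the greedy only on the prefix where some remaining die has true score at least $1/n^2$. By submodularity and $f^x([n])=1$, the sum of the remaining scores upper-bounds the conditional probability of not yet being covered. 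So once all scores fall below $1/n^2$, the rest of the permutation costs at most $n\cdot n\cdot n^{-2}=1$ in expectation.
\end{enumerate}
On the prefix, $K=\Theta(n^4\log n)$ samples and Hoeffding's inequality ensure, with high probability, that each chosen die has at least a quarter of the maximum score. The resulting additive constants are absorbed because $\OPT\ge 1$.

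Your fallbacks are unnecessary and would not work in general. The dynamic program is exponential in $d$. The surrogate $1/(1+\mathbb{E}|B'_x|)$ does not come from per-realization submodular functions, so the Azar--Gamzu guarantee does not apply to it.
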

\noindent
We define the following notation to use in the proof of \Cref{thm: fccp approximation}.
\begin{definition}
   Let $X_1,\ldots,X_n$ be independent random variables corresponding to the outcomes of the rolls of dice $1, \ldots, n$ respectively.
    A {\em realization} of the $X_i$ is a vector $\alpha \in [d]^n$ which corresponds to having $X_i=\alpha_i$ for each $i \in [n]$. For $\alpha \in [d]^n$ and  \(S\subseteq [n]\) define
    \[
        \kappa(\alpha,S)\eqdef \bigcup_{i\in S}\{\alpha_i\}
    \]
\end{definition}
\begin{definition}
    For a permutation \(\pi\) on \([n]\) and realization \(\alpha \in [d]^n\), define
    \[
        \cost(\pi,\alpha)\eqdef \min
        \left\{
            1\leq j \leq n : 
                \bigcup_{i=1}^j\left\{\alpha_{\pi(i)}\right\}
            = [d] \textnormal{ or }j = n
        \right\}
    \]
\end{definition}

\noindent
That is, \(\cost(\pi,\alpha)\) is the number of rolls that will be performed on realization $\alpha$ (when each $X_i=\alpha_i$), if 
dice are rolled in the order specified by \(\pi\) until either all colors are seen or all dice have been rolled.

\noindent
\textit{Proof of \Cref{thm: fccp approximation}:}
To introduce our basic approach, we first describe an exponential-time algorithm that achieves an $\mathcal{O}(\log n)$ approximation bound. We then describe how to modify it to get an approximation bound of $\mathcal{O}(\log d)$ that runs in polynomial time.

We reduce the Finite Coupon Collection problem to Submodular Ranking as follows. We construct, for each realization $\alpha \in [d]^n$, a monotone submodular function $f^{\alpha}:2^{[n]} \rightarrow [0,1]$. Let \(\mathcal{G}\) be the subset of realizations containing all colors, i.e., $\mathcal{G} = \{\alpha \in [d]^n : \kappa(\alpha,[n])=[d]\}$. For $\alpha \in \mathcal{G}$, we define $f^{\alpha}(S) \eqdef \frac{|\kappa(\alpha,S)|}{d}$.  For $\alpha \not\in \mathcal{G}$, we define $f^{\alpha}(S) \eqdef \frac{|S|}{n}$. For each realization $\alpha \in [d]^n$, let $w_{\alpha}$ be the probability of realization $\alpha$ in the Finite Coupon Collection instance. The ground set $[n]$, $d^n$ functions $f^{\alpha}$, and associated weights $w_{\alpha}$ constitute the instance of Submodular Ranking.

By the definition of $f^{\alpha}$, for $\alpha \in \mathcal{G}$, $f^{\alpha}(S) = 1$ iff for realization $\alpha$, subset $S$ of dice includes all colors ($\kappa(\alpha,S)=[d]$).  For $\alpha \not\in \mathcal{G}$, $f^{\alpha}(S) = 1$ iff $S = [n]$. 
In either case, $f^{\alpha}(S)=1$ iff in the Finite Coupon Collection instance, rolling all dice in $S$ on realization $\alpha$ would achieve the stopping condition.  

Clearly, for any permutation $\pi$ of $[n]$, the value of the Submodular Ranking objective function in (\ref{eq:submodrankcost}) for this instance is equal to the cost of $\pi$ for the original Finite Coupon Collection instance.  Thus running the $\mathcal{O}(\log 1/\varepsilon)$-approximation algorithm of Azar and Gamzu on our resulting Submodular Ranking instance yields an $\mathcal{O}(\log 1/\varepsilon)$-approximation to the optimal Finite Coupon Collection cost. Because of the way we defined the $f^{\alpha}$ for $\alpha \not\in \mathcal{G}$, $1/\varepsilon = n$. Therefore, we now have an exponential-time $\mathcal{O}(\log n)$-approximation algorithm for Finite Coupon Collection.

To improve the approximation factor to $\mathcal{O}(\log d)$, we note that for all $\alpha\notin\mathcal{G}$ and any permutation $\pi$, $\cost(\pi,\alpha)=n$. That is, the expected costs of permutations $\pi$ differ only according to their cost for realizations $\alpha \in \mathcal{G}$. Thus in the above reduction to Submodular Ranking, it is unnecessary to include the $f^{\alpha}$ for $\alpha \not\in \mathcal{G}$. One can omit the $f^{\alpha}$ with $\alpha \not\in \mathcal{G}$ when forming the Submodular Ranking instance and normalize the weights so that \(w_\alpha=\Pr[\alpha\mid \alpha\in\mathcal{G}]\); running the algorithm of \cite{azar2011} on this smaller instance will still yield an $\mathcal{O}(\log 1/\varepsilon)$-approximation for the Finite Coupon Collection instance. The difference is that now $\varepsilon = 1/d$, because its value depends only on the $f^{\alpha}$ with $\alpha \in \mathcal{G}$.


To achieve polynomial runtime we use sampling. Instead of explicitly computing the score (\ref{agscore}) for each element, for each greedy step we use a polynomially-sized sample of randomly generated realizations to estimate the score of every remaining element. We use rejection sampling to ensure \(\alpha\in\mathcal{G}\) for the realizations used in the estimate. We then choose the element with the highest estimated score.

Because we are only estimating the score (\ref{agscore}) of each element in each step, our choice is only approximately greedy. To complete the proof, it is therefore necessary to show that this approach will still yield an \(\mathcal{O}(\log 1/\varepsilon)\)-approximation. We omit the remaining details here and present them in Appendix~\ref{app: sampling}; the arguments are very similar to the arguments used in Appendix A.2 of~\cite{ghuge2022}, which is concerned with an almost identical difficulty. Our proof relies on the fact that the algorithm of \cite{azar2011} still achieves an $\mathcal{O}(\log 1/\varepsilon)$-approximation if the estimates of the scores are within a constant factor of their true values.
\hfill\(\square\)

\section{Generalizing the Reduction to Submodular Ranking}\label{sec: reduction generalization}
Our reduction to Submodular Ranking in the previous section can be described more generally as a generic approach to designing approximation algorithms for non-adaptive stochastic probing problems, such as non-adaptive SBFE.  In particular, it can be used for problems where there are $n$ probes, each probe has a random outcome in $[d]$ and can be performed at most once, and the goal is to minimize the expected number of probes until some stopping condition is reached. 

\par
The approach requires that you define, for any given realization $\alpha \in [d]^n$, a monotone submodular function $f^{\alpha}:2^{[n]} \rightarrow [0,1]$ such that $f^\alpha(\varnothing)=0$ and for all $S \subseteq [n]$, $f^{\alpha}(S) = 1$ iff on realization $\alpha$, probing the $i$ in $S$ achieves the stopping condition. It must be possible to evaluate $f^{\alpha}$ on any input $S$ in polynomial time, so that you can compute the (approximately) greedy choice at each step using the \(f^\alpha\) corresponding to a polynomially-sized set of sampled realizations \(\alpha\). The resulting approximation bound for the non-adaptive probing problem is $\mathcal{O}(\log 1/{\varepsilon})$, so the quality of the bound depends crucially on the magnitude of $\varepsilon$.
This approximation bound is with respect to the optimal {\em non-adaptive} strategy.

This approach is analogous to the $Q$-value approach introduced by Deshpande et al.\ as a method for obtaining approximation algorithms for adaptive SBFE problems \cite{deshpande2016}. The \(Q\)-value approach works by reducing SBFE to Stochastic Submodular Cover, through the construction of a single, (realization-dependent) monotone submodular utility function $g$ associated with the Boolean function being evaluated.
The function $g$ maps pairs $(S,\beta)$ to values in $[0,1]$, where $S \subseteq [n]$ is a set of items to be probed, and $\beta \in \Omega^S$ is a possible realization of those items. In SBFE problems, $\Omega \eqdef \{0,1\}$.
For a (full) realization $\alpha \in \Omega^{[n]}$ and $S \subseteq [n]$, let $\alpha_S$ be the partial realization setting each element $i \in S$ to $\alpha_i$.  Now, for $\alpha \in \Omega^{[n]}$, define $g^{\alpha}:2^{[n]} \rightarrow [0,1]$ to be such that $g^{\alpha}(S)=g(S,\alpha_S)$ for all $S \subseteq [n]$.
The function $g$ is required to have the following three properties: (1) $g(\varnothing,\zeta)=0$ where $\zeta$ is the empty realization, (2) for
all $\alpha \in \Omega^{[n]}$, $g^{\alpha}$ is a monotone submodular function, and (3) 
for all $S \subseteq [n]$ and $\beta \in \Omega^S$,
$g(S,\beta)=1$ iff after performing the probes in $S$, with outcomes according to $\beta$, no further probes would need to be performed (in the original SBFE instance).  

Once $g$ is constructed, the Adaptive Greedy algorithm of Golovin and Krause is used to solve the Stochastic Submodular Cover instance for $g$~\cite{golovin2011,hellerstein2021}, yielding an $\mathcal{O}(\log 1/\varepsilon)$ approximation bound for the original adaptive SBFE problem.  Here $\varepsilon$ is the smallest positive marginal value $g$ can gain from any single element.  The value $1/\varepsilon$ is essentially the same as the {\em goal value} quantity $Q$ defined in~\cite{deshpande2016}\footnote{Deshpande et al.\ actually define $g$ as mapping to a non-negative integer value, where for each realization $\alpha$ of all the elements in $[n]$, $g([n],\alpha)=Q$ for some fixed integer value $Q$. Their proof extends to the result as described here.}.
This $\mathcal{O}(\log 1/\varepsilon)$ approximation bound is with respect to the optimal {\em adaptive} strategy.

We observe that any $g$ used in the $Q$-value approach can also be used in our reduction to Submodular Ranking, by simply defining, for each realization $\alpha$ and subset \(S\), $f^{\alpha}(S) \eqdef g^{\alpha}(S)$.
Therefore, a consequence of our reduction technique is that $\mathcal{O}(\log 1/\varepsilon)$-approximation  bounds achieved for {\em adaptive} SBFE problems using the $Q$-value approach immediately imply $\mathcal{O}(\log 1/\varepsilon)$-approximation bounds for the analogous {\em non-adaptive} SBFE problems. For example, Deshpande et al.\ use the \(Q\)-value approach to achieve an $\mathcal{O}(\log T)$-approximation algorithm for the adaptive SBFE problem for decision tree representations of Boolean functions, where $T$ is the size of the decision tree. From the above, it follows that there is also an $\mathcal{O}(\log T)$-approximation algorithm for the non-adaptive version of the problem.

In contrast to the $Q$-value approach, 
which only gives an $\mathcal{O}(\log 1/\varepsilon)$ approximation bound for independent query distributions, our sampling-based reduction to Submodular Ranking for non-adaptive problems works for any distribution to which we have sample access.   
However, as with the $Q$-value approach, our approach only yields good approximation bounds if it is possible to construct appropriate submodular functions with small goal values.  This is not possible for all problems; see, e.g., cases considered in \cite{deshpande2016}.

\section{Open Questions}  
First, consider the $d$-ary Unanimous Vote problem for constant $d$. It is still open whether there is an efficient exact algorithm for this problem (the \(\mathsf{NP}\)-hardness result in Appendix~\ref{app: d-ary np-hard} does not apply to constant $d$).
It is also open whether there is a simple way to describe the optimal sequence for the Unlimited-Rolls version of this problem;
judging from the binary case, allowing for unlimited rolls should make the problem easier.

We also leave open the computational hardness of the Finite Coupon Collection problem.  Our Submodular Ranking-based approach cannot take advantage of the independence of the input distribution, and we conjecture that stronger results can be obtained with other techniques.

A generalization of Unanimous Vote not addressed in this paper is a variant of Finite Coupon Collection, with the change that we also stop rolling dice if it becomes impossible to collect all colors (for example, if there are two dice left to roll but three colors left to collect). This new stopping condition appears to make the problem much more challenging. We note that running our algorithm for the Finite Coupon Collection problem on this variant gives a \(\max\left\{\mathcal{O}(\log d), \frac{n}{n-d+2}\right\}\)-approximation, since the approximation ratio for realizations \(\alpha\in\mathcal{G}\) is \(\mathcal{O}(\log d)\), and any permutation trivially gives a \(\frac{n}{n-d+2}\)-approximation for \(\alpha\notin\mathcal{G}\).

In the \(d\)-ary Unanimous Vote problem, you stop rolling dice when you have seen two different colors.  What about a {\em Duplicate Detection} problem, where you instead stop rolling when you have seen the same color more than once? An easy observation is that for Duplicate Detection with $d=n$, the following tangentially related problem is at least as hard as computing the permanent of a matrix: compute the probability of getting all $n$ colors if you roll each of the $n$ dice.

Finally, our work on the Unlimited-Flips problem suggests a whole set of new open questions involving problems where each of $n$ probes can be performed an unlimited number of times, and the goal is to find an infinite probing sequence minimizing the expected number of probes until a certain stopping condition is reached. For these problems, there are questions involving properties such as computability and periodicity that do not arise when solutions are (finite) permutations. The results in this paper on the single-probe and unlimited-probe variants of the Unanimous Vote problem also suggest the following, more general question: what is the relationship between single-probe problems and their unlimited-probe variants?

\section*{Acknowledgements}

Thanks to Kunal Marwaha for a useful discussion on the \(d\)-ary Unanimous Vote problem. E.B. was partially supported by the NYU Tandon Undergraduate Summer Research Program, the NYU Tandon NextGenPhD Scholars program, and an NSF Graduate Research Fellowship under Award No. 2234660. This material is based upon work supported by the Air Force Office of Scientific Research under award number FA9550-23-1-0556.

\section*{LLM Usage Disclosure}

\href{https://claude.ai}{Claude} (Opus 4.6, 4.7, 5) was used to generate \MulAdGapLink{\texttt{MulAdaptivityGap.lean}} and \AddAdGapLink{\texttt{AddAdaptivityGap.lean}} using several lemmas provided by \href{https://aristotle.harmonic.fun}{Aristotle}. Claude (Opus 5) constructed the reduction in Appendix \ref{app: d-ary np-hard}, and used our ideas in \Cref{alg: greedy} to design the PTAS of \Cref{thm: d-ary ptas}. \href{https://gemini.google.com/app}{Gemini} assisted with \Cref{sec: periodicity} in the following way. We determined the periodicity of the special case $p_1 = \frac{1}{3},\ p_2 = \frac{3}{4}$ to be equivalent to the periodicity of the sequence specifying the number of powers of 2 between successive powers of 3. When asked about the periodicity of that sequence, Gemini suggested the link to mechanical words. LLMs were also used to check for mistakes and typos in proofs. LLMs did not write any part of this manuscript and the authors take full responsibility for its soundness.

\printbibliography

\appendix

\section{Missing Proofs From Section~\ref{sec: unlimited flips}}\label{app: missing section 2}

\begin{proof}[Proof of \Cref{lem: only use 1 and n}]
    Let \(\sigma_0 : \mathbb{Z}_+ \rightarrow [n]\) be a sequence of coins such that for some \(k\), \(\sigma_0(k) \neq 1\) and \(\sigma_0(k) \neq n\). Let \(k_0\) be the first such position. Consider changing the coin at position \(k_0\) (to either \(1\) or \(n\), which will be determined later) and note that \(H_{\sigma_0}(i)\) and \(T_{\sigma_0}(i)\) for \(i < k_0\) are unchanged; denote this new sequence \(\sigma_1\). We now show that \(\expect{\cost(\sigma_1)}\leq \expect{\cost(\sigma_0)}\). We can write:
    \begin{align*}
        \expect{\cost(\sigma_0)} - \expect{\cost(\sigma_1)} = \sum_{i=k_0}^\infty(H_{\sigma_0}(i) - H_{\sigma_1}(i)) + \sum_{i=k_0}^\infty(T_{\sigma_0}(i) - T_{\sigma_1}(i))
    \end{align*}
    We note that for any sequence \(\sigma\), and positive integers \(a < b \le  c\), \(H_\sigma(a, c) = H_\sigma(a, b - 1)H_{\sigma}(b, c)\) (and symmetrically for \(T_\sigma\)). We also note \(H_{\sigma_0}(a,b) = H_{\sigma_1}(a,b)\) and \(T_{\sigma_0}(a,b) = T_{\sigma_1}(a,b)\) for any \(a\leq b\) such that \(k_0\notin\llbracket a,b\rrbracket\). We can then continue:
    \begin{align*}
         &\sum_{i=k_0}^\infty(H_{\sigma_0}(i) - H_{\sigma_1}(i)) + \sum_{i=k_0}^\infty(T_{\sigma_0}(i) - T_{\sigma_1}(i))\\
         &= H_{\sigma_0}(k_0-1)\sum_{i=k_0}^\infty\left(H_{\sigma_0}(k_0,i) - H_{\sigma_1}(k_0, i)\right) + T_{\sigma_0}(k_0-1)\sum_{i=k_0}^\infty(T_{\sigma_0}(k_0,i) - T_{\sigma_1}(k_0,i))\\
         &= H_{\sigma_0}(k_0-1)\left(p_{\sigma_0(k_0)} - p_{\sigma_1(k_0)}\right)\left(1 + \sum_{i=k_0+1}^\infty H_{\sigma_0}(k_0+1,i)\right) + T_{\sigma_0}(k_0-1)\left(\bar{p}_{\sigma_0(k_0)} - \bar{p}_{\sigma_1(k_0)}\right)\left(1 +\sum_{i=k_0+1}^\infty T_{\sigma_0}(k_0+1,i)\right)\tag{\(\ast\)}
    \end{align*}
    For convenience, define
    \[
        \alpha \eqdef H_{\sigma_0}(k_0-1)\left(1 + \sum_{i=k_0+1}^\infty H_{\sigma_0}(k_0+1,i)\right),\quad\beta \eqdef T_{\sigma_0}(k_0-1)\left(1 + \sum_{i=k_0+1}^\infty T_{\sigma_0}(k_0+1,i)\right)
    \]
    and note that \(\alpha\) and \(\beta\) are constant with respect to the choice of coin at position \(k_0\). We can then continue from \((\ast)\):
    \[
        = \alpha\left(p_{\sigma_0(k_0)} - p_{\sigma_1(k_0)}\right) + \beta\left(\bar{p}_{\sigma_0(k_0)} - \bar{p}_{\sigma_1(k_0)}\right) = \alpha\left(p_{\sigma_0(k_0)} - p_{\sigma_1(k_0)}\right) + \beta\left(p_{\sigma_1(k_0)} - p_{\sigma_0(k_0)}\right) = \left(\alpha - \beta\right)\left(p_{\sigma_0(k_0)} - p_{\sigma_1(k_0)}\right)
    \]
    Recall that this expression is \(\expect{\cost(\sigma_0)} - \expect{\cost(\sigma_1)}\). If \(\alpha > \beta\), the expression is \(>0\) when \(p_{\sigma_0(k_0)} > p_{\sigma_1(k_0)}\), so we can set \(\sigma_1(k_0)\gets 1\) and have \(\expect{\cost(\sigma_1)}< \expect{\cost(\sigma_0)}\). Symmetrically, if \(\alpha < \beta\), we set \(\sigma_1(k_0) \gets n\) and can see that \(\expect{\cost(\sigma_1)} < \expect{\cost(\sigma_0)}\). If \(\alpha = \beta\), we can see that the two costs are equal regardless of \(\sigma_1(k_0)\), so we can set \(\sigma_1(k_0)\gets 1\text{ or } n\) arbitrarily.\par
    Now, let \(\ell\) be the number of positions \(i\) in \(\sigma_0\) such that \(1<\sigma_0(i)<n\). Either \(\ell\) is finite or not. If \(\ell\) is finite, we can repeat the above process \(\ell\) times in total to produce successive sequences \(\sigma_1,\sigma_2,\ldots,\sigma_\ell\) (for example, \(\sigma_2\) is defined by changing \(k_1\), the first non-extremal position in \(\sigma_1\), to either \(1\) or \(n\) such that \(\expect{\cost(\sigma_2)}\leq\expect{\cost(\sigma_1)}\)) each with one position changed from the last, and by the above we will have \(\expect{\cost(\sigma_0)}\geq\expect{\cost(\sigma_1)}\geq \expect{\cost(\sigma_2)}\geq \ldots\geq\expect{\cost(\sigma_\ell)}\).\par
    Now, if \(\ell\) is not finite, we can define the successive sequences (as above) \(\sigma_1,\sigma_2,\ldots\), producing \(\sigma_m\) for any \(m\in\mathbb{Z}_+\). Then, we can define sequence \(\sigma_\infty\) by
    \(
        \sigma_\infty(i)\eqdef \sigma_{i}(i)
    \).
    Note that we know \(\sigma_i(i) = \sigma_j(i)\) for \(1\leq i\leq j\), and that \(\sigma_\infty(i)\in\{1,n\}\) for all \(i\). We now show \(\expect{\cost(\sigma_\infty)}\leq \expect{\cost(\sigma_0)}\). Suppose for the sake of contradiction that \(\expect{\cost(\sigma_\infty)} > \expect{\cost(\sigma_0)}\), then there must exist some large enough \(M\) such that \(\sum_{i=0}^M\Pr[\cost(\sigma_\infty)>i]>\expect{\cost(\sigma_0)}\). Now, we also know
    \(
        \sum_{i=0}^M\Pr[\cost(\sigma_\infty)>i] =\sum_{i=0}^M\Pr[\cost(\sigma_M)>i] \leq \expect{\cost(\sigma_M)}
    \),
    implying \(\expect{\cost(\sigma_0)} < \expect{\cost(\sigma_M)}\), which is a contradiction since we know \(\expect{\cost(\sigma_0)}\geq \expect{\cost(\sigma_1)}\geq\ldots\geq \expect{\cost(\sigma_M)}\).
\end{proof}

\begin{proof}[Proof of \Cref{thm: greedy optimal}]
    Using \Cref{lem: only use 1 and n}, we can restrict to the case where there are two coins \(p_1 < p_n\). We consider two cases:
    \begin{enumerate}
        \item \(p_1 < p_n < \frac{1}{2}\)
        or \(\frac{1}{2}< p_1 < p_n\)
        \item
        \(p_1 \leq \frac{1}{2}\leq p_n\)
    \end{enumerate}
    \noindent
    \textbf{Case 1: \(p_1 < p_n < \frac{1}{2}\)} (the subcase \textbf{
        \(\frac{1}{2}< p_1 < p_n\)} is symmetric)\\
    We proceed by showing that any sequence \(\sigma\) where \(\sigma(k) = 1\) for some \(k\in\mathbb{Z}_+\) can be strictly improved. Consider the sequence \(\sigma'\) defined by
    \[
        \sigma'(i) = \begin{cases}
            n&i=k\\
            \sigma(i)&i\neq k
        \end{cases}
    \]
    We have, by \Cref{lem: sequence cost}
    \begin{align*}
        &\expect{\cost(\sigma)} - \expect{\cost(\sigma')}\\
        &= \sum_{i=1}^{\infty}(
            H_\sigma(i) + T_\sigma(i)
        ) - \sum_{i=1}^\infty(
            H_{\sigma'}(i)+T_{\sigma'}(i)
        )\\
        &= H(k-1)(p_1-p_n)\left(1+\sum_{i=k+1}^{\infty}H_{\sigma}(k+1,i)\right) + T(k-1)(\bar{p}_1-\bar{p}_n)\left(
            1 + \sum_{i=k+1}^{\infty}T_{\sigma}(k+1,i)
        \right)\\
        &= H(k-1)(p_1-p_n)\left(1+\sum_{i=k+1}^{\infty}H_{\sigma}(k+1,i)\right) - T(k-1)(p_1-p_n)\left(
            1 + \sum_{i=k+1}^{\infty}T_{\sigma}(k+1,i)
        \right)\\
        &= (p_1-p_n)\left(
            H(k-1)\left(1+\sum_{i=k+1}^{\infty}H_{\sigma}(k+1,i)\right) - T(k-1)\left(
            1 + \sum_{i=k+1}^{\infty}T_{\sigma}(k+1,i)
        \right)
        \right)\\
        &> 0
    \end{align*}
    where the last inequality follows from the fact that both factors are negative by the assumption that \(p_1 < p_n < \frac{1}{2}\).\par
    \noindent
    \textbf{Case 2: \(p_1 \leq \frac{1}{2}\leq p_n\)}\\
    We again proceed by showing that a sequence that violates the rule can be improved. Take \(\sigma\) to be a sequence that violates the rule, and take \(k\) to be the first position in \(\sigma\) where the rule is violated. Without loss of generality, suppose position \(k\) is \(0\)-biased and \(\sigma(k)=1\). Let \(\ell\) be maximal so that \(1=\sigma(k)=\sigma(k+1)=\ldots=\sigma(k+\ell)\). If coin \(1\) is the only coin used from position \(k\) forward, then \(\ell=\infty\).\par
    \noindent
    \textbf{Case 2.1: \(\ell < \infty\)}\\
    Consider swapping positions \(k\) and \(k+\ell+1\), and call the resulting sequence \(\sigma'\). We have
    \begin{align*}
        &\expect{\cost(\sigma)} - \expect{\cost(\sigma')}\\
        &= \sum_{i=k}^{k+\ell}(H_\sigma(i) +T_{\sigma}(i))-\sum_{i=k}^{k+\ell}(H_{\sigma'}(i)+T_{\sigma'}(i))\\
        &= H_\sigma(k-1)(p_1-p_n)\left(1+\sum_{i=k+1}^{k+\ell}H_{\sigma}(k+1,i)\right) + T_{\sigma}(k-1)(p_n - p_1)\left(
            1 + \sum_{i=k+1}^{k+\ell}T_\sigma(k+1,i)
        \right)\\
        &= H_\sigma(k-1)(p_1-p_n)\left(1+\sum_{i=k+1}^{k+\ell}H_{\sigma}(k+1,i)\right) - T_{\sigma}(k-1)(p_1 - p_n)\left(
            1+ \sum_{i=k+1}^{k+\ell}T_\sigma(k+1,i)
        \right)\\
        &=(p_1-p_n)\left(
            H_\sigma(k-1)\left(1+\sum_{i=k+1}^{k+\ell}H_{\sigma}(k+1,i)\right) - T_{\sigma}(k-1)\left(
            1+ \sum_{i=k+1}^{k+\ell}T_\sigma(k+1,i)
        \right)
        \right)\\
        &> 0
    \end{align*}
    where the final inequality follows from the fact that both factors are negative; the first factor is negative because \(p_1 < p_n\), and the second factor is negative because position \(k\) is 0-biased and by the definition of \(\ell\).\par
    \noindent
    \textbf{Case 2.2: \(\ell=\infty\)}\\
    Consider the sequence \(\sigma'\) (as in Case 1) defined by
    \[
        \sigma'(i) = \begin{cases}
            n&i=k\\
            \sigma(i)&i\neq k
        \end{cases}
    \]
    By the computation of Case 1, we have
    \begin{align*}
        &\expect{\cost(\sigma)} - \expect{\cost(\sigma')}\\
        &=(p_1-p_n)
        \left(
            H_\sigma(k-1)\left(1+\sum_{i=k+1}^{\infty}H_{\sigma}(k+1,i)
        \right) - T_{\sigma}(k-1)
        \left(
            1 + \sum_{i=k+1}^{\infty}T_\sigma(k+1,i)
        \right)
        \right)\\
        &>0
    \end{align*}
    The final inequality again follows from \(p_1 < p_n\), that position \(k\) is 0-biased, and the definition of \(\ell\).\par
    Finally, it is also easy to verify that, for a sequence using coin \(1\) at 1-biased positions and coin \(n\) at 0-biased positions, whether it uses coin \(1\) or \(n\) at a given unbiased position has no effect on its expected cost.
\end{proof}

\section{Adaptivity Gap and Additive Adaptivity Gap}
\label{app: adgap}

\noindent
Stating our results requires formalizing the notion of an \textit{adaptive strategy} and its \textit{cost}.

\begin{definition}[Adaptive strategy]
    An adaptive strategy for the Unlimited-Flips Unanimous Vote problem is a function
    \[
        a : \{H\}^*\cup\{T\}^*\to [n]
    \]
    where \(a(\epsilon)\) specifies a choice of first coin to flip, and for \(i\geq 1\), \(a\left(H^i\right)\) (resp. \(a\left(T^i\right)\)) specifies a choice of coin to flip if \(i\) previous flips have been performed, and all showed heads (resp. tails).
\end{definition}

\begin{definition}
    Let \(a\) be an adaptive strategy for a fixed instance of the Unlimited-Flips Unanimous Vote problem. \(\cost(a)\) is defined to be a random variable equal to the number of coin flips performed until seeing both a head and a tail, if coins are flipped according to \(a\).
\end{definition}

\noindent
The following collection of adaptive strategies will be of particular interest, because on any instance, one will have minimal expected cost among all adaptive strategies (will be optimal).

\begin{definition}
    Let \(a_i\), for \(1\leq i\leq n\), be the adaptive strategy that first flips coin \(i\), then, if the first flip shows heads (resp. tails), repeatedly flips coin \(1\) (resp. coin \(n\)).
\end{definition}

\begin{fact}\label{fact: adaptive cost}
    \[
        \expect*{\cost(a_i)} = 1 + \frac{p_i}{1-p_1} + \frac{1 - p_i}{p_n}
    \]
\end{fact}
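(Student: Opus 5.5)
We need to prove Fact: $\mathbb{E}[\cost(a_i)] = 1 + \frac{p_i}{1-p_1} + \frac{1-p_i}{p_n}$.

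The plan is to condition on the outcome of the first flip and reduce each branch to a geometric random variable. Strategy $a_i$ always performs the first flip of coin $i$, which contributes the leading $1$. With probability $p_i$ this flip shows heads. From then on $a_i$ repeatedly flips coin $1$, and we stop exactly at the first tail. Since the flips are independent, each of these later flips shows tails with probability $\bar p_1 = 1-p_1$, so the number of additional flips is geometric with success probability $1-p_1$ and has mean $\frac{1}{1-p_1}$. Symmetrically, with probability $1-p_i$ the first flip shows tails. We then repeatedly flip coin $n$ until the first head, and the number of additional flips is geometric with mean $\frac{1}{p_n}$.

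By the law of total expectation,
\[
  \expect*{\cost(a_i)} = 1 + p_i\cdot\frac{1}{1-p_1} + (1-p_i)\cdot\frac{1}{p_n},
\]
which is the claimed formula.

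Alternatively, one can use the tail-sum form $\expect*{\cost(a_i)} = 1 + \sum_{k\geq 1}\Pr[\cost(a_i)>k]$, in analogy with \Cref{lem: sequence cost}. Here $\Pr[\cost(a_i)>k] = p_i p_1^{k-1} + \bar p_i \bar p_n^{k-1}$, and summing the two geometric series gives the same expression.

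The only point needing care is degenerate parameters. If $p_1 = 1$ (so $p_i = 1$ as well) or $p_n = 0$, a branch never terminates. These are exactly the infinite-cost cases excluded by the standing assumption after \Cref{lem: optimal is finite}. If instead $p_i = 1$ with $p_1 < 1$, the heads branch still has finite mean and the tails branch has probability zero; with the convention $0\cdot\frac{1}{0} = 0$ the formula remains valid. I would state this convention explicitly. Apart from that, the argument is routine and I expect no real obstacle.
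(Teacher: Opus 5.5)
Your proof is correct and is exactly the routine computation the paper leaves implicit: the paper states this as a Fact without proof. Conditioning on the first flip and using the geometric means is all that is needed, and your tail-sum variant in the style of \Cref{lem: sequence cost} is equivalent.

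One small correction to your degenerate-case remark: the convention $0\cdot\frac{1}{0}=0$ is never invoked. If $p_i=1$ then $p_n=1$, so the tails term is just $0/1$. More generally, under the standing assumptions (strict ordering $p_1<\cdots<p_n$ when $n\ge 2$, and $0<p_1<1$ when $n=1$), both denominators $1-p_1$ and $p_n$ are always positive.
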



\begin{proposition}\label{prop: opt adapt}
    Let \(S_A\) be the set of all adaptive strategies. On any instance,
    \[
        a_1\in \argmin_{a\in S_A}\expect*{\cost(a)} \text{ or } a_n\in\argmin_{a\in S_A}\expect*{\cost(a)}
    \]
    That is, it is always true that at least one of \(a_1\) or \(a_n\) is optimal.
\end{proposition}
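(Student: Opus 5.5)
The plan is to characterize an optimal adaptive strategy by dynamic programming on the adaptive tree, which has a very simple shape here. An adaptive strategy is determined by the first coin \(a(\epsilon)\) and by the choices along the all-heads branch \(a(H^i)\) and the all-tails branch \(a(T^i)\). The process stops exactly when the observed flips are not unanimous.

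First I will handle the continuation after the first flip. Suppose the first flip shows heads. From then on, every later flip that shows tails stops the process, and every flip that shows heads continues it. So the remaining cost is \(\sum_{j\ge 0}\prod_{m<j} p_{a(H^{m+1})}\), counting one flip per step. This is minimized termwise by choosing the coin with the smallest \(p\), namely coin \(1\), at every step. Because every product is monotone in each factor, replacing each \(a(H^i)\) by coin \(1\) can only decrease each term. The expected remaining cost is then \(1/(1-p_1)\), which is finite unless \(p_1=1\); that edge case I will treat as the degenerate case excluded earlier, or handle separately. Symmetrically, after a first tail it is optimal to always flip coin \(n\), giving expected remaining cost \(1/p_n\). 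Hence the optimal adaptive strategy has the form \(a_i\) for some first coin \(i\), and by \Cref{fact: adaptive cost} its cost is
\[
  1 + \frac{p_i}{1-p_1} + \frac{1-p_i}{p_n}.
\]

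Second, I will optimize over \(i\). The cost above is affine in \(p_i\), with slope \(\frac{1}{1-p_1}-\frac{1}{p_n}\). An affine function on the set \(\{p_1,\ldots,p_n\}\subseteq[p_1,p_n]\) attains its minimum at an endpoint. So either \(a_1\) (slope \(\ge 0\)) or \(a_n\) (slope \(\le 0\)) is optimal, which gives the claim.

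The main obstacle is making the first step rigorous when strategies are infinite and costs may be infinite. I would write the expected cost as \(1+\Pr[\text{first}=H]\cdot C_H+\Pr[\text{first}=T]\cdot C_T\), where \(C_H=\sum_{j\ge0}\prod_{m=1}^{j}p_{a(H^m)}\). The termwise comparison of nonnegative series then settles this step directly, with no limiting argument needed. Edge cases with \(p_1=1\) or \(p_n=0\) give infinite terms only when the first flip's branch has probability zero, and I will check these separately with the convention \(0\cdot\infty=0\).
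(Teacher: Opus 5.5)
Your proposal is correct and follows the same route as the paper: fix the continuation to coin \(1\) on the all-heads branch and coin \(n\) on the all-tails branch, then use that the cost in \Cref{fact: adaptive cost} is affine in \(p_i\), so an endpoint \(i\in\{1,n\}\) is optimal. The paper simply asserts the first step as clear, and your termwise comparison of the nonnegative series \(C_H, C_T\) supplies the missing justification (your edge-case worry is moot, since the strict ordering \(p_1<\cdots<p_n\) with \(n\ge 2\) forces \(p_1<1\) and \(p_n>0\)).
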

\begin{proof}
    Recalling that \(p_1<\ldots< p_n\), it is clear that there is always an optimal adaptive strategy \(a^*\) with \(a^*\left(H^i\right)=1,\ a^*\left(T^i\right)=n\) for \(i\geq 1\). The remaining variability is then the choice of the first coin to flip. The result then follows from noting that the expression of \Cref{fact: adaptive cost} is linear in \(p_i\).
\end{proof}

\noindent
Similarly to \(S_A\), we will refer to \(S_N\), the set of all non-adaptive strategies (sequences \(\mathbb{Z}_+\to\{1,n\}\)). We begin by providing a lower bound on both quantities, using an instance that is constructed essentially identically to the instance used to prove the lower bound in Claim 6.1 of \cite{dumankeles2026}.

\begin{lemma}
    The adaptivity gap of the Unlimited-Flips Unanimous Vote problem is at least \(1.2\), and the additive adaptivity gap is at least \(\frac{1}{2}\).
\end{lemma}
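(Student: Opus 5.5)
We will use the instance \(n=2\), \(p_1=0\), \(p_2=\tfrac12\) mentioned in the main text. We compute the optimal adaptive cost and the optimal non-adaptive cost, and check that their ratio is \(1.2\) and their difference is \(\tfrac12\). Because the adaptivity gaps are suprema over instances, this single instance gives both lower bounds.

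\emph{Adaptive side.} By \Cref{prop: opt adapt}, one of \(a_1\) or \(a_2\) is optimal, so we evaluate \Cref{fact: adaptive cost} at \(i=1\) and \(i=2\). For \(a_1\) we get \(1 + 0/1 + 1/(1/2) = 3\). For \(a_2\) we get \(1 + (1/2)/1 + (1/2)/(1/2) = 5/2\). So the optimal adaptive cost is \(\tfrac52\). Intuitively, \(a_2\) flips the fair coin first; on heads it stops at the next flip of coin \(1\), and on tails it keeps flipping the fair coin.

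\emph{Non-adaptive side.} By \Cref{thm: greedy optimal}, since \(p_1=0\le \tfrac12\le p_2\), there is an optimal sequence using coin \(1\) at 1-biased positions and coin \(2\) otherwise. Position \(1\) is unbiased, so the sequence begins with coin \(2\).
\begin{itemize}
\item If the first flip uses coin \(2\), then \(H(1)=T(1)=\tfrac12\). Position \(2\) is therefore unbiased, and coin \(2\) is chosen again.
\item Once coin \(1\) is ever used, \(H\) becomes \(0\), and after that every position is 0-biased, so coin \(2\) is used from then on.
\end{itemize}
A cleaner route is to minimize \(\mathbb{E}[\cost(\sigma)] = 1+\sum_k (H(k)+T(k))\) directly, using that \(H(k)=0\) once coin \(1\) appears. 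Suppose coin \(1\) first appears at position \(m+1\). Then \(H(k)=2^{-k}\) for \(k\le m\) and \(H(k)=0\) afterwards, while \(T(k)=2^{-(k-1)}\) for \(k\ge m+1\) (taking every other flip to be coin \(2\); any further use of coin \(1\) only increases \(T\)). This gives
\begin{itemize}
\item cost \(1+\sum_k 2^{-k}\cdot 2 = 3\) if coin \(1\) is never used (\(m=\infty\)),
\item cost \(3\) for every finite \(m\) as well.
\end{itemize}
Hence every greedy-optimal sequence costs \(3\), and \(\OPT_{\mathrm{nonadaptive}}=3\). For safety we can verify one explicit optimal sequence, \(\sigma=2,2,2,\ldots\). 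Here \(H(k)=T(k)=2^{-k}\), so the cost is \(1+2\sum_{k\ge1}2^{-k}=3\). Optimality of \(\sigma\) follows from \Cref{thm: greedy optimal}, since all positions are unbiased and the theorem allows coin \(n\) at unbiased positions.

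\emph{Conclusion.} The ratio is \(3/(5/2)=1.2\) and the difference is \(3-\tfrac52=\tfrac12\), giving both lower bounds. The only delicate step is justifying that the non-adaptive optimum is exactly \(3\) and not smaller. We rely on \Cref{thm: greedy optimal} to know that some optimal sequence is greedy. Under the tie convention, the greedy sequence is \(2,2,2,\ldots\): no position ever becomes 1-biased, because \(H(k)=T(k)\) throughout. Its cost is therefore the optimum, \(3\).
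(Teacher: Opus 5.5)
Your proof is correct and follows the paper's argument. You use the same instance $p_1=0$, $p_2=\tfrac12$, the same adaptive strategy $a_2$ with cost $\tfrac52$, and the same non-adaptive optimum $3$, obtained from the fact that sequences using coin $1$ never, or at any single position, all cost exactly $3$. Your bound $T(k)\ge 2^{-(k-1)}$ after the first use of coin $1$ supplies the justification the paper leaves as ``one can note''. The appeal to \Cref{thm: greedy optimal} and the computation of $\mathbb{E}[\cost(a_1)]$ are valid but unnecessary, since only $\inf_a \mathbb{E}[\cost(a)]\le\tfrac52$ is needed.
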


\begin{proof}
    Consider the instance \(p_1 = 0, p_2 = \frac{1}{2}\). We have that
    \begin{align*}
        \expect{\cost(a_2)} &= 1 + \frac{1/2}{1} + \frac{1 - 1/2}{1/2} = 2.5
    \end{align*}
    One can note that an optimal sequence will use coin \(1\) never, or exactly once. Let \(\sigma_\infty\) be the sequence never using coin \(1\) (always using coin \(2\)), and let \(\sigma_k\) for \(k \geq 1\) be the sequence using coin \(1\) only at position \(k\). We then have, by direct computation, that
    \begin{align*}
        \expect*{\cost(\sigma_\infty)} &= 1 + \left(\frac{1}{2} + \frac{1}{4} + \frac{1}{8} + \ldots\right) + \left(\frac{1}{2} + \frac{1}{4} + \frac{1}{8} + \ldots\right) = 3
    \end{align*}
    and that
    \begin{align*}
        \expect*{\cost(\sigma_k)} &= 1 + \sum_{j=1}^{k-1}p_2^j + \left(\sum_{j=1}^{k-1}\bar{p}_2^j + \bar{p}_2^{k-1} + \sum_{j=k}^\infty \bar{p}_2^{j} \right)\\
        &= 1 + \sum_{j=1}^{k-1}\frac{1}{2^j} + \left(\sum_{j=1}^{k-1}\frac{1}{2^j} + \frac{1}{2^{k-1}} + \sum_{j=k}^\infty\frac{1}{2^j}\right)\\
        &= 1 + \left(\sum_{j=1}^{k-1}\frac{1}{2^j} + \frac{1}{2^{k-1}}\right) + \left(\sum_{j=1}^{k-1}\frac{1}{2^j} + \sum_{j=k}^\infty\frac{1}{2^j}\right)\\
        &= 3
    \end{align*}
    We can then observe that the expected costs of \(\sigma_\infty,\ \sigma_k\) are identical, and independent of \(k\). Then, on this instance, we have that
    \[
        \frac{\inf_{\sigma\in S_N}\expect*{\cost(\sigma)}}{\inf_{a\in S_A}\expect*{\cost(a)}}\geq \frac{\inf_{\sigma\in S_N}\expect*{\cost(\sigma)}}{\expect*{\cost(a_2)}} = \frac{3}{2.5} = 1.2
    \]
    and that
    \[
        \inf_{\sigma\in S_N}\expect*{\cost(\sigma)} - \inf_{a\in S_A}\expect*{\cost(a)} \geq \inf_{\sigma\in S_N}\expect*{\cost(\sigma)} - \expect*{\cost(a_2)} = 3 - 2.5 = 1/2
    \]
    This completes the proof.
\end{proof}

\noindent
We now present a collection of sequences that will be used to prove both upper bounds. The general strategy of the proof is to argue that, on any instance, at least one among the following collection of sequences has an expected cost within the desired upper bound.


\begin{lemma}\label{lem: seq1}
    Let \(\sigma_1\) be the sequence using only coin \(n\). We have
    \[
        \expect*{\cost(\sigma_1)} = \frac{1}{p_n} + \frac{p_n}{1-p_n}
    \]
\end{lemma}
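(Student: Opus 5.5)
We apply \Cref{lem: sequence cost} directly to the constant sequence \(\sigma_1(k)=n\) for all \(k\). For this sequence the prefix probabilities are simply powers: \(H_{\sigma_1}(k)=p_n^k\) and \(T_{\sigma_1}(k)=\bar{p}_n^k\). Hence
\[
    \expect*{\cost(\sigma_1)} = 1 + \sum_{k=1}^\infty p_n^k + \sum_{k=1}^\infty \bar{p}_n^k ,
\]
and the proof reduces to summing two geometric series.

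The first step is to note that both series converge when \(0<p_n<1\). We treat the endpoint cases separately afterwards. Summing gives
\[
    \sum_{k\ge1}p_n^k=\frac{p_n}{1-p_n}, \qquad \sum_{k\ge1}\bar{p}_n^k=\frac{1-p_n}{p_n}.
\]
Adding \(1\) to the second sum gives \(1+\frac{1-p_n}{p_n}=\frac{1}{p_n}\). The total is therefore \(\frac{1}{p_n}+\frac{p_n}{1-p_n}\), as claimed.

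The only point requiring care is the degenerate values of \(p_n\). If \(p_n\in\{0,1\}\), flipping only coin \(n\) never yields both outcomes, so the cost is \(+\infty\). The formula also gives \(+\infty\) in these cases, read in the extended reals as \(\frac{1}{0}\) or \(\frac{1}{0}\). The identity therefore holds in that sense, or we may restrict to \(0<p_n<1\).

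In the context where the lemma is used, we have assumed \(p_1\le\bar p_n\) and excluded the infinite-cost case of \Cref{lem: optimal is finite}. The relevant instances thus have \(p_n<1\), and I would simply state the lemma for \(0<p_n<1\). No step here is a genuine obstacle; the computation is routine once \Cref{lem: sequence cost} is invoked.
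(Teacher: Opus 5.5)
Your proposal is correct and matches the paper's proof: apply \Cref{lem: sequence cost} to the constant sequence, sum the two geometric series, and absorb the leading $1$ into $\frac{1-p_n}{p_n}$ to obtain $\frac{1}{p_n}$. One caveat on your side remark: $p_1\le\bar{p}_n$ does not rule out $p_n=1$, since it only forces $p_1=0$, so that case is covered by your extended-real reading rather than excluded by the context.
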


\begin{proof}
    Applying \Cref{lem: sequence cost}:
    \begin{align*}
        \expect*{\cost(\sigma_1)} &= 1 + \sum_{k=1}^\infty p_n^k + \sum_{k=1}^\infty\bar{p}_n^k = 1 +  \frac{p_n}{1-p_n} + \frac{1-p_n}{p_n} = \left(1 + \frac{1 - p_n}{p_n}\right) + \frac{p_n}{1 - p_n} = \frac{1}{p_n} + \frac{p_n}{1-p_n}
    \end{align*}
\end{proof}


\begin{lemma}\label{lem: seq2}
    Let \(\sigma_2\) be the sequence using coin \(1\) first, then coin \(n\) from the second position onwards. We have 
    \[
        \expect*{\cost(\sigma_2)} = 1 + \frac{p_1}{1-p_n} + \frac{1-p_1}{p_n}
    \]
\end{lemma}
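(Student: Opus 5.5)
We apply \Cref{lem: sequence cost} directly to \(\sigma_2\), splitting the sum into its heads part and its tails part and evaluating each as a geometric series.

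For the heads part, the first flip uses coin \(1\), and every later flip uses coin \(n\). Hence \(H_{\sigma_2}(k) = p_1 p_n^{k-1}\) for every \(k\geq 1\). Likewise, for the tails part, \(T_{\sigma_2}(k) = \bar{p}_1\bar{p}_n^{k-1}\) for every \(k\geq1\). Substituting these into \Cref{lem: sequence cost} gives
\[
    \expect*{\cost(\sigma_2)} = 1 + p_1\sum_{k=1}^{\infty}p_n^{k-1} + \bar{p}_1\sum_{k=1}^{\infty}\bar{p}_n^{k-1} = 1 + \frac{p_1}{1-p_n} + \frac{1-p_1}{p_n}.
\]
This is exactly the claimed formula.

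The only point needing care is convergence of the two series, which requires \(p_n<1\) and \(p_n>0\). In the boundary cases we read the corresponding term with the convention \(c/0=\infty\) unless \(c=0\). When \(p_n = 1\), the tails series vanishes if \(\bar{p}_n=0\), and the heads series diverges unless \(p_1 = 0\). In the setting where this lemma is used we assume \(p_1\leq\bar{p}_n\). Under that assumption, \(p_n=1\) forces \(p_1=0\), so the heads term is \(0\) and the formula remains consistent.

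There is no real obstacle here. The computation is routine, and the statement is a closed form meant for later comparison with the optimal adaptive cost given in \Cref{fact: adaptive cost}. Indeed, it coincides with \(\expect*{\cost(a_1)}\) after swapping the roles of \(p_1\) and \(p_n\) in the denominators.
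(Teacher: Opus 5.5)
Your proof is correct and matches the paper's: apply \Cref{lem: sequence cost} with \(H_{\sigma_2}(k)=p_1p_n^{k-1}\) and \(T_{\sigma_2}(k)=\bar{p}_1\bar{p}_n^{k-1}\), then sum the two geometric series. One small slip in your boundary remark: when \(p_n=1\) the tails series does not vanish but equals \(\bar{p}_1=(1-p_1)/p_n\), since only the terms with \(k\geq 2\) are zero, so the formula still holds there.
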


\begin{proof}
    Applying \Cref{lem: sequence cost}:
    \begin{align*}
        \expect*{\cost(\sigma_2)} &= 1 + p_1\sum_{k=0}^\infty p_n^k + \bar{p}_1\sum_{k=0}^\infty \bar{p}_n^k = 1 + \frac{p_1}{1-p_n} + \frac{1-p_1}{p_n}
    \end{align*}
\end{proof}


\begin{lemma}\label{lem: seq3}
    Let \(\sigma_3\) be the sequence that alternates as \(n,1,n,1,\ldots\). We have
    \[
        \expect*{\cost(\sigma_3)} = 1 + \frac{p_n(1 + p_1)}{1-p_1p_n} + \frac{(1-p_n)(2-p_1)}{p_1 + p_n - p_1p_n}
    \]
\end{lemma}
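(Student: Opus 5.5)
We apply \Cref{lem: sequence cost} directly to \(\sigma_3 : n,1,n,1,\ldots\) and split the sum into its heads part and its tails part. Each part becomes a geometric series in the period-two product. For the heads part, the prefix products are \(H(1)=p_n\), \(H(2)=p_np_1\), \(H(3)=p_n^2p_1\), \(H(4)=(p_np_1)^2\), and so on. Grouping consecutive positions \(2m+1\) and \(2m+2\) for \(m\geq 0\) gives
\[
\sum_{k=1}^\infty H(k)=\sum_{m=0}^\infty (p_1p_n)^m\left(p_n+p_np_1\right)=\frac{p_n(1+p_1)}{1-p_1p_n}.
\]
This series converges because \(p_1p_n<1\), which holds since \(p_1<p_n\leq 1\) forces \(p_1<1\).

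The tails part is handled the same way. Here \(T(2m+1)=(\bar p_n\bar p_1)^m\bar p_n\) and \(T(2m+2)=(\bar p_n\bar p_1)^{m+1}\), so
\[
\sum_{k=1}^\infty T(k)=\frac{\bar p_n(1+\bar p_1)}{1-\bar p_1\bar p_n}.
\]
We then rewrite this in the stated form. Since \(1+\bar p_1 = 2-p_1\) and \(1-(1-p_1)(1-p_n)=p_1+p_n-p_1p_n\), the tails sum equals \(\frac{(1-p_n)(2-p_1)}{p_1+p_n-p_1p_n}\). The denominator is positive because \(p_n>0\).

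Adding the leading \(1\) from \Cref{lem: sequence cost} gives exactly the claimed expression. There is no real obstacle here. The only points that need care are checking convergence and confirming that the grouping by pairs is valid, which it is because all terms are nonnegative. The assumptions that exclude the infinite-cost case of \Cref{lem: optimal is finite} ensure the denominators do not vanish.
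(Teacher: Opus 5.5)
Your proposal is correct and takes essentially the same route as the paper: apply \Cref{lem: sequence cost} and sum the heads and tails parts as geometric series in the period-two products. The only cosmetic difference is that the paper splits each part into its odd- and even-indexed terms, giving two geometric series per part, whereas you group consecutive pairs into a single series; both give the same closed form.
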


\begin{proof}
    Applying \Cref{lem: sequence cost}:
    \begin{align*}
        \expect*{\cost(\sigma_3)} &= 1 + \left(\highlight[pink]{p_n} + \highlight{p_np_1} + \highlight[pink]{p_n^2p_1} + \highlight{p_n^2p_1^2}+ \ldots\right)\\
        &\hspace{18pt}+\left(\highlight[green]{\bar{p}_n} + \highlight[cyan]{\bar{p}_n\bar{p}_1} + \highlight[green]{\bar{p}_n^2\bar{p}_1} + \highlight[cyan]{\bar{p}_n^2\bar{p}_1^2} + \ldots\right)\\
        &= 1 + \highlight[pink]{\frac{p_n}{1-p_np_1}} + \highlight{\frac{p_np_1}{1-p_np_1}} + \highlight[green]{\frac{\bar{p}_n}{1-\bar{p}_n\bar{p}_1}} + \highlight[cyan]{\frac{\bar{p}_n\bar{p}_1}{1-\bar{p}_n\bar{p}_1}}\\
        &= 1 + \frac{p_n}{1-p_np_1} + \frac{p_np_1}{1-p_np_1} + \frac{1-p_n}{1 - (1-p_n)(1-p_1)} + \frac{(1-p_n)(1-p_1)}{1- (1-p_n)(1-p_1)}\\
        &= 1 + \frac{p_n(1 + p_1)}{1-p_1p_n} + \frac{(1-p_n)(2-p_1)}{p_1 + p_n - p_1p_n}
    \end{align*}
\end{proof}

\begin{lemma}\label{lem: seq4}
    Let \(\sigma_4\) be the sequence \(1,n,1,n,n,n,\ldots\) where the dots indicate that the choice of \(n\) continues. We have
    \[
        \expect*{\cost(\sigma_4)} = 1 + p_1\left(1 + p_n  + \frac{p_1p_n}{1-p_n}\right) + (1-p_1)\left(2 - p_n +\frac{(1-p_1)(1-p_n)}{p_n}\right)
    \]
\end{lemma}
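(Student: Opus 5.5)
Following the proofs of \Cref{lem: seq2} and \Cref{lem: seq3}, we apply \Cref{lem: sequence cost} directly to \(\sigma_4 = 1,n,1,n,n,n,\ldots\) and sum the heads-run and tails-run terms separately. The prefix products are explicit:
\[
H(1)=p_1,\quad H(2)=p_1p_n,\quad H(3)=p_1^2p_n,\quad H(k)=p_1^2p_n^{k-2}\ (k\geq 3),
\]
and in the same way
\[
T(1)=\bar{p}_1,\quad T(2)=\bar{p}_1\bar{p}_n,\quad T(k)=\bar{p}_1^2\bar{p}_n^{k-2}\ (k\geq 3).
\]
Hence \(\expect{\cost(\sigma_4)} = 1 + \sum_{k\geq1}H(k) + \sum_{k\geq1}T(k)\). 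Each of the two sums is a finite prefix followed by a geometric tail.

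For the heads part, factor out \(p_1\):
\[
\sum_{k\geq1} H(k) = p_1\Bigl(1 + p_n + p_1 p_n\sum_{j\geq 0}p_n^{j}\Bigr) = p_1\Bigl(1+p_n+\frac{p_1p_n}{1-p_n}\Bigr).
\]
This is exactly the first bracket in the claimed formula.

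For the tails part, factor out \(\bar{p}_1\):
\[
\sum_{k\geq1} T(k) = \bar{p}_1\Bigl(1+\bar{p}_n+\bar{p}_1\bar{p}_n\sum_{j\geq0}\bar{p}_n^{j}\Bigr) = (1-p_1)\Bigl(2-p_n+\frac{(1-p_1)(1-p_n)}{p_n}\Bigr),
\]
using \(1+\bar{p}_n = 2-p_n\) and \(1-\bar{p}_n = p_n\). Adding \(1\) to the two sums gives the stated expression.

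There is no genuine obstacle here; the only points needing care are the convergence conditions. The geometric tails require \(p_n<1\) and \(p_n>0\). In the adaptivity-gap setting we have \(0\le p_1<p_n\), so \(p_n>0\). If \(p_n=1\), both sides are \(+\infty\) when \(p_1>0\), which is consistent with the formula interpreted in the extended reals. If \(p_n=1\) and \(p_1=0\), the expression contains the indeterminate form \(0\cdot\frac{0}{0}\); this boundary case is handled by the convention used elsewhere in the appendix, or it is excluded.
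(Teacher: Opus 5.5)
Your proposal is correct and follows the paper's proof essentially verbatim: apply \Cref{lem: sequence cost}, write out the prefix products $H(k)$ and $T(k)$, and sum each series as a two-term prefix followed by a geometric tail, then factor out $p_1$ and $\bar{p}_1$. Your remarks on the boundary cases $p_n\in\{0,1\}$ go beyond what the paper spells out, but they do not change the argument.
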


\begin{proof}
    Applying \Cref{lem: sequence cost}:
    \begin{align*}
        \expect*{\cost(\sigma_4)} &= 1 + (p_1 + p_1p_n + p_1^2p_n  + p_1^2p_n^2 + p_1^2p_n^3 + \ldots)\\
        &\hspace{18pt}+ (\bar{p}_1 + \bar{p}_1\bar{p}_n + \bar{p}^2_1\bar{p}_n + \bar{p}_1^2\bar{p}_n^2 + \bar{p}_1^2\bar{p}_n^3 + \ldots)\\
        &= 1 + \left(p_1 + p_1p_n  + \frac{p_1^2p_n}{1-p_n}\right) + \left(\bar{p}_1 + \bar{p}_1\bar{p}_n + \frac{\bar{p}_1^2\bar{p}_n}{1-\bar{p}_n}\right)\\
        &= 1 + p_1\left(1 + p_n  + \frac{p_1p_n}{1-p_n}\right) + \bar{p}_1\left(1 + \bar{p}_n +\frac{\bar{p}_1\bar{p}_n}{1-\bar{p}_n}\right)\\
        &= 1 + p_1\left(1 + p_n  + \frac{p_1p_n}{1-p_n}\right) + (1-p_1)\left(2 - p_n +\frac{(1-p_1)(1-p_n)}{p_n}\right)
    \end{align*}
\end{proof}

\begin{lemma}\label{lem: adapt cost 1 n}
    \begin{align*}
        \expect*{\cost(a_1)} &= \frac{1}{1-p_1} + \frac{1-p_1}{p_n}\\
        \expect*{\cost(a_n)} &= \frac{p_n}{1 - p_1} + \frac{1}{p_n}
    \end{align*}
\end{lemma}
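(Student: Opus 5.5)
The plan is to obtain both identities by specializing \Cref{fact: adaptive cost} to \(i=1\) and \(i=n\), simplifying each, and then double-checking the result against a direct computation.

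\textbf{Step 1 (\(a_1\)).} Substituting \(i=1\) into \Cref{fact: adaptive cost} gives
\[
\expect*{\cost(a_1)} = 1 + \frac{p_1}{1-p_1} + \frac{1-p_1}{p_n}.
\]
Combining the first two terms, \(1 + \frac{p_1}{1-p_1} = \frac{1}{1-p_1}\), which is the claimed expression.

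\textbf{Step 2 (\(a_n\)).} Substituting \(i=n\) gives
\[
\expect*{\cost(a_n)} = 1 + \frac{p_n}{1-p_1} + \frac{1-p_n}{p_n}.
\]
Combining the first and last terms, \(1 + \frac{1-p_n}{p_n} = \frac{1}{p_n}\), which is again the claimed form.

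\textbf{Step 3 (independent check).} \Cref{fact: adaptive cost} is stated without proof, so I will confirm it by conditioning on the first flip. Under \(a_i\), coin \(i\) is flipped first.
\begin{itemize}
\item[] If it shows heads (probability \(p_i\)), the strategy flips coin \(1\) until the first tail. This takes a geometric number of further flips with mean \(1/(1-p_1)\).
\item[] If it shows tails (probability \(1-p_i\)), it flips coin \(n\) until the first head, with mean \(1/p_n\) further flips.
\end{itemize}
Hence \(\expect*{\cost(a_i)} = 1 + \frac{p_i}{1-p_1} + \frac{1-p_i}{p_n}\), matching \Cref{fact: adaptive cost}.

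There is no real obstacle. The only care needed is well-definedness in degenerate cases, where \(p_1=1\) or \(p_n=0\) would make a denominator vanish. Both are excluded by the standing assumption \(0\le p_1<p_n\le 1\), and by the exclusion of the infinite-cost case in \Cref{lem: optimal is finite}.
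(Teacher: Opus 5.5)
Your proposal is correct and matches the paper, which likewise just notes that the lemma follows from \Cref{fact: adaptive cost} by substituting $i=1$ and $i=n$ and simplifying. Your conditioning-on-the-first-flip verification of that fact, together with the remark on nonvanishing denominators, is a sound addition the paper leaves implicit.
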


\noindent
\Cref{lem: adapt cost 1 n} follows easily from \Cref{fact: adaptive cost}. In the proof of \Cref{thm: adgap,thm: add adgap}, we will assume without loss of generality that \(p_1 \leq \bar{p}_n\) (equivalently, that \(p_1 + p_n \leq 1\)). We first show that under this condition, \(a_n\) is an optimal adaptive strategy.

\begin{lemma}\label{lem: a_n better}
    If \(p_1 \leq \bar{p}_n\), then \(\expect*{\cost(a_n)} \leq \expect*{\cost(a_1)}\).
\end{lemma}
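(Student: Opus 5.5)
Our plan is to subtract the two closed forms from \Cref{lem: adapt cost 1 n} and show that the difference \(\expect*{\cost(a_1)} - \expect*{\cost(a_n)}\) is nonnegative whenever \(p_1 + p_n \leq 1\). We write
\[
    \expect*{\cost(a_1)} - \expect*{\cost(a_n)} = \frac{1 - p_n}{1-p_1} + \frac{1-p_1-1}{p_n} = \frac{1-p_n}{1-p_1} - \frac{p_1}{p_n}.
\]
The equivalent statement is \(\frac{1-p_n}{1-p_1} \geq \frac{p_1}{p_n}\). We may assume \(p_1 < 1\) and \(p_n > 0\), which holds because \(p_1 < p_n\) and we are outside the infinite-cost case of \Cref{lem: optimal is finite}. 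Clearing the (positive) denominators, this becomes \(p_n(1-p_n) \geq p_1(1-p_1)\).

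Next we use the function \(g(x) = x(1-x)\). It is symmetric about \(1/2\) and decreasing in \(|x - 1/2|\). The hypothesis \(p_1 \leq 1 - p_n\) together with \(p_1 < p_n\) places \(p_1\) at least as far from \(1/2\) as \(p_n\). To see this, consider two cases. If \(p_n \geq 1/2\), then \(g(p_n) = g(1-p_n)\), and \(p_1 \leq 1-p_n \leq 1/2\) with \(g\) increasing on \([0,1/2]\) gives \(g(p_1) \leq g(1-p_n) = g(p_n)\). If \(p_n < 1/2\), then \(p_1 < p_n < 1/2\) and monotonicity on \([0,1/2]\) directly gives \(g(p_1) \leq g(p_n)\).

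A cleaner alternative, which we would likely use instead, is the factorization
\[
    p_n(1-p_n) - p_1(1-p_1) = (p_n - p_1)(1 - p_n - p_1).
\]
Both factors are nonnegative by assumption, which finishes the argument in one line.

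There is no real obstacle here beyond the algebra. The only care needed is the positivity of the denominators \(1-p_1\) and \(p_n\) when we clear fractions, and that is handled by \(0 \leq p_1 < p_n \leq 1\).
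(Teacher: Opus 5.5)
Your proposal is correct and follows the paper's route: clearing the positive denominator $p_n(1-p_1)$ reduces the claim to $p_n\bar{p}_n \geq p_1\bar{p}_1$. The paper finishes with concavity of $a \mapsto a - a^2$ on $[p_1,\bar{p}_1]$, using $p_1 \leq p_n \leq \bar{p}_1$ and equal endpoint values; your factorization $(p_n - p_1)(1 - p_n - p_1) \geq 0$ is an equally valid and slightly more direct finish.
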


\begin{proof}
    \begin{align*}
        \expect*{\cost(a_n)} &\leq \expect*{\cost(a_1)}
        \Leftrightarrow\frac{p_n}{1 - p_1} + \frac{1}{p_n}\leq \frac{1}{1-p_1} + \frac{1-p_1}{p_n}
        \Leftrightarrow p_n^2 + (1-p_1) \leq p_n + (1-p_1)^2\\
        \Leftrightarrow p_n^2-p_n &\leq \bar{p}_1^2-\bar{p}_1
        \Leftrightarrow -p_n(1-p_n)\leq -\bar{p}_1(1-\bar{p}_1)
        \Leftrightarrow \bar{p}_np_n\geq p_1\bar{p}_1
    \end{align*}
    where the first equivalence comes from \Cref{lem: adapt cost 1 n}. We now show \(\bar{p}_np_n\ge p_1\bar{p}_1\). Consider the function \(f(a)=a\bar{a}=a-a^2\), where \(f''(a)=-2\), so \(f\) is concave everywhere. We know \(p_1 \leq p_n \leq \bar{p}_1\) (the second inequality is from \(p_1 \leq \bar{p}_n\))  and \(f(p_1) = f(\bar{p}_1)\) which yields \(f(p_n)\geq f(p_1)\) by concavity.
\end{proof}

\begin{lemma}[\MulAdGapLink{\texttt{MulAdaptivityGap.lean}}]\label{lem: muladgap lean}
    For any \((p_1,p_n)\in\{(p_1,p_n)\in[0,1]^2\mid p_1\leq p_n\land p_1 + p_n\leq 1\}\) we have that
    \begin{align*}
        \min\left\{\frac{1}{p_n} + \frac{p_n}{1-p_n}, 1 + \frac{p_1}{1-p_n} + \frac{1-p_1}{p_n}, 1 + \frac{p_n(1 + p_1)}{1-p_1p_n} + \frac{(1-p_n)(2-p_1)}{p_1 + p_n - p_1p_n}\right\} \leq 1.2\left(\frac{p_n}{1-p_1} + \frac{1}{p_n}\right)
    \end{align*}
\end{lemma}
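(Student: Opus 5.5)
We will prove the inequality by splitting the triangle $D=\{0\le p_1\le p_n,\ p_1+p_n\le 1\}$ into regions, in each of which a single one of the three candidate costs (those of $\sigma_1,\sigma_2,\sigma_3$ from \Cref{lem: seq1,lem: seq2,lem: seq3}) is shown to be at most $1.2\,\expect*{\cost(a_n)}$. By \Cref{lem: a_n better}, the right-hand side is $1.2$ times the optimal adaptive cost on $D$. Write $x=p_1$, $y=p_n$ and $A(x,y)=\frac{y}{1-x}+\frac1y$.

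\textbf{Heuristics for the regions.}
\begin{itemize}
\item Near $y\approx 1/2$ with $x$ small, $\sigma_1$ is good: its cost $\frac1y+\frac{y}{1-y}$ is independent of $x$, while $A$ decreases as $x\to 0$. The extremal instance $(0,\tfrac12)$ gives exactly $3=1.2\cdot 2.5$, so the bound is tight there.
\item For $y$ close to $1$ (which forces $x$ small), $\sigma_2$ is good. When $x=0$, $\sigma_2$ has cost $1+\frac1y$, while $A=y+\frac1y$.
\item When $x$ is close to $1-y$, i.e. near $x=\bar y$, the alternating sequence $\sigma_3$ is good. At $x=1-y$ it is essentially optimal.
\end{itemize}

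\textbf{Step 1: explicit curves.} I would find curves $x=g_1(y)$ and $x=g_2(y)$ with three properties: on the region $\{x\le g_1(y)\}$ the inequality $\cost(\sigma_2)\le 1.2A$ holds, or $\cost(\sigma_1)\le 1.2A$ holds on the appropriate $y$-range; on the rest, $\cost(\sigma_3)\le 1.2A$. Concretely, I would clear denominators, which are all positive on $D$ away from the degenerate points where $y=0$ or $x=1$. Each comparison $\cost(\sigma_i)\le 1.2A$ then becomes a polynomial inequality $P_i(x,y)\ge 0$, of degree at most about $4$ for $\sigma_1,\sigma_2$ and somewhat higher for $\sigma_3$. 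The task is to show that $D\subseteq\{P_1\ge0\}\cup\{P_2\ge0\}\cup\{P_3\ge0\}$.

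\textbf{Step 2: the easy pieces.} For $\sigma_1$ and $\sigma_2$, the polynomial $P_i$ is low degree in $x$: linear or quadratic after multiplying through by $(1-x)$. So for fixed $y$ I can solve for the threshold in $x$ exactly, which gives $g_1$ explicitly. I would verify monotonicity along the boundaries $x=0$ and $x=1-y$. The tight point $(0,1/2)$ must be handled with care: there $P_1$ vanishes, and I would show it is a local minimum of $P_1$ on $D$ by checking its gradient direction, namely that $P_1$ increases in $x$ and has a double root in $y$.

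\textbf{Step 3: the main obstacle.} The hard part is the $\sigma_3$ region and its overlap with the other two regions. $P_3$ is a genuinely two-variable polynomial of moderate degree. Positivity on the leftover region must be certified with some margin except near tight points, and the regions must be shown to cover $D$ without gaps. My first attempt would be a grid-and-interval argument: subdivide $D$ into small rectangles, and on each rectangle bound each $P_i$ below using monotonicity in each variable (partial derivatives have fixed sign on small boxes). Near $(0,1/2)$ I would use a local Taylor argument instead. In a formal Lean proof, I would instead choose explicit rational breakpoints (e.g. $y\le 0.3$, $0.3\le y\le 0.7$, $y\ge0.7$ combined with $x$-thresholds) and discharge each polynomial inequality on each box by nlinarith/polyrith with hint products such as $x\,y\,(1-x-y)$. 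I expect most effort to go into choosing breakpoints so that each box's inequality is provable.
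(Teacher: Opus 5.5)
Your overall strategy (partition $D$, clear denominators, certify one polynomial inequality per piece, e.g.\ by \texttt{nlinarith}) is in the spirit of the paper, whose proof is the machine-checked Lean file rather than a written argument. But the one delicate step you actually spell out fails, and it fails at the only point of $D$ with no slack.

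Write $c_1,c_2,c_3$ for the three terms of the minimum. You plan to show that $(0,\tfrac12)$ is a local minimum of $P_1$ on $D$, with a double root in $y$. In fact, on the edge $x=0$,
\[
1.2A-c_1=\frac{(\tfrac12-y)(1.2y^2+0.4y+0.4)}{y(1-y)},
\]
which is a simple sign change. So $\sigma_1$ fails at $(0,\tfrac12+\delta)\in D$ for every small $\delta>0$. Your own observation that $c_1$ is independent of $x$ while $A$ increases in $x$ shows that $x=0$ is the \emph{binding} case for $\sigma_1$, not the easy one.

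Those points are covered only by $\sigma_2$, which is also tight at $(0,\tfrac12)$: there $c_2=3$, and on $x=0$ one has $1.2A-c_2=1.2(y-\tfrac13)(y-\tfrac12)/y$. To first order, $1.2A-c_1\approx 0.6x-3.6(y-\tfrac12)$ and $1.2A-c_2\approx 0.6x+0.4(y-\tfrac12)$. Hence the $\sigma_1$- and $\sigma_2$-regions overlap only in a wedge that pinches to the tight point on the edge $x=0$. Since $c_3=3.5$ there, $\sigma_3$ cannot help. Any local argument must therefore use both $\sigma_1$ and $\sigma_2$, with the split passing exactly through $(0,\tfrac12)$, where both polynomials vanish. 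A box/interval argument with margin cannot work in any neighborhood of that point.

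A clean repair is to split at $y=\tfrac12$. For $y\le\tfrac12$, $\sigma_1$ works on the whole slice, by the displayed factorization together with monotonicity of $A$ in $x$. For $y\ge\tfrac12$ you must cover by $\sigma_2$ and $\sigma_3$, and your heuristics mislead here too:
\begin{itemize}
\item Near $y=1$, $\sigma_2$ is not good merely because $x$ is small. With $s=x/(1-y)$, $c_2\to 2+s$ while $1.2A\to 2.4$, so $\sigma_2$ fails for $s>0.4$ (e.g.\ at $(0.05,0.9)$).
\item On the edge $x=1-y$, $c_2=3>1.2(1+1/y)$ once $y>\tfrac23$.
\end{itemize}
The genuinely two-variable certificate for $\sigma_3$ on this part, and the proof that the $\sigma_2$- and $\sigma_3$-regions leave no gap, is the substance of the lemma. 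Your proposal only describes how you would search for it, so as written it is a plan rather than a proof.

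A minor point: the degenerate points of $D$ are the corners $(0,0)$ and $(0,1)$, not points with $x=1$. The value $x=1$ never occurs, since $x\le\tfrac12$ on $D$.
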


\begin{lemma}[\AddAdGapLink{\texttt{AddAdaptivityGap.lean}}]\label{lem: addadgap lean}
    For any \((p_1,p_n)\in\{(p_1,p_n)\in[0,1]^2\mid p_1\leq p_n\land p_1 + p_n\leq 1\}\) we have that
    \begin{gather*}
        \min\left\{\frac{1}{p_n} + \frac{p_n}{1-p_n}, 1 + \frac{p_1}{1-p_n} + \frac{1-p_1}{p_n}, 1 + p_1\left(1 + p_n + \frac{p_1p_n}{1-p_n}\right) + (1-p_1)\left(2 - p_n +\frac{(1-p_1)(1-p_n)}{p_n}\right)\right\}\\
        \centering{\leq \frac{1}{2} + \frac{p_n}{1-p_1} + \frac{1}{p_n}}
    \end{gather*}
\end{lemma}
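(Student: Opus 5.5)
We must prove Lemma \ref{lem: addadgap lean}: on the region $R=\{0\le p_1\le p_n,\ p_1+p_n\le 1\}$, the minimum of the costs of $\sigma_1,\sigma_2,\sigma_4$ is at most $\frac12+\expect{\cost(a_n)}$. Write $x=p_1$, $y=p_n$, so $0\le x\le 1/2$ and $x\le y\le 1-x$. The plan is to split $R$ into subregions. On each subregion we check a single inequality $\expect{\cost(\sigma_i)}\le \frac12+\frac{y}{1-x}+\frac1y$. After multiplying by the positive denominators $y(1-y)(1-x)$, this becomes a polynomial inequality in two variables.

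First compute the three differences $D_i = \expect{\cost(\sigma_i)} - \frac{y}{1-x} - \frac1y - \frac12$.

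\textbf{$\sigma_1$.} Here $D_1 = \frac{y}{1-y}-\frac{y}{1-x}-\frac12$. This is $\le 0$ whenever $y$ is small relative to $1-x$. For instance, at $x=0$ it reads $\frac{y}{1-y}-y\le\frac12$, which holds for $y$ up to about $0.56$.

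\textbf{$\sigma_2$.} Here $D_2 = 1+\frac{x}{1-y}-\frac{x}{y}-\frac{y}{1-x}-\frac12$. On the boundary $x=0$ this is $\frac12 - y$, so it is good for $y\ge 1/2$. On the diagonal $x+y=1$ it is small: $D_2 = 1+1-\frac{x}{y}-1-\frac12=\frac12-\frac{x}{y}$, which is $\le 0$ only when $x\ge y/2$. So $\sigma_2$ covers large $y$ away from the corner near $x=0$.

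\textbf{$\sigma_4$.} This sequence is needed in the intermediate band, around $x$ moderate and $y$ near $1/2$–$2/3$. That is where both $\sigma_1$ (because $y$ is too large) and $\sigma_2$ (because of the $-x/y$ deficit) fail.

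Concretely, I would guess explicit separating curves, e.g.\ $y\le c_1$, a line $y\ge c_2+c_3x$, and the band in between. On each piece I would verify the cleared-denominator polynomial inequality $q_i(x,y)\le 0$. Each $q_i$ has low degree (at most about $4$ in total).

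For each piece the method is the same. Check the sign on the boundary edges. Then show there is no interior positive maximum, either via monotonicity in one variable (for example, $D_1$ is increasing in $y$ and decreasing in $x$, so only the upper boundary of its region matters) or by treating the polynomial as a quadratic in $x$ and checking its discriminant. Monotonicity reduces the $\sigma_1$ and $\sigma_2$ pieces to one-variable polynomial checks.

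The main obstacle is the $\sigma_4$ region near the point where all three candidates are nearly tight. The extremal instance $(0,\frac12)$ gives equality, since there $\sigma_1$ and $\sigma_2$ both cost $3=\frac12+2.5$. Near it the margins vanish, so the region boundaries must be chosen so that the tight point lies on a boundary of a region whose inequality has equality only there. I would handle this by factoring: after clearing denominators, the polynomials for $\sigma_1$ and $\sigma_2$ should contain a factor like $(2y-1)$ or vanish to the right order at $(0,\frac12)$. I would extract that factor and bound the cofactor's sign.

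If hand-choosing the curves proves messy, the fallback is interval subdivision with rational corner bounds plus this local analysis near $(0,\frac12)$. That is essentially what a Lean/\texttt{nlinarith}-style certificate would check.
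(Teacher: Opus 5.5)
Your overall strategy, clearing the positive denominators and certifying one candidate on each piece of the triangle, is the same kind of argument the paper relies on; the paper's proof of this lemma is the machine-checked Lean file for these rational inequalities. However, the only concrete content of your plan is the map of which candidate works where, and that map is wrong in three places. Following it, you would try to prove inequalities that are false.

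First, at $x=0$ you have $D_1=\frac{y^2}{1-y}-\frac12=\frac{(2y-1)(y+1)}{2(1-y)}$, so $\sigma_1$ works exactly for $y\le\frac12$, not up to about $0.56$. Your own later observation that $\sigma_1$ is tight at $(0,\frac12)$ already contradicts the $0.56$.

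Second, $\sigma_4$ is not needed in a band $y\in[\frac12,\frac23]$ with moderate $x$. That band is covered by $\sigma_2$, and $\sigma_4$ can fail there: at $(x,y)=(0.2,0.6)$, $\expect{\cost(\sigma_4)}\approx 2.927$ exceeds the right-hand side $\approx 2.917$, while $D_2\approx -0.083$. Your own diagonal computation points to the real trouble spot. Both $\sigma_1$ and $\sigma_2$ fail only in a thin lens between the curve $D_2=0$ and the hypotenuse $x+y=1$. This lens runs from $(\frac13,\frac23)$ to the corner $(0,1)$, and near the corner the curve is approximately $y=1-2x$. For example, at $(0.1,0.85)$ one has $D_2\approx+0.10$. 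In the lens $\sigma_4$ has comfortable slack: $D_4\approx-0.2$ at $(0.1,0.85)$, and $D_4\to-\frac12$ at $(0,1)$.

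Third, $\sigma_4$ is far from tight at $(0,\frac12)$, where $D_4=\frac12$. The tight point concerns only $\sigma_1$ versus $\sigma_2$.

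At that point a global factor $(2y-1)$ does not exist; it appears only on the edge $x=0$. A split along $y=\frac12$ handles the point cleanly.
\begin{itemize}
\item[$\sigma_1$:] Clearing denominators gives $2(1-x)(1-y)D_1=(2y-1)(y+1)+x(1-3y)$. This is linear in $x$, and its values at $x=0$ and $x=y$ are $(2y-1)(y+1)$ and $-(1-y)^2$. So $\sigma_1$ covers all of $\{y\le\frac12\}$, with equality only at $(0,\frac12)$.
\item[$\sigma_2$:] On $y=\frac12$ one has $D_2=\frac12-\frac{1}{2(1-x)}\le0$, and on $x=0$ one has $D_2=\frac12-y$. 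So $\sigma_2$ is tight on these edges only at $(0,\frac12)$, and it can take over for $y\ge\frac12$ away from the lens.
\item[$\sigma_4$:] This covers a region containing the lens. Numerically, the line $y=0.8-0.6x$ separates the $\sigma_2$ and $\sigma_4$ pieces, with margins of roughly $0.02$ and $0.04$ respectively.
\end{itemize}
The $\sigma_2$ and $\sigma_4$ pieces still have to be certified. With this corrected decomposition your plan can be carried out. As submitted, it locates the tight point correctly but defers all of the actual verification and directs it at the wrong regions.
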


\noindent
For a visualization of the two lemmas above, see this \DesmosLink{Desmos graph}.

\begin{proof}[Proof of \Cref{thm: adgap,thm: add adgap}]
    We first assume, without loss of generality, that \(p_n\) is closer to \(1/2\) than \(p_1\). That is, that \(p_1\leq \bar{p}_n \Leftrightarrow p_1 +p_n\leq 1\). This matches the assumption in \Cref{lem: muladgap lean,lem: addadgap lean}. Let \(\mathcal{I}\) be the set of all problem instances. We then have that
    \begin{align*}
        \sup_{I\in\mathcal{I}}\frac{\inf_{\sigma\in S_N}\expect*{\cost(\sigma)}}{\inf_{a\in S_A}\expect*{\cost(a)}} &\leq \sup_{I\in\mathcal{I}}\frac{\min_{i\in\{1,2,3\}}\left\{\expect*{\cost(\sigma_i)}\right\}}{\inf_{a\in S_A}\expect*{\cost(a)}}\\
        &= \sup_{I\in\mathcal{I}}\frac{\min_{i\in\{1,2,3\}}\left\{\expect*{\cost(\sigma_i)}\right\}}{\expect*{\cost(a_n)}}\\
        &\leq 1.2
    \end{align*}
    where the first equality uses \Cref{prop: opt adapt,lem: a_n better} and the second inequality uses \Cref{lem: seq1,lem: seq2,lem: seq3,lem: adapt cost 1 n,lem: muladgap lean}. This proves \Cref{thm: adgap}. We can similarly write
    \begin{align*}
        \sup_{I\in\mathcal{I}}\left\{\inf_{\sigma\in S_N}\expect*{\cost(\sigma)} - \inf_{a\in S_A}\expect*{\cost(a)}\right\} &\leq \sup_{I\in\mathcal{I}}\left\{\min_{i\in\{1,2,4\}}\expect{\cost(\sigma_i)} - \inf_{a\in S_A}\expect*{\cost(a)}\right\}\\
        &= \sup_{I\in\mathcal{I}}\left\{\min_{i\in\{1,2,4\}}\expect{\cost(\sigma_i)} -\expect*{\cost(a_n)}\right\}\\
        &\leq \frac{1}{2}
    \end{align*}
    where we again use \Cref{prop: opt adapt,lem: a_n better} in the first equality and \Cref{lem: seq1,lem: seq2,lem: seq4,lem: adapt cost 1 n,lem: addadgap lean} in the second inequality. This proves Theorem \ref{thm: add adgap}.
\end{proof}

\section{Missing Proofs from Section~\ref{sec: duv}}\label{app: missing duv section}

\begin{proof}[Proof of \Cref{lem: properties after t}]
    Define
    \[
        r\in \argmax_{c\in\Omega}P(z-1;c)
    \]
    We first argue that
    \[
        \frac{P(z-1;r)}{\sum_{c\in\Omega}P(z-1;c)} \geq \frac{1}{2}
    \]
    For the sake of contradiction, suppose \(\frac{P(z-1;r)}{\sum_{c\in\Omega}P(z-1;c)} < \frac{1}{2}\). We show that this necessitates that our probability of stopping is at least \(\frac{1}{2}\), regardless of the die at position \(z\). To this end, we write
    \begin{align*}
        \Pr[\cost(\pi) > z \mid \cost(\pi) > z - 1] &= \frac{\sum_{c\in\Omega}P(z-1;c)c_{\pi(z)}}{\sum_{c\in\Omega}P(z-1;c)} \leq \frac{P(z-1;r)}{\sum_{c\in\Omega}P(z-1;c)} < \frac{1}{2}
    \end{align*}
    which contradicts our assumption that \(\Pr[\cost(\pi)>z\mid\cost(\pi)>z-1]>1/2\). We now argue property (2), that is, for all \(i\in[n]\backslash\pi([z-1])\), \(r_i\geq\frac{1}{2}\). Suppose, for the sake of contradiction, that there exists some \(i'\in\pi(\{z,\ldots,n\})\) with \(r_{i'}<\frac{1}{2}\). We show that this die would have at least a \(\frac{1}{2}\) probability of stopping our rolling, conditioned on it being rolled; this contradicts the definition of \(z\). If \(i'\) is used as the \(z\)\textsuperscript{th} die in \(\pi\), we have that
    \begin{align*}
        \Pr[\cost(\pi) > z \mid \cost(\pi) > z-1] &= \frac{\sum_{c\in\Omega}P(z-1;c)c_{\pi(z)}}{\sum_{c\in\Omega}P(z-1;c)}\\
        &= \frac{P(z-1;r)r_{i'}}{\sum_{c\in\Omega}P(z-1;c)} + \frac{\sum_{c\in\Omega\backslash\{r\}}P(z-1;c)c_{i'}}{\sum_{c\in\Omega}P(z-1;c)}\\
        &\leq \frac{P(z-1;r)r_{i'}}{\sum_{c\in\Omega}P(z-1;c)} + \frac{\bar{r}_{i'}\sum_{c\in\Omega\backslash\{r\}}P(z-1;c)}{\sum_{c\in\Omega}P(z-1;c)}\\
        &= r_{i'}\left(\frac{P(z-1;r)}{\sum_{c\in\Omega}P(z-1;c)}\right) + \bar{r}_{i'}\left(1 - \frac{P(z-1;r)}{\sum_{c\in\Omega}P(z-1;c)}\right)\\
        &\leq \frac{1}{2}\left(r_{i'} + \bar{r}_{i'}\right) = \frac{1}{2}
    \end{align*}
    where the final inequality follows from the fact that since \(r_{i'}<\frac{1}{2}\), the expression is decreasing in \(\frac{P(z-1;r)}{\sum_{c\in\Omega}P(z-1;c)}\), and since \(\frac{P(z-1;r)}{\sum_{c\in\Omega}P(z-1;c)} \geq \frac{1}{2}\), the expression is maximized at \(\frac{P(z-1;r)}{\sum_{c\in\Omega}P(z-1;c)} = \frac{1}{2}\). This completes the contradiction.
\end{proof}

\begin{proof}[Proof of \Cref{thm: greedy less than opt plus one}]
    Let \(\pi\) be the permutation produced by \Cref{alg: greedy}. We can write
    \begin{align*}
        \expect{\cost(\pi)} = \sum_{i=0}^{n-1}\Pr[\cost(\pi) > i] = 2 + \sum_{i=2}^{n-1}\Pr[\cost(\pi) > i]
    \end{align*}
    We first handle the case corresponding to step 1 of \Cref{alg: greedy}. That is, there is a color \(b\) such that \(b_i\geq 1/2\) for all \(i\in[n]\). Recall that in this case, \(\pi\) is sorted such that \(b_{\pi(1)}\leq\ldots\leq b_{\pi(n)}\). Let \(\OPT_B\) be the expected cost of an optimal adaptive strategy that rolls at least two dice and stops rolling when it observes some \(X_i\neq b\); clearly, \(\OPT_B\leq\OPT\). We can write
    \begin{align*}
        \expect{\cost(\pi)} &= 2+\sum_{i=2}^{n-1}\Pr[\cost(\pi)>i] = 2 +\sum_{i=2}^{n-1}\sum_{c\in\Omega}P(i;c) = 2+\sum_{i=2}^{n-1}\prod_{j=1}^ib_{\pi(j)} + \sum_{i=2}^{n-1}\sum_{c\in\Omega\backslash\{b\}}P(i;c)\\
        &\leq 2+\sum_{i=2}^{n-1}\prod_{j=1}^ib_{\pi(j)} + \sum_{i=2}^{n-1}\prod_{j=1}^i\bar{b}_{\pi(j)} \leq \OPT_B + \sum_{i=2}^{n-1}\frac{1}{2^i} \leq \OPT + \frac{1}{2}
    \end{align*}
    which proves the theorem in this case. We now assume that step 1 does not terminate the algorithm.\par
    Now, consider if \(\pi\) has no crossover point; we can see that \(\expect{\cost(\pi)} \leq 2 + \frac{1}{2} + \frac{1}{4} + \ldots \leq 3\), and noting that \(\OPT\geq 2\) we are done. So, assume \(\pi\) has a crossover point \(t\). Bounding the terms before the crossover point, we can write
    \begin{align*}
        \expect*{\cost(\pi)} &\leq 2 + \frac{1}{2} + \frac{1}{4} + \dots + \frac{1}{2^{t-2}} + \sum_{i=t}^{n-1}\Pr[\cost(\pi) > i]\\
        &= 2 + \frac{1}{2} + \frac{1}{4} + \dots + \frac{1}{2^{t-2}} + \sum_{c\in\Omega}\sum_{i=t}^{n-1}P(i;c)
    \end{align*}
    Let \(r\) be the color guaranteed by \Cref{lem: properties after t}. We now argue that
    \[
        \sum_{c\in\Omega\backslash\{r\}}\sum_{i=t}^{n-1}P(i;c) \leq \frac{1}{2^{t-1}}
    \]
    First note that, since \(\frac{P(t-1;r)}{\sum_{c\in\Omega}P(t-1;c)} \geq \frac{1}{2}\) by \Cref{lem: properties after t}, we know that
    \begin{align*}
        \frac{\sum_{c\in\Omega\backslash\{r\}}P(t-1;c)}{\sum_{c\in\Omega}P(t-1;c)} &= 1 - \frac{P(t-1;r)}{\sum_{c\in\Omega}P(t-1;c)} \leq \frac{1}{2}\\
        \Leftrightarrow \sum_{c\in\Omega\backslash\{r\}}P(t-1;c) &\leq \frac{1}{2}\sum_{c\in\Omega}P(t-1;c) = \frac{1}{2}\Pr[\cost(\pi) > t-1] \leq \frac{1}{2^{t-1}}
    \end{align*}
    where the equality on the second line follows from the fact that we only continue rolling dice as long as the results are identical. We now proceed by induction to show that \(\sum_{c\in\Omega\backslash\{r\}}P(k;c)\leq \frac{1}{2^{k}}\) for \(k\in\llbracket t,n-1\rrbracket\). Fix any such \(k\) and suppose \(\sum_{c\in\Omega\backslash\{r\}}P(k-1;c)\leq \frac{1}{2^{k-1}}\). Since \(r_i\geq\frac{1}{2}\) for \(i\in [n]\backslash \pi([t-1])\), we know that \(r_{\pi(k)} \geq \frac{1}{2}\). We can then write
    \begin{align*}
        \sum_{c\in\Omega\backslash\{r\}}P(k;c) &= \sum_{c\in\Omega\backslash\{r\}}P(k-1;c)c_{\pi(k)}\\
        &\leq \left(\sum_{c\in\Omega\backslash\{r\}}c_{\pi(k)}\right)\left(\sum_{c\in\Omega\backslash\{r\}}P(k-1;c)\right)\\
        &= \left(
            1-r_{\pi(k)}
        \right)\left(\sum_{c\in\Omega\backslash\{r\}}P(k-1;c)\right)\\
        &\leq \frac{1}{2}\cdot\frac{1}{2^{k-1}} = \frac{1}{2^k}
    \end{align*}
    Recalling our upper bound on \(\expect*{\cost(\pi)}\), we can now write
    \begin{align*}
        \expect*{\cost(\pi)} &\leq 2 + \frac{1}{2} + \frac{1}{4} + \dots + \frac{1}{2^{t-2}} + \sum_{c\in\Omega}\sum_{i=t}^{n-1}P(i;c)\\
        &\leq 2 + \frac{1}{2} + \frac{1}{4} + \dots + \frac{1}{2^{t-2}} + \frac{1}{2^{t}} + \dots + \frac{1}{2^{n-1}} + \sum_{i=t}^{n-1}P(i;r)\\
        &\leq 3 + \sum_{i=t}^{n-1}P(i;r)
    \end{align*}
    Now consider an optimal adaptive strategy for finding some \(X_i\neq r\), such that this strategy is required to query at least two variables. Denote \(\OPT_R\) the cost of this strategy, and note that \(\OPT_R \leq \OPT\), where \(\OPT\) is the expected cost of an optimal adaptive strategy for the \(d\)-ary Unanimous Vote problem. For \(i\in[n]\), let \(r^{(k)} \eqdef \Pr[X_i = r]\) for some unique \(k\) such that \(r^{(1)}\leq \dots\leq r^{(n)}\), and define the corresponding bijection \(\sigma_{r}(i) \eqdef k\) (i.e., \(\sigma_r(i)\) is the position of die \(i\) in a permutation sorted in non-decreasing order of \(\Pr[X_i=r]\) and, for example, \(\sigma_r^{-1}(1)\) gives the index of a die with minimum \(\Pr[X_i=r]\)). We have that
    \[
        \OPT_R = 2 + \sum_{i=2}^{n-1}\prod_{k=1}^ir^{(k)}
    \]
    In \(\pi\), either the position of die \(\sigma_r^{-1}(1)\) (a die with minimum \(\Pr[X_i=r]\)) is \(< t\) or it is not. If \(\pi^{-1}(\sigma_r^{-1}(1))\geq t\) (die \(\sigma_r^{-1}(1)\) was placed at or after the crossover point) then \(\frac{1}{2}\leq r^{(1)}\leq\ldots\leq r^{(n)}\), and this contradicts the assumption that step 1 did not terminate the algorithm. So, we have \(\pi^{-1}(\sigma_r^{-1}(1)) < t\). We now argue that
    \[
        \sum_{i=t}^{n-1}P(i;r) \leq \sum_{i=2}^{n-1}\prod_{k=1}^ir^{(k)}
    \]
    Consider the first term of \(\sum_{i=t}^{n-1}P(i;r)\), which is \(P(t;r)=\prod_{k=1}^{t}r_{\pi(k)}\). Since \(\sigma_{r}^{-1}(1)\) is prior to the crossover point, \(r^{(1)}\) is included in this product. Now, \(t\) is at least \(2\), which means there are at least two factors in this product. If \(\sigma_{r}^{-1}(2)\) is similarly prior to the crossover point, then \(r^{(2)}\) is included in the product as well; if \(\sigma_{r}^{-1}(2)\) is at or after the crossover point in \(\pi\), then it will be included in the product via factor \(r_{\pi(t)}\) (since dice at or to the right of the crossover point are sorted in non-decreasing order of \(\Pr[X_i=r]\)). Thus, \(P(t;r)\leq r^{(1)}r^{(2)} = \prod_{k=1}^2r^{(k)}\), the first term of \(\sum_{i=2}^{n-1}\prod_{k=1}^ir^{(k)}\). We proceed by inductively showing that each successive term of \(\sum_{i=t}^{n-1}P(i;r)\) has each factor of a corresponding term in \(\sum_{i=2}^{n-1}\prod_{k=1}^{i}r^{(k)}\) and is thus upper bounded by it. Consider an arbitrary term (after the first) in the sum \(\sum_{i=t}^{n-1}P(i;r)\). Suppose this term is the \(m\)\textsuperscript{th} term in the sum and assume that each factor in the product \(\prod_{k=1}^{m+1}r^{(k)}\) is included in the product \(\prod_{k=1}^{t+m-1}r_{\pi(k)} = P(t+m-1;r)\) (which implies that \(\prod_{k=1}^{m+1}r^{(k)} \geq P(t+m-1;r)\)). We know that \(P(t+m;r) = P(t+m-1;r)r_{\pi(t+m)}\). The newly added factor in the product \(\prod_{k=1}^{m+2}r^{(k)}\) is \(r^{(m+2)}\); if die \(\sigma_r^{-1}(m+2)\)'s position in \(\pi\) is \(\leq t + m - 1\), we are done. If it is \(> t + m - 1\), it will be placed in position \(t + m\), since dice to the right of the crossover point are sorted in non-decreasing order of \(\Pr[X_i=r]\) and all dice with lower values of \(\Pr[X_i=r]\) appeared at positions \(\leq t + m - 1\), by our inductive assumption. Then, we have that all factors in the product \(\prod_{k=1}^{m+2}r^{(k)}\) appear in the product \(P(t+m;r)=\prod_{k=1}^{t+m}r_{\pi(k)}\), and therefore that \(\prod_{k=1}^{m+2}r^{(k)}\geq P(t+m;r)\). This completes the inductive step, which shows that \(\sum_{i=t}^{n-1}P(i;r)\leq \sum_{i=2}^{n-1}\prod_{k=1}^{i}r^{(k)}\) (since the inequality holds term-wise). Recalling our upper bound on \(\expect*{\cost(\pi)}\), we can write
    \begin{align*}
        \expect*{\cost(\pi)} - \OPT_R &\leq \left(3 + \sum_{i=t}^{n-1}P(i;r)\right) - \left(2 + \sum_{i=2}^{n-1}\prod_{k=1}^ir^{(k)}\right)\leq 1\\
        \Rightarrow \expect*{\cost(\pi)} &\leq\OPT+1
    \end{align*}
    which completes the proof.
\end{proof}

\subsection{Counterexamples to the Optimality of Greedy Rules A and B}\label{app: counterexamples}

Recall that we are considering the Unlimited-Rolls variant. For greedy rule A, suppose we have two 3-sided dice with distributions \((2/3, 1/3, 0)\) (die 1) and \((5/6 - \varepsilon, 0, 1/6 + \varepsilon)\) (die 2), for an arbitrarily small \(\varepsilon>0\), over the same colors (red, green, blue); e.g., \(\Pr[X_1=\text{red}]=2/3\) and \(\Pr[X_2 = \text{blue}]=1/6+\varepsilon\). Call the first die 1, and the second die 2. One can verify that the two unique greedy sequences are:
\[
    g_1 : 1, 2, 1, 1, \dots\text{ and } g_2 : 2, 1, 1, 1, \dots
\]
The ellipses indicate that the sequences will continue rolling die 1. Note that these two sequences will have identical expected cost; so, we select one arbitrarily, call it \(g\). We can write the expected cost of \(g\) as
\begin{align*}
    \expect*{\cost(g)} &= 1 + \left(\left(\frac{5}{6} - \varepsilon\right) + \left(\frac{5}{6} - \varepsilon\right)\left(\frac{2}{3}\right) + \left(\frac{5}{6} - \varepsilon\right)\left(\frac{2}{3}\right)^2 + \dots\right) + \left(\frac{1}{6} + \varepsilon\right)\\
    &= 1 + 1 + \left(\frac{5}{6}-\varepsilon\right)\left(\frac{2/3}{1-2/3}\right) = 2 + \left(\frac{10}{6} - 2\varepsilon\right) = \frac{22}{6} - 2\varepsilon \approx 3.66\dots
\end{align*}
Consider instead the sequence rolling only die 1, call it \(\sigma^*\). We can write
\begin{align*}
    \expect*{\cost(\sigma^*)} &= 1 + \left(\frac{2}{3} + \left(\frac{2}{3}\right)^2 + \dots\right) + \left(\frac{1}{3} + \left(\frac{1}{3}\right)^2 + \dots\right) = 1 + \frac{2/3}{1 - 2/3} + \frac{1/3}{1-1/3} = 1 + 2 + \frac{1}{2} = 3.5
\end{align*}
For greedy rule B, our counterexample is simpler. Suppose we have two 3-sided dice with distributions \((1/2, 1/3, 1/6)\) (die 1) and \((1/2, 1/6, 1/3)\) (die 2) over colors (red, green, blue). It is easy to see that any sequence will be red-biased at any position except the first, which is always unbiased on any instance. So, we can construct a sequence \(g\) following greedy rule B by only choosing die 1. One can easily compute that \(\expect{\cost(g)} = 1 + \frac{1/2}{1-1/2}+\frac{1/3}{1-1/3}+\frac{1/6}{1-1/6} = 27/10 = 2.7\). However, one can compute that the sequence that alternates between die 1 and 2 has expected number of rolls equal to \(45/17\approx2.647\).

\subsection{NP-Hardness of \textit{d}-ary Unanimous Vote}\label{app: d-ary np-hard}

We reduce from Exact Cover by 3-sets (X3C), which is \(\mathsf{NP}\)-complete~\cite{garey1979}. In X3C, we are given a set \(G = \{1,\ldots,3k\}\) for an integer \(k\), along with a family of subsets \(\{S_j\subseteq G\}\) such that \(|S_j| = 3\) for all \(j\). We are asked to determine whether or not there is a subfamily \(\{T_i\}\subseteq \{S_j\}\) such that the \(T_i\) are disjoint and \(\cup T_i = G\).

\begin{theorem}
    The \(d\)-ary Unanimous Vote problem is \(\mathsf{NP}\)-hard.
\end{theorem}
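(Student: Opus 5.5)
We reduce from X3C, following the hint in Section~\ref{sec: duv}: a die $i$ ``covers'' color $c$ when $\Pr[X_i=c]=0$. Given an X3C instance with ground set $G=\{1,\ldots,3k\}$ and triples $S_1,\ldots,S_m$, we build one die per triple, together with a small number of gadget dice. The color set is $\Omega = G\cup\{\ast\}$, so $d=3k+1$. Die $j$ puts probability $0$ on each color in $S_j$ and spreads its remaining mass uniformly over the other $3k-3$ colors of $G$, with possibly some mass on $\ast$. After rolling a prefix of dice, the process continues only if every roll so far shows the same color. So $\Pr[\cost(\pi)>i]=\sum_c P_\pi(i;c)$, and a color $c$ contributes nothing once some die covering it has been rolled. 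The intuition is that a prefix of $k$ dice whose triples partition $G$ ``kills'' every color of $G$ as quickly as possible.

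To make the cost depend sharply on covering, we first fix an opening die $0$ (or several copies) that is uniform over $G$ and has no mass on $\ast$. After die $0$, each color of $G$ carries weight $1/(3k)$, and we aim to show that the optimum puts die $0$ first (or that its position can be forced by scaling). Each subsequent triple die then multiplies every color it does not cover by $1/(3k-3)$ and zeroes out the covered ones. The key quantity is $\sum_{i}\sum_{c} P(i;c)$. With all surviving colors equally weighted, the expected contribution at depth $i$ is roughly (number of uncovered colors)$\cdot(3k)^{-1}(3k-3)^{-(i-1)}$. The number of uncovered colors after $i$ triples is at least $3k-3i$, with equality for every $i\le k$ if and only if the first $k$ triples are disjoint, i.e., form an exact cover. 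After that point every color of $G$ is dead, and the remaining tail is driven only by $\ast$ or by nothing at all. We then set a threshold $C^*$ equal to the cost of the ``exact-cover-first'' ordering and claim that OPT $\le C^*$ if and only if an exact cover exists.

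The main obstacle is the exchange and rigidity argument. We must show that every ordering in which some prefix fails to cover fresh colors loses strictly, with a gap that survives the later-tail terms, and that rearranging non-prefix dice cannot compensate. The plan is to use the geometric decay factor $1/(3k-3)$, possibly amplified by adding extra ``dummy'' colors so that each die's nonzero probabilities become tiny. Then the term at depth $i$ dominates the total of all deeper terms, and the objective effectively compares orderings lexicographically by the number of uncovered colors at each depth. Under that lexicographic comparison, greedy-disjoint orderings are exactly the optimal ones. We would tune the dummy-color count so that this domination holds with polynomially bounded encodings.

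Finally, we verify that the reduction is polynomial: $d$ and $n$ are polynomial, and all probabilities are rationals of polynomial bit length. The threshold $C^*$ is computed in closed form as a finite sum. We also need to handle the edge case in which die $0$ might not be placed first: placing a triple die first changes the weights only by symmetric factors, and the lexicographic domination argument still applies.
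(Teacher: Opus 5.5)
Your encoding is the right one: one die per triple, with $\Pr[X_j=c]=0$ exactly for $c\in S_j$ and uniform mass on the other $3k-3$ colors. But the proposal never proves the NO direction. The lexicographic-domination step, the forcing of die $0$'s position, and the dummy-color tuning are all left as intentions. The gadgets are exactly what create the cross-depth trade-offs you call the main obstacle.

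Worse, the property you want for die $0$ is false in the natural instantiation (die $0$ uniform on $G$, no mass on $\ast$). Die $0$ followed by a triple gives $\Pr[\cost>2]=1/(3k)$. Two disjoint triples give $(3k-6)/(3k-3)^2$, which is smaller by $9/\bigl(3k(3k-3)^2\bigr)$. More generally, moving die $0$ from the front to the end of an exact-cover ordering strictly lowers every term at depths $2,\dots,k$. So the optimum does not put die $0$ first.

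If $C^*$ is computed from the ``die $0$, then exact cover'' ordering, as your plan suggests, it is too large and the claimed equivalence fails. For $k\ge 3$, a NO instance containing $k-1$ pairwise disjoint triples already has an ordering cheaper than $C^*$. The reason is that the depth-$2$ saving exceeds the tail from the three uncovered colors, which is at most $3/\bigl((3k-3)^{k-1}(3k-4)\bigr)$. Putting mass on $\ast$ only adds another real trade-off: rolling die $0$ early kills $\ast$ but spends a slot that could cover colors. Also, a triple die rolled first does not act ``by symmetric factors''; it zeroes three colors.

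The missing idea is that the bare instance ($\Omega=G$, $d=3k$, only the triple dice) has no trade-off between depths, so no domination argument is needed. For every permutation $\pi$ and every $j\ge 2$, $\Pr[\cost(\pi)\ge j]=\bigl(3k-|\bigcup_{i<j}S_{\pi(i)}|\bigr)/(3k-3)^{j-1}$. Since $j-1$ triples cover at most $3(j-1)$ colors, each term is at least $\bigl(3k-3(j-1)\bigr)/(3k-3)^{j-1}$. An exact-cover prefix attains this bound for every $j\le k$ simultaneously and makes every term with $j\ge k+1$ vanish. It is therefore termwise minimal, and later terms can never compensate for earlier ones.

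In a NO instance, the first $k$ triples of any $\pi$ miss some color, since $k$ triples covering $G$ would have to be disjoint. So the $(k+1)$-st term is strictly positive, provided there are more than $k$ triples; otherwise X3C is trivial. Thus the optimal cost is at most $1+\sum_{j=2}^{k}\bigl(3k-3(j-1)\bigr)/(3k-3)^{j-1}$ if and only if an exact cover exists. You already stated the needed counting fact: at least $3k-3i$ colors remain uncovered after $i$ triples, with equality iff they are disjoint. Applied term by term, that fact is the whole proof.
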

\begin{proof}
    First, we assume the number of 3-sets in the X3C instance is strictly greater than \(k\); if there are at most \(k\) subsets, the problem is trivial. Given an instance of X3C, we can construct an instance of \(d\)-ary Unanimous Vote as follows. We define the set of colors \(\Omega \eqdef \{1,\ldots,3k\}\) and dice corresponding to each subset \(S_j\), where
    \[
        \Pr[X_j = c] = \left\{
            \begin{array}{cc}
                0 &\text{if}\quad c\in S_j\\
                \frac{1}{3k-3} &\text{if}\quad c\notin S_j
            \end{array}\right\}
        = \frac{\mathbf{1}_{c\notin S_j}}{3k-3}
    \]
    Suppose that the X3C instance is a YES instance, and let \(I\) be the set of indices of the disjoint covering subfamily (that is, \(\cup_{i\in I}S_i = G\) and \(\sum_{i\in I} |S_i|=3k\)). Now consider a permutation \(\pi\) that places the dice \(I\) in its first \(k\) positions. For any permutation \(\sigma\) and position \(1 < j\leq k\), the probability that die \(\sigma(j)\) is rolled is
    \[
        \Pr[\cost(\sigma)\geq j] = \sum_{c\in\Omega}\prod_{i=1}^{j-1}\Pr[X_{\sigma(i)}=c] = \sum_{c\in\Omega}\prod_{i=1}^{j-1}\frac{\mathbf{1}_{c\notin S_{\sigma(i)}}}{3k-3} = \sum_{c\in\Omega}\frac{\mathbf{1}_{c\notin\cup_{i=1}^{j-1}S_{\sigma(i)}}}{(3k-3)^{j-1}}=\frac{3k-|\bigcup_{i=1}^{j-1}S_{\sigma(i)}|}{(3k-3)^{j-1}}\tag{\(\ast\)}
    \]
    where the final equality follows from the fact that \(|\Omega| = 3k\). In the case of \(\pi\), we have
    \[
        \frac{3k-|\bigcup_{i=1}^{j-1}S_{\pi(i)}|}{(3k-3)^{j-1}} = \frac{3k-3(j-1)}{(3k-3)^{j-1}}
    \]
    which follows from the fact that the \(S_{\pi(i)}\) are disjoint for \(1\leq i\leq k\). So, we have that
    \[
        \expect{\cost(\pi)} = \sum_{j=1}^{n}\Pr[\cost(\pi)\geq j] = 1 + \sum_{j=2}^{k}\frac{3k-3(j-1)}{(3k-3)^{j-1}}
    \]
    And, necessarily, we have
    \[
        \OPT \leq 1 + \sum_{j=2}^{k}\frac{3k-3(j-1)}{(3k-3)^{j-1}}
    \]
    Now consider if the X3C instance is a NO instance. Let \(\pi\) be any permutation, and let \(I\eqdef \cup_{i=1}^{k}\{\pi(i)\}\) be the first \(k\) elements of \(\pi\). By assumption, \(\{S_i\}_{i\in I}\) is not a cover of \(G\), since if it were, it would be disjoint since \(|I|= k\). Using the formula \((\ast)\) we have 
    \begin{align*}
        \expect{\cost(\pi)} = \sum_{j=1}^n\Pr[\cost(\pi)\geq j] = 1+\sum_{j=2}^n\frac{3k-|\bigcup_{i=1}^{j-1}S_{\pi(i)}|}{(3k-3)^{j-1}} > 1+\sum_{j=2}^k\frac{3k-|\bigcup_{i=1}^{j-1}S_{\pi(i)}|}{(3k-3)^{j-1}} \geq 1+\sum_{j=2}^k\frac{3k-3(j-1)}{(3k-3)^{j-1}}
    \end{align*}
    where the first inequality follows from the fact that \(n > k\) and the \((k+1)^{\text{th}}\) term is positive (since \(|\cup_{j=1}^{k}S_{\pi(j)}| < 3k\)). In particular, we have that
    \[
        \OPT > 1+\sum_{j=2}^k\frac{3k-3(j-1)}{(3k-3)^{j-1}}
    \]
    Since this is the same quantity that upper bounded \(\OPT\) in the YES instance, we can distinguish YES and NO instances of X3C via the optimal expected cost of the constructed \(d\)-ary Unanimous Vote instance.
\end{proof}

\subsection{PTAS for \textit{d}-ary Unanimous Vote}\label{app: d-ary ptas}

To describe and analyze the PTAS, we require the following generalized definition of the crossover point restricted to a \textit{suffix} of \(\pi\).

\begin{definition}
    Let \(\pi\) be a permutation and let \(\ell\geq2\) define the suffix \(\pi(\ell),\pi(\ell+1),\ldots,\pi(n)\). Define \(t\geq\ell\), the \textnormal{crossover point} of the suffix of \(\pi\) starting from \(\ell\), to be minimal such that, for any \(\pi'\) identical to \(\pi\) in positions \(1,\ldots,t-1\), we have
    \[
        \Pr[\cost(\pi') > t \mid \cost(\pi') > t - 1] > \frac{1}{2}
    \]
\end{definition}

\begin{algorithm}
\caption{Polynomial-Time Approximation Scheme for \(d\)-ary Unanimous Vote}\label{alg: d-ary ptas}
\vspace{2pt}
\begin{enumerate}
    \item \(K\gets\lceil1/\varepsilon\rceil\)
    \item For each possible prefix \(p\) of length \(K\):
    \begin{enumerate}
        \item Fix permutation \(\pi_p\) to have prefix \(p\).
        \item If there is \(b\in\Omega\) such that \(b_i \geq \frac{1}{2}\) for each remaining die \(i\):
        \begin{enumerate}
            \item Add remaining dice in non-decreasing order of \(b_i\) and continue to the next iteration.
        \end{enumerate}
        \item From position \(K+1\) forward, choose dice using greedy rule A until we have either reached the crossover point of the suffix of \(\pi_p\) after \(p\) or we have used all dice.
        \item If we reached the crossover point \(t\) on the previous step, let \(r\) be the color guaranteed by \Cref{lem: properties after t}. Add remaining dice in non-decreasing order of \(r_i\) from position \(t\) forward.
    \end{enumerate}
    \item Return \(
        \pi \in \argmin_p \expect*{
            \cost(\pi_p)
        }
    \)
\end{enumerate}
\vspace{-8pt}
\end{algorithm}

Let \(\pi^*\) be an optimal permutation and let \(\pi\) be the permutation produced by \Cref{alg: d-ary ptas}; let \(\pi'\) be the permutation considered by \Cref{alg: d-ary ptas} whose first \(K\) positions match \(\pi^*\). Define
\begin{align*}
    \rho^*_j &\eqdef  \Pr[\cost(\pi^*) > j] = \sum_{c\in\Omega}\prod_{i=1}^jc_{\pi^*(i)}\\
    \rho_j &\eqdef  \Pr[\cost(\pi') > j] = \sum_{c\in\Omega}\prod_{i=1}^jc_{\pi'(i)}
\end{align*}
\begin{lemma}\label{lem: opt over k}
    For \(1\leq k\leq n-1\), we have
    \[
        \frac{\expect{\cost(\pi^*)}}{k}\geq \rho_k^*
    \]
\end{lemma}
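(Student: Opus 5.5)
We use the tail-sum formula for the expected cost together with the monotonicity of the survival probabilities $\rho^*_j$. Writing
\[
    \expect{\cost(\pi^*)} = \sum_{j=0}^{n-1}\Pr[\cost(\pi^*) > j] = \sum_{j=0}^{n-1}\rho^*_j,
\]
with $\rho^*_0 = 1$, the claim reduces to a lower bound on this sum in terms of $k\rho^*_k$.

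The key observation is that $\rho^*_j$ is non-increasing in $j$. This holds because the event $\{\cost(\pi^*) > j\}$ is contained in $\{\cost(\pi^*) > j-1\}$: if rolling continues past $j$ dice, it certainly continued past $j-1$. One can also check it directly from the formula $\rho^*_j = \sum_{c\in\Omega}\prod_{i=1}^j c_{\pi^*(i)}$. Each product loses a factor $c_{\pi^*(j)}\le 1$ when passing from $j-1$ to $j$, so every summand, and hence the sum, does not increase.

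Since all terms are nonnegative, we drop the terms with $j\ge k$ and apply monotonicity to the rest:
\[
    \expect{\cost(\pi^*)} \ge \sum_{j=0}^{k-1}\rho^*_j \ge k\,\rho^*_{k-1} \ge k\,\rho^*_k .
\]
Dividing by $k$ gives the stated inequality. The range $1\le k\le n-1$ ensures that the indices $0,\ldots,k-1$ all appear in the tail sum.

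There is no real obstacle. The only care needed is to index the tail sum correctly, starting at $j=0$ with $\rho^*_0 = 1$, so that exactly $k$ terms precede $\rho^*_k$.
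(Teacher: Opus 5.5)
Your proof is correct and follows essentially the paper's argument: the tail-sum formula for $\expect{\cost(\pi^*)}$, combined with the fact that $\rho^*_j$ is non-increasing in $j$. The only cosmetic difference is in the indexing: the paper keeps the leading $1$ separate and bounds $\sum_{j=1}^{k}\rho^*_j \geq k\rho^*_k$ directly, whereas you absorb the $1$ as $\rho^*_0$ and then use the extra step $\rho^*_{k-1}\geq\rho^*_k$.
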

\begin{proof}
    \begin{align*}
        \expect{\cost(\pi^*)} &= 1 + \sum_{j=1}^{n-1}\rho^*_j \geq 1 + \sum_{j=1}^{k}\rho_j^* \geq 1 + k\rho_k^*\\
        \Rightarrow\frac{\expect{\cost(\pi^*)}}{k}&\geq \rho_k^*
    \end{align*}
\end{proof}

\noindent
Recall \(K\eqdef\lceil1/\varepsilon\rceil\). We first note that \(\expect{\cost(\pi)}\leq\expect{\cost(\pi')}\) and we proceed to show \(\expect{\cost(\pi')}\leq\expect{\cost(\pi^*)}(1+\varepsilon)\). Consider the case in which step 2(b) terminates the iteration corresponding to \(\pi'\). We then have
\begin{align*}
    \expect{\cost(\pi')} &= 1 + \sum_{j=1}^K\rho^*_j + \sum_{j=K+1}^{n-1}\rho_j\\
    &\leq 1 + \sum_{j=1}^K\rho^*_{j} + \sum_{j=K+1}^{n-1}\prod_{k=1}^{j}b_{\pi^*(k)} + \rho^*_{K}\sum_{j=K+1}^{n-1}\prod_{k=K+1}^j\bar{b}_{\pi'(k)}\\
    &\leq 1 + \sum_{j=1}^{n-1}\rho^*_j + \rho_{K}^*\sum_{j=K+1}^{n-1}\frac{1}{2^{j-K}}\\
    &\leq \expect{\cost(\pi^*)} + \frac{\expect{\cost(\pi^*)}}{K}\\
    &\leq \expect{\cost(\pi^*)}(1+\varepsilon)
\end{align*}
where the first inequality follows from the fact that dice in the suffix of \(\pi'\) are sorted in non-decreasing order of \(b_i\) and by the fact that \(\sum_{c\in\Omega\backslash\{b\}}\prod_{k=K+1}^jc_{\pi'(k)}\leq\prod_{k=K+1}^j\bar{b}_{\pi'(k)}\). This proves the result in this case. Now, consider the case where the suffix of \(\pi'\) strictly after \(K\) has no crossover point. We can write
\begin{align*}
    \expect{\cost(\pi')} &= 1 + \sum_{j=1}^K\rho^*_j + \sum_{j=K+1}^{n-1}\rho_j\\
    &\leq 1 + \sum_{j=1}^K\rho^*_j + \sum_{j=K+1}^{n-1}\frac{\rho_K^*}{2^{j-K}}\\
    &\leq \expect{\cost(\pi^*)} + \frac{\expect{\cost(\pi^*)}}{K}\\
    &\leq \expect{\cost(\pi^*)}(1+\varepsilon)
\end{align*}

\noindent
Our final case is that in which step 2(b) did not terminate the iteration corresponding to \(\pi'\) and the suffix of \(\pi'\) from position \(K+1\) forward has a crossover point. As in the pseudocode, \(r\in\Omega\) is the color guaranteed by \Cref{lem: properties after t}.
\begin{lemma}\label{lem: r at most rho}
    Suppose the suffix of \(\pi'\) from position \(K+1\) forward has a crossover point \(t\). For \(0\leq j \leq n - t-1\)
    \[
        \prod_{i=1}^{t+j}r_{\pi'(i)}\leq\rho_{K+j+1}^*
    \]
\end{lemma}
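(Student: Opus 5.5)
The plan is to compare the two sides factor by factor. Split each into the common prefix (positions $1,\ldots,K$, where $\pi'$ and $\pi^*$ agree) and the part coming from the dice not in that prefix. Then the right-hand side is bounded below by its $c=r$ term.

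\textbf{Reducing to the remaining dice.} First, since every summand is nonnegative,
\[
\rho^*_{K+j+1}=\sum_{c\in\Omega}\prod_{i=1}^{K+j+1}c_{\pi^*(i)}\geq\prod_{i=1}^{K+j+1}r_{\pi^*(i)},
\]
so it suffices to prove $\prod_{i=1}^{t+j}r_{\pi'(i)}\leq\prod_{i=1}^{K+j+1}r_{\pi^*(i)}$. Because $\pi'(i)=\pi^*(i)$ for $i\leq K$, the first $K$ factors are identical. Let $R\eqdef[n]\setminus\pi^*([K])$ be the set of remaining dice. We are then reduced to showing
\[
\prod_{i=K+1}^{t+j}r_{\pi'(i)}\leq\prod_{i=K+1}^{K+j+1}r_{\pi^*(i)}.
\]
The right side is a product of $r$-values of some $j+1$ distinct dice of $R$. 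The left side is a product over $t+j-K\geq j+1$ distinct dice of $R$, using $t\geq K+1$.

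\textbf{The left side contains the $j+1$ smallest $r$-values in $R$.} Let $r^{(1)}\leq r^{(2)}\leq\cdots$ denote the sorted values $r_i$ for $i\in R$. I will show that the dice at positions $K+1,\ldots,t+j$ of $\pi'$ include dice realizing $r^{(1)},\ldots,r^{(j+1)}$. Given this, the claim follows: all other factors on the left lie in $[0,1]$, so the left side is at most $\prod_{k=1}^{j+1}r^{(k)}$. That quantity is in turn at most the product of the $r$-values of any $j+1$ dice of $R$, in particular of $\pi^*(K+1),\ldots,\pi^*(K+j+1)$.

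\textbf{Where the minimum sits (main obstacle).} The argument mirrors the one in the proof of \Cref{thm: greedy less than opt plus one}, and its key step is to show that a die in $R$ with minimum $r$-value sits at a position $<t$ in $\pi'$. Suppose it instead sits at position $\geq t$. By property (2) of \Cref{lem: properties after t}, applied with $z=t$, every die at position $\geq t$ has $r_i\geq 1/2$. The dice of $R$ at positions $K+1,\ldots,t-1$ then have $r$-values at least this minimum, hence also $\geq 1/2$. So every remaining die has $r_i\geq1/2$. Step 2(b) would then have fired with $b=r$ and terminated the iteration, contradicting the case assumption.

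\textbf{Induction on $m$.} I then induct on $m=0,\ldots,j$, with the claim that $r^{(1)},\ldots,r^{(m+1)}$ are all realized among positions $K+1,\ldots,t+m$. The base case $m=0$ uses the fact just shown: the minimum die is at a position in $[K+1,t-1]$, and there is at least the one position $t$ beyond it. For the inductive step, consider the die realizing $r^{(m+2)}$. If it lies at a position $\leq t+m$ we are done. Otherwise it lies after position $t+m$. The dice from position $t$ onward are sorted in non-decreasing order of $r_i$ (step 2(d)), and all dice with smaller $r$-value are already placed by the inductive hypothesis. So a die with value $r^{(m+2)}$ must occupy position $t+m+1$, after breaking ties so that the chosen die is the one actually placed there. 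Taking $m=j$ gives the needed containment and completes the proof.
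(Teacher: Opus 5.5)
Your proof is correct and follows the paper's argument. The paper writes the chain $\prod_{i=1}^{t+j}r_{\pi'(i)}\leq\prod_{i=1}^{K}r_{\pi^*(i)}\prod_{i=1}^{j+1}r_{s(i)}\leq\prod_{i=1}^{K+j+1}r_{\pi^*(i)}\leq\rho^*_{K+j+1}$, with $s$ sorting the remaining dice by $r_i$, and leaves the first inequality implicit; your induction justifies it as in the proof of \Cref{thm: greedy less than opt plus one}. (One minor remark: the step placing the minimum die before $t$, and hence the appeal to step 2(b), is not needed here, since positions $K+1,\ldots,t+j$ contain the whole greedy-A block plus the first $j+1$ dice of the sorted block, which always include the $j+1$ smallest $r$-values of the remaining dice.)
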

\begin{proof}
    Let \(s:[n - K]\to[n]\) be defined by sorting the variable indices in the last \(n - K\) positions of \(\pi^*\) in non-decreasing order of \(\Pr[X_i=r]\). E.g. \(X_{s(1)}\) has minimal probability of being color \(r\) among these variables. We have
    \begin{align*}
        \prod_{i=1}^{t+j}r_{\pi'(i)}\leq \prod_{i=1}^K{r_{\pi^*(i)}}\prod_{i=1}^{j+1}r_{s(i)}\leq \prod_{i=1}^{K+j+1}r_{\pi^*(i)}\leq \rho_{K+j+1}^*
    \end{align*}
\end{proof}

\noindent
We can now write
\begin{align*}
    \expect{\cost(\pi')} &= 1 + \sum_{j=1}^K\rho^*_j + \sum_{j=K+1}^{n-1}\rho_j\\
    &\leq 1 + \sum_{j=1}^K\rho^*_j + \sum_{j=K+1}^{t-1}\frac{\rho_K^*}{2^{j-K}} + \sum_{j=t}^{n-1}\sum_{c\in\Omega}\prod_{i=1}^jc_{\pi'(i)}\\
    &\leq 1 +\sum_{j=1}^K\rho^*_j + \sum_{j=1}^{t-K-1}\frac{\rho_K^*}{2^{j}} + \sum_{j=t}^{n-1}
    \left(
        \sum_{c\in\Omega\backslash\{r\}}\prod_{i=1}^jc_{\pi'(i)} + \rho^*_{K+j-t+1}
    \right)\\
    &\leq 1 +\sum_{j=1}^K\rho^*_j + \sum_{j=1}^{t-K-1}\frac{\rho_K^*}{2^{j}} + \sum_{j=t}^{n-1}
    \left(
        \frac{\rho_K^*}{2^{j-K}} + \rho^*_{K+j-t+1}
    \right)\\
    &= 1 +\sum_{j=1}^K\rho^*_j + \sum_{j=1}^{t-K-1}\frac{\rho_K^*}{2^{j}} + \sum_{j=t-K}^{n-K-1}
    \frac{\rho_K^*}{2^{j}} + 
    \sum_{j=t}^{n-1}\rho^*_{K+j-t+1}\\
    &\leq 1 + \sum_{j=1}^K\rho^*_j + \sum_{j=t}^{n-1}\rho^*_{K+j-t+1} + \rho_K^*\\
    &\leq \expect{\cost(\pi^*)} + \rho_K^*\\
    &\leq \expect{\cost(\pi^*)} + \frac{\expect{\cost(\pi^*)}}{K}\\
    &= \expect{\cost(\pi^*)}\left(1+\frac{1}{K}\right)\\
    &\leq \expect{\cost(\pi^*)}(1+\varepsilon)
\end{align*}
where the first inequality follows from the definition of the crossover point, the second inequality follows from \Cref{lem: r at most rho}, the third inequality follows from \Cref{lem: properties after t}, and the second-to-last inequality follows from \Cref{lem: opt over k}.

\section{Approximating the Greedy Choice with Sampling}\label{app: sampling}

In this section, we show how we can, efficiently and with high probability, estimate the greedy score of \cite{azar2011} of a particular element to within a constant factor via sampling, given that the score of the greedy choice is ``high enough''; if no element has a ``high'' greedy score, we show that this portion of the ordering has a ``low enough'' probability of being reached. Our analysis is inspired by, and nearly identical to, the analysis done in Appendix A.2 of \cite{ghuge2022}. We include our analysis in detail here for completeness. Our analysis will require having at least \(K = 32n^4\ln(2n^3)\) samples after rejection. We first show that we can either obtain this many samples after rejection with high probability or that any ordering is a good approximation.\par

Let \(\pi\) be our produced permutation. Suppose at each greedy step we randomly generate \(n^2K\) realizations (using the independent distribution of the instance of Finite Coupon Collection) and discard those for which \(\alpha\notin\mathcal{G}\). There are two cases: \(\Pr[\alpha\in\mathcal{G}]\geq\frac{1}{n}\) and \(\Pr[\alpha\in\mathcal{G}] < \frac{1}{n}\). We consider the former case first. We want that at every step we reject at most \((n^2-1)K\) samples with probability at least \(1-\frac{1}{n^2}\). Let \(R\) be the number of rejected samples (by the independence of realizations, \(R\) is a binomial random variable with mean at most \(\left(1-\frac{1}{n}\right)n^2K = n(n-1)K\)). We can write
\begin{align*}
    \Pr[R > (n^2-1)K] &= \Pr[R - n(n-1)K > (n-1)K]\\
    &\leq \exp\left(-2\cdot n^2K\cdot \left(\frac{(n-1)K}{n^2K}\right)^2\right)\\
    &= \exp\left(-2\cdot K\cdot \frac{(n-1)^2}{n^2}\right)\\
    &\leq \frac{1}{n^2}
\end{align*}
where we used Hoeffding's Inequality (see \Cref{lem: hoeffding}, which is taken from \cite{hoeffding1963}) in the first inequality and in the final inequality we used that \(e^{-x}\leq\frac{1}{x}\) for \(x > 0\). By a union bound, all greedy steps have at least \(K\) realizations after rejection with probability at least \(1-\frac{1}{n}\). So, if \(\pi\) is the permutation produced and \(\mathcal{K}\) is the event that \(K\) realizations in \(\mathcal{G}\) were obtained, we can write:
\begin{align*}
    \expect{\cost(\pi,\alpha)} &= \expect{\cost(\pi, \alpha) \innermid \mathcal{K}}\Pr[\mathcal{K}] + \expect{\cost(\pi, \alpha) \innermid \neg\mathcal{K}}\left(1 - \Pr[\mathcal{K}]\right)\\
    &\leq \expect{\cost(\pi, \alpha) \innermid \mathcal{K}}\Pr[\mathcal{K}] + n\cdot\frac{1}{n}\\
    &\leq \expect{\cost(\pi, \alpha) \innermid \mathcal{K}} + 1
\end{align*}
and we incur at most \(1\) extra roll in expectation due to the possibility of rejecting too many samples. In the other case, if \(\Pr[\alpha\in\mathcal{G}] < \frac{1}{n}\), getting a good approximation is trivial. Let \(\pi^*\) be an optimal permutation. We can write
\begin{align*}
    \expect{
        \cost(\pi^*, \alpha)
    } &= \expect{
        \cost(\pi^*, \alpha)
        \innermid
        \alpha\in\mathcal{G}
    }\Pr[
        \alpha\in\mathcal{G}
    ] + \expect{
        \cost(\pi^*, \alpha)
        \innermid
        \alpha\notin\mathcal{G}
    }
    \Pr[\alpha\notin\mathcal{G}]\\
    &\geq 0 + n\left(1 - \frac{1}{n}\right)\hspace{10pt}\text{Eliminating the first term and applying the assumption}\\
    &= n - 1\\
    &\geq \expect{\cost(\sigma, \alpha)} - 1
\end{align*}
where \(\sigma\) is any permutation.\par

In what follows, we use the marginal value notation
\(
    f_S(j)\eqdef f(S\cup\{j\})-f(S)
\).
Define
\[
    g_S(j) \eqdef \sum_{\alpha\in\mathcal{G}:f^{\alpha}(S)<1}\frac{f_S^\alpha(j)}{1-f^\alpha(S)}\Pr[\alpha\mid\alpha\in\mathcal{G}]
\]
That is, with already selected dice \(S\), \(g_S(j)\) is the greedy score of \cite{azar2011} for the instance of Submodular Ranking in our reduction (note that this is distinct from the algorithm's constructed instance of Submodular Ranking).
In what follows, we assume that we have at least \(K = 32n^4\ln(2n^3)\) samples after rejection and that \(\alpha\in\mathcal{G}\), but we drop the conditioning notation for clarity.
\begin{lemma}\label{lem:pr_sum}
    At any step, with already selected dice \(S\), we have that \(\Pr[f^\alpha(S)< 1] \leq \sum_{j\in[n]\backslash S}g_S(j)\).
\end{lemma}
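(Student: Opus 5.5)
We will prove the inequality realization by realization: for each $\alpha\in\mathcal{G}$ with $f^\alpha(S)<1$, we bound the indicator $\mathbf{1}[f^\alpha(S)<1]$ by $\sum_{j\in[n]\setminus S} f^\alpha_S(j)/(1-f^\alpha(S))$. We then take expectations with respect to $\Pr[\alpha\mid\alpha\in\mathcal{G}]$. After exchanging the two finite sums, the left side becomes $\Pr[f^\alpha(S)<1]$ and the right side becomes $\sum_{j\in[n]\setminus S} g_S(j)$.

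The pointwise bound uses only that $f^\alpha$ is monotone and submodular, together with $f^\alpha([n])=1$ for $\alpha\in\mathcal{G}$.

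1. Write $[n]\setminus S=\{j_1,\ldots,j_m\}$.
2. Telescope:
\[
1-f^\alpha(S)=f^\alpha([n])-f^\alpha(S)=\sum_{\ell=1}^{m}\Bigl(f^\alpha(S\cup\{j_1,\ldots,j_\ell\})-f^\alpha(S\cup\{j_1,\ldots,j_{\ell-1}\})\Bigr).
\]
3. By submodularity, each increment is at most $f^\alpha_S(j_\ell)$. Hence $1-f^\alpha(S)\le\sum_{j\notin S}f^\alpha_S(j)$.
4. When $f^\alpha(S)<1$, the denominator $1-f^\alpha(S)$ is positive, so dividing gives $1\le\sum_{j\notin S}f^\alpha_S(j)/(1-f^\alpha(S))$.
5. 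When $f^\alpha(S)=1$, the realization is excluded from $g_S$ and contributes $0$ to both sides.

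Multiplying by $\Pr[\alpha\mid\alpha\in\mathcal{G}]$ and summing over $\alpha\in\mathcal{G}$ then gives the lemma.

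There is no real obstacle here. The main point of care is the normalization $f^\alpha([n])=1$. It holds because we restrict to $\alpha\in\mathcal{G}$, which is the conditioning we are implicitly working under. This is why the probability in the statement is the conditional one, matching the weights used in $g_S$.
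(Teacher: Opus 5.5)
Your proof is correct and follows essentially the same route as the paper: both use submodularity to lower-bound $\sum_{j\notin S} f^\alpha_S(j)$ by $f^\alpha([n])-f^\alpha(S)=1-f^\alpha(S)$ for each $\alpha\in\mathcal{G}$ with $f^\alpha(S)<1$, and then average with the conditional weights $\Pr[\alpha\mid\alpha\in\mathcal{G}]$. Your telescoping argument just spells out the submodularity step that the paper states in one line (and, as you could note, monotonicity is not actually needed for it).
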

\begin{proof}
\begin{align*}
    \sum_{j\in[n]\backslash S}g_S(j) &= \sum_{j\in[n]\backslash S}\expect*{\frac{f^\alpha_S(j)}{1-f^\alpha(S)}\innermid f^\alpha(S) < 1}\Pr[f^\alpha(S)<1]\\
    &\geq \expect*{\frac{f^\alpha(S\cup ([n]\backslash S)) - f^\alpha(S)}{1-f^\alpha(S)}\innermid f^\alpha(S) < 1}\Pr[f^\alpha(S)<1]\\
    &\hspace{15pt}\text{by the definition of \(f^\alpha_S(j)\) and submodularity}\\
    &=\expect*{\frac{f^\alpha([n]) - f^\alpha(S)}{1-f^\alpha(S)}\innermid f^\alpha(S) < 1}\Pr[f^\alpha(S) < 1]\\
    &=\expect{1\innermid f^\alpha(S) < 1}\Pr[f^\alpha(S)<1]\\
    &\hspace{15pt}\text{Since \(f^\alpha([n]) = 1\) for any \(\alpha\)}\\
    &=\Pr[f^\alpha(S)<1]
\end{align*}
\end{proof}

\noindent
Consider now dividing the ordering \(\pi\) into \(L'\) and \(L''\), where \(L'\) is the maximal prefix such that at any step, there exists some \(j\in[n]\backslash S\) such that \(g_S(j) \geq \frac{1}{n^2}\), and \(L''\) is the remaining suffix. Let \(C'\) and \(C''\) denote the cost incurred during \(L'\) and \(L''\), respectively. Let \(I'\) denote the set of indices in \(L'\), and let \(I''\) denote the set of indices in \(L''\). Note that \(\expect{\cost(\pi,\alpha)} = \expect{C'} + \expect{C''}\). We can bound \(\expect{C''}\) as follows:
\begin{align*}
    \expect{C''} &\leq n\Pr[\text{\(\pi\) has not terminated by the end of \(L'\)}]\\
    &= n\Pr[f^\alpha(I') < 1]\\
    &\leq n\sum_{j\in I''}g_S(j)\hspace{15pt}\text{by Lemma \ref{lem:pr_sum}}\\
    &\leq n\left(n\cdot\frac{1}{n^2}\right)\hspace{16pt}\text{by definition of \(L''\)}\\
    &= 1
\end{align*}
We will now consider \(\expect{C'}\). We use the following lemma, the proof of which is postponed.
\begin{lemma}\label{lem:prf}
    Let \(\mathcal{F}\) denote the event that for at least one step \(i\) in \(L'\), with already chosen dice \(S\), the chosen die \(j'\) has score \(g_S(j') < \frac{1}{4}\max_{j\in[n]\backslash S}g_S(j)\). We have that \(\Pr[\mathcal{F}]\leq \frac{1}{n}\).
\end{lemma}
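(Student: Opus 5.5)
Throughout, fix a single greedy step in \(L'\) with already-selected set \(S\). For each remaining die \(j\), the algorithm estimates \(g_S(j)\) by an empirical average
\[
  \hat g_S(j) \eqdef \frac{1}{K}\sum_{m=1}^{K} Y_m(j), \qquad
  Y_m(j) \eqdef \mathbf{1}\{f^{\alpha^{(m)}}(S)<1\}\,\frac{f^{\alpha^{(m)}}_S(j)}{1-f^{\alpha^{(m)}}(S)},
\]
where \(\alpha^{(1)},\ldots,\alpha^{(K)}\) are the accepted realizations (all in \(\mathcal{G}\)). Since we condition on \(\alpha\in\mathcal{G}\), these are i.i.d.\ draws from \(\Pr[\cdot\mid\alpha\in\mathcal{G}]\). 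Hence \(\expect{Y_m(j)}=g_S(j)\).

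Each \(Y_m(j)\) lies in \([0,1]\). This holds because \(f^\alpha\) is monotone with \(f^\alpha([n])=1\), so
\[
  f^\alpha_S(j)\le f^\alpha([n])-f^\alpha(S)=1-f^\alpha(S).
\]
I will apply Hoeffding's inequality (\Cref{lem: hoeffding}) with additive deviation \(\delta\eqdef\frac{1}{4n^2}\). This gives
\[
  \Pr\bigl[|\hat g_S(j)-g_S(j)|\ge \delta\bigr]\le 2\exp(-2K\delta^2)=2\exp\Bigl(-\frac{K}{8n^4}\Bigr).
\]
With \(K=32n^4\ln(2n^3)\), the exponent is \(4\ln(2n^3)\). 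So the failure probability is at most \(2(2n^3)^{-4}\le \frac{1}{n^3}\), with room to spare.

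Next I take a union bound over at most \(n\) dice \(j\) per step and at most \(n\) steps. With probability at least \(1-\frac{1}{n}\), every estimate used in \(L'\) is within \(\delta\) of its true value.

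It remains to show that this accuracy forces the chosen die \(j'=\argmax_j \hat g_S(j)\) to satisfy \(g_S(j')\ge\frac14 g^*\), where \(g^*\eqdef\max_j g_S(j)\). We are in \(L'\), so \(g^*\ge\frac{1}{n^2}=4\delta\). Let \(j^*\) attain \(g^*\). Then
\[
  g_S(j')\ge \hat g_S(j')-\delta\ge \hat g_S(j^*)-\delta\ge g^*-2\delta\ge g^*-\tfrac12 g^* = \tfrac12 g^*\ge \tfrac14 g^*.
\]
Thus \(\mathcal{F}\) can occur only if some estimate fails, and therefore \(\Pr[\mathcal{F}]\le\frac1n\).

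The main subtlety is that \(L'\) is defined by the random run itself. I handle this by noting the union bound can be taken over every step the algorithm performs, at most \(n\) of them. At each step the samples are fresh and independent of \(S\), so the per-step Hoeffding bound holds conditionally on the history. The bound on \(\mathcal{F}\) then only requires the estimates to be accurate at those steps that happen to lie in \(L'\).
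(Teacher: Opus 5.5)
Your proof is correct and takes essentially the same route as the paper. Both use per-die Hoeffding concentration with \(K=32n^4\ln(2n^3)\) fresh samples per step, a union bound over at most \(n\) dice and \(n\) steps, and the \(L'\) threshold \(\max_j g_S(j)\ge \frac{1}{n^2}\) to turn estimation error into a constant-factor guarantee. The only difference is that you use a single two-sided additive deviation \(\frac{1}{4n^2}\) for every die, whereas the paper splits the dice into \(J^+\) (multiplicative bound with \(\delta=\frac12\)) and \(J^-\) (one-sided additive bound); your version is slightly cleaner and even yields the factor \(\frac12\) rather than \(\frac14\).
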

\noindent
We can write
\begin{align*}
    \expect{C'} &= \expect{C'\innermid \neg\mathcal{F}}(1 - \Pr[\mathcal{F}]) + \expect{C'\innermid \mathcal{F}}\Pr[\mathcal{F}]\\
    &\leq \expect{C'\innermid \neg\mathcal{F}}(1 - \Pr[\mathcal{F}]) + n\cdot\frac{1}{n}\hspace{10pt}\text{by Lemma \ref{lem:prf}}\\
    &\leq \expect{C'\innermid \neg\mathcal{F}} + 1
\end{align*}
Conditioned on \(\mathcal{F}\) not occurring, each choice in \(L'\) has at least \(\frac{1}{4}\) the score of the ``correct'' greedy choice. We note that the analysis in \cite{azar2011} is resilient to this, as it only requires that, in their histogram proof, one shrinks the height of the greedy solution's histogram by a further factor of \(4\). So, we have that
\[
    \expect{\cost(\pi,\alpha)} = \expect{C'} + \expect{C''} \leq \expect{C'} + 1 \leq \expect{C'\innermid \neg \mathcal{F}} + 2 \leq \mathcal{O}(\log d)\expect{\cost(\pi^*,\alpha)}
\]
where the final inequality absorbs an extra factor of \(4\) from sampling, and follows from the approximation guarantee of \cite{azar2011} and the construction of our instance. Recall that the above was conditioned on the event \(\mathcal{K}\) that we retain at least \(K\) realizations after rejection, which means that the analysis above shows \(\expect{\cost(\pi,\alpha)\innermid\mathcal{K}} \leq \mathcal{O}(\log d)\expect{\cost(\pi^*,\alpha)}\). Since we already showed
\(\expect{\cost(\pi,\alpha)}\leq  \expect{\cost(\pi,\alpha)\innermid \mathcal{K}} + 1\), this suffices to prove the final approximation bound.\par

Before proving Lemma \ref{lem:prf}, we state a form of Hoeffding's Inequality, proven in Theorem 1 of \cite{hoeffding1963}, and a two-sided multiplicative bound that follows easily from the additive form of Hoeffding's Inequality.
\begin{lemma}[Hoeffding's Inequality]\label{lem: hoeffding}
    Let \(X_1,\dots,X_N\) be independent random variables over \([0,1]\) and define \(\bar{X} = \frac{1}{N}\sum_{i=1}^NX_i\). Let \(\mu = \expect*{\bar{X}}\). We have that, for any \(t > 0\),
    \[
        \Pr\left[\bar{X} - \mu \geq t\right] \leq \exp\left(-2Nt^2\right)
    \]
\end{lemma}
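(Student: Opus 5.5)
This is the one-sided Hoeffding inequality. The plan is to prove it by the standard exponential-moment (Chernoff) method combined with Hoeffding's lemma for bounded variables.

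\textbf{Step 1 (Chernoff bound).} Fix $\lambda>0$. By Markov's inequality applied to $e^{\lambda N(\bar X-\mu)}$, and then independence,
\[
\Pr[\bar X-\mu\ge t]\le e^{-\lambda N t}\,\expect*{e^{\lambda\sum_{i=1}^N (X_i-\expect{X_i})}}=e^{-\lambda Nt}\prod_{i=1}^N\expect*{e^{\lambda(X_i-\expect{X_i})}}.
\]

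\textbf{Step 2 (Hoeffding's lemma).} The key step is to show that a random variable $Y\in[a,b]$ with mean zero satisfies $\expect{e^{\lambda Y}}\le e^{\lambda^2(b-a)^2/8}$.
\begin{enumerate}
\item By convexity of $y\mapsto e^{\lambda y}$ on $[a,b]$, we have $e^{\lambda y}\le \frac{b-y}{b-a}e^{\lambda a}+\frac{y-a}{b-a}e^{\lambda b}$.
\item Taking expectations and using $\expect{Y}=0$ gives $\expect{e^{\lambda Y}}\le e^{\phi(u)}$, where $u=\lambda(b-a)$, $\theta=-a/(b-a)$, and $\phi(u)=-\theta u+\ln(1-\theta+\theta e^{u})$.
\item One checks directly that $\phi(0)=0$ and $\phi'(0)=0$.
\item One also checks $\phi''(u)=q(1-q)\le 1/4$, where $q=\theta e^u/(1-\theta+\theta e^u)\in[0,1]$.
\item Taylor's theorem then gives $\phi(u)\le u^2/8$.
\end{enumerate}
We apply this with $Y=X_i-\expect{X_i}$. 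Here $b-a=1$, since $X_i\in[0,1]$. This gives $\expect{e^{\lambda(X_i-\expect{X_i})}}\le e^{\lambda^2/8}$.

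\textbf{Step 3 (optimize).} Substituting into Step 1 gives
\[
\Pr[\bar X-\mu\ge t]\le \exp\left(-\lambda N t+N\lambda^2/8\right).
\]
The exponent is minimized at $\lambda=4t>0$. This choice gives the bound $\exp(-2Nt^2)$, as claimed.

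The main obstacle is Hoeffding's lemma in Step 2, specifically bounding the second derivative $\phi''$ by $1/4$. Steps 1 and 3 are routine. I would handle the bound by rewriting $\phi''$ as the variance of a Bernoulli random variable with parameter $q$, which is at most $1/4$. Alternatively, since the paper cites \cite{hoeffding1963} for this result, it can simply be invoked as Theorem~1 there.
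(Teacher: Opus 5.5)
Your proof is correct. The Chernoff step, the reduction to $\phi(u)$ (where $\theta=-a/(b-a)\in[0,1]$ because $\mathbb{E}[Y]=0$ forces $a\le 0\le b$), the identity $\phi''=q(1-q)\le 1/4$, and the choice $\lambda=4t$ all check out. The paper gives no proof of its own and simply cites Theorem~1 of \cite{hoeffding1963}, which rests on this same exponential-moment argument, so your write-up is a faithful self-contained version of the result the paper invokes.
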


\begin{lemma}[Two-sided multiplicative form of Hoeffding's Inequality]\label{cor:sample_bound}
Let \(X_1,\dots,X_N\) be independent random variables over \([0,1]\) and define \(\bar{X} = \frac{1}{N}\sum_{i=1}^NX_i\). Let \(\mu = \expect*{\bar{X}}\). We have that, for any \(\delta > 0\),
\[
    \Pr[(1 -\delta)\mu \leq \bar{X} \leq (1 + \delta)\mu] \geq 1 - 2\exp\left(-2N\delta^2\mu^2\right)
\]
\end{lemma}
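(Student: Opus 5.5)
The plan is to derive the two-sided multiplicative bound from two applications of the additive Hoeffding inequality (\Cref{lem: hoeffding}), one for each tail, followed by a union bound. The key substitution is $t = \delta\mu$, which turns a multiplicative deviation of $\delta\mu$ into an additive deviation of the same size.

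\textbf{Upper tail.} Applying \Cref{lem: hoeffding} directly to $X_1,\dots,X_N$ with $t=\delta\mu$ gives
\[
\Pr[\bar{X} \geq (1+\delta)\mu] = \Pr[\bar{X}-\mu \geq \delta\mu] \leq \exp(-2N\delta^2\mu^2).
\]
This requires $\delta\mu>0$. If $\mu=0$, then every $X_i=0$ almost surely (each $X_i\in[0,1]$), so $\bar X=0=\mu$ and the event $(1-\delta)\mu\le\bar X\le(1+\delta)\mu$ holds with probability $1$; the claim is then trivial.

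\textbf{Lower tail.} Apply \Cref{lem: hoeffding} to the variables $Y_i \eqdef 1-X_i$. These are independent and lie in $[0,1]$, with mean $\bar Y = 1-\bar X$ and $\expect{\bar{Y}} = 1-\mu$. Then
\[
\Pr[\bar{X} \leq (1-\delta)\mu] = \Pr[\bar{Y}-(1-\mu) \geq \delta\mu] \leq \exp(-2N\delta^2\mu^2).
\]

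\textbf{Combining.} A union bound over the two tails shows the complement event has probability at most $2\exp(-2N\delta^2\mu^2)$. Since both tail events are non-strict inequalities, this bounds the probability that $\bar X$ falls outside the closed interval $[(1-\delta)\mu,(1+\delta)\mu]$ from above, which gives the stated lower bound on the probability of the event itself.

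There is no substantive obstacle. The only points needing care are the $\mu=0$ edge case and the use of the complemented variables $Y_i$, which is what makes the one-sided statement of \Cref{lem: hoeffding} sufficient for the lower tail.
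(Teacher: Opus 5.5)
Your proof is correct and follows the route the paper intends: the paper gives no written proof, saying only that the bound "follows easily from the additive form of Hoeffding's Inequality," and your argument supplies exactly that. You substitute $t=\delta\mu$ for the upper tail, apply the same bound to $1-X_i$ for the lower tail, and take a union bound; your handling of the $\mu=0$ case (where the bound is vacuous anyway) and your use of $1-X_i$ to stay within the $[0,1]$ hypothesis are both sound.
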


\noindent
It is worth noting that both of the above apply in our case, since the values we are estimating, \(g_S(j)\), are bounded between \(0\) and \(1\) and the realizations in our sample are independently generated. We now prove Lemma \ref{lem:prf}. Recall that at each greedy step we use \(K = 32n^4\ln(2n^3)\) realizations to estimate the scores of each remaining \(j\in[n]\backslash S\), and that we restrict our consideration to \(L'\). Fix some step in \(L'\) and partition the remaining dice into:
\begin{itemize}
    \item \(J^+ \eqdef \left\{ j^+\in[n]\backslash S : g_S(j^+) \geq \frac{1}{4n^2} \right\}\)
    \item \(J^- \eqdef \left\{ j^-\in[n]\backslash S : g_S(j^-) < \frac{1}{4n^2} \right\}\)
\end{itemize}
Consider an arbitrary \(j^+\in J^+\), and let \(\bar{g}_S(j^+)\) be its estimated score from \(K\) independent samples. Setting \(\delta = 1/2\) and applying \Cref{cor:sample_bound}, we have that
\begin{align*}
    1 - \Pr\left[\frac{1}{2}g_S(j^+) \leq \bar{g}_S(j^+) \leq 2g_S(j^+)\right] &\leq 2\exp\left(-2\cdot K\cdot \left(\frac{1}{2}\right)^2 \cdot g_S(j^+)^2\right)\\
    &\leq 2\exp\left(-2\cdot
    \left(\frac{1}{2}\right)^2\cdot
    32n^4\ln\left(2n^3\right)\cdot
    \left(\frac{1}{4n^2}\right)^2\right)\\
    &=2\exp\left(-\ln\left(2n^3\right)\right)\\
    &= \frac{1}{n^3}
\end{align*}
Now consider an arbitrary \(j^-\in J^-\). We can write
\begin{align*}
    \Pr\left[\bar{g}_S(j^-) \geq \frac{1}{2n^2}\right] &= \Pr\left[\bar{g}_S(j^-) - \frac{1}{4n^2} \geq \frac{1}{4n^2}\right]\\
    &\leq \Pr\left[\bar{g}_S(j^-) - g_S(j^-) \geq \frac{1}{4n^2}\right]\\
    &\leq \exp\left(-2 \cdot K\cdot \left(\frac{1}{4n^2}\right)^2\right)\\
    &= \exp\left(-2 \cdot 32n^4\ln\left(2n^3\right)\cdot \frac{1}{16n^4}\right)\\
    &= \exp\left(-4\ln\left(2n^3\right)\right)\\
    & \leq \frac{1}{n^3}
\end{align*}
where the first inequality is because \(g_S(j^-) < \frac{1}{4n^2}\) and the second inequality follows from Hoeffding's Inequality. From the above we can conclude that on some fixed step \(i\) and with probability at least \(1 - \frac{1}{n^3}\), a given die \(j\) has the following property:
\begin{itemize}
    \item \text{if \(j\in J^+\), \(\frac{1}{2}g_S(j)\leq \bar{g}_S(j)\leq 2g_S(j)\)}
    \item \text{if \(j\in J^-\), \(\bar{g}_S(j) < \frac{1}{2n^2}\)}
\end{itemize}
Therefore, the probability that the above holds for all dice \(j\) considered on some fixed step \(i\) is at least \(1 - \frac{1}{n^2}\). Assuming this is the case on step \(i\), consider the exact greedy choice, call it \(j^*\), and the greedy choice chosen by sampling, call it \(j'\).
\[
    g_S(j') \geq \frac{1}{2}\bar{g}_S(j') = \frac{1}{2}\max_{j\in[n]\backslash S}\bar{g}_S(j) = \frac{1}{2}\max_{j\in J^+}\bar{g}_S(j) \geq \frac{1}{4}\max_{j\in J^+}g_S(j) = \frac{1}{4}g_S(j^*)
\]
Since the above holds on a particular step \(i\) with probability at least \(1 - \frac{1}{n^2}\), it holds for all steps in \(L'\) with probability at least \(1 - \frac{1}{n}\). This suffices to prove Lemma \ref{lem:prf}.

\end{document}